\documentclass[lettersize,journal]{IEEEtran}

\hyphenation{op-tical net-works semi-conduc-tor IEEE-Xplore}

\usepackage[T1]{fontenc}
\usepackage[utf8]{inputenc}
\usepackage[ngerman,english]{babel}
\usepackage{amsmath, amssymb, ﻿amsfonts}   
\usepackage{cuted} 
\usepackage{graphicx}         
\usepackage{algorithmic}
\usepackage{algorithm}
\usepackage{stfloats}
\usepackage{url}
\usepackage{verbatim}
\usepackage{graphicx}
\usepackage{cite}

\usepackage{upgreek}			  
\usepackage{mathtools}		  
\usepackage{url}					
\usepackage{enumitem}		
\usepackage{accents}			
\usepackage{bbm}
\usepackage{xcolor}
\usepackage{lipsum}
\usepackage{dirtytalk}
\usepackage{xcolor}
\usepackage{float}
\usepackage{accents}
\usepackage[export]{adjustbox}
\usepackage{setspace}
\usepackage{mathdots} 
\usepackage{empheq} 

	\usepackage[caption=false,font=normalsize,labelfont=sf,textfont=sf]{subfig}

\usepackage{pifont} 
\usepackage{caption}

\usepackage{booktabs, tabularx}
\usepackage{array}

\usepackage{amsthm}
\usepackage{enumitem}

\usepackage{tikz}
\usetikzlibrary{decorations.markings,chains,matrix}
\usetikzlibrary{intersections,calc}
\usepackage{tikz-3dplot} 
\usepackage{pgfplots}
\usepackage{pgfplotstable}
\pgfplotsset{compat = newest}
\usepackage{datatool}
\usepackage{textcomp}
\usetikzlibrary{shapes,arrows}
\usetikzlibrary{decorations.pathreplacing,angles,quotes}
\newtheorem{thm}{Theorem}
 \newtheorem{lemma}{Lemma}

\newtheorem{cor}{Corollary}
\newtheorem{defn}{Definition}

\newtheorem{rmk}{Remark}


\DeclareMathOperator{\FDP}{FDP}
\DeclareMathOperator{\FDR}{FDR}
\DeclareMathOperator{\TPP}{TPP}
\DeclareMathOperator{\TPR}{TPR}

\DeclareMathOperator{\Var}{Var}
\DeclareMathOperator{\fin}{fin}

\DeclareMathOperator{\sign}{sign}
\DeclareMathOperator{\thr}{thr}

\DeclareMathOperator{\DA}{DA}
\DeclareMathOperator{\SNR}{SNR}

\DeclareMathOperator{\Gr}{Gr}
\DeclareMathOperator{\cut}{c}
\DeclareMathOperator{\dist}{dist}


\setcounter{topnumber}{3}
\setcounter{bottomnumber}{3}
\setcounter{totalnumber}{6}
\setcounter{dbltopnumber}{4}


\providecommand{\customgenericname}{}
\newcommand{\newcustomtheorem}[2]{%
  \newenvironment{#1}[1]
  {%
   \renewcommand\customgenericname{#2}%
   \renewcommand\theinnercustomgeneric{##1}%
   \innercustomgeneric
  }
  {\endinnercustomgeneric}
}

\newcustomtheorem{customthm}{Theorem}
\newcustomtheorem{customlemma}{Lemma}
\newcustomtheorem{customcor}{Corollary}
\newcustomtheorem{customassumption}{Assumption}


\newcommand{\y}{\boldsymbol{y}}
\newcommand{\x}{\boldsymbol{x}}

\newcommand{\X}{\boldsymbol{X}}

\newcommand{\bSigma}{\boldsymbol{\Sigma}}

\newcommand{\bbeta}{\boldsymbol{\beta}}
\newcommand{\bepsilon}{\boldsymbol{\epsilon}}
\newcommand{\hatbbeta}{\boldsymbol{\hat{\beta}}}

\newcommand{\A}{\mathcal{A}}

\newcommand{\XK}{\boldsymbol{\protect\accentset{\circ}{X}}}
\newcommand{\xK}{\boldsymbol{\protect\accentset{\circ}{x}}}
\newcommand{\XWK}{\boldsymbol{\widetilde{X}}}

\newcommand{\C}{\mathcal{C}}

\newcommand{\xKvar}{\protect\accentset{\circ}{x}}


\makeatletter
\let\save@mathaccent\mathaccent
\newcommand*\if@single[3]{%
  \setbox0\hbox{${\mathaccent"0362{#1}}^H$}%
  \setbox2\hbox{${\mathaccent"0362{\kern0pt#1}}^H$}%
  \ifdim\ht0=\ht2 #3\else #2\fi
  }
\newcommand*\rel@kern[1]{\kern#1\dimexpr\macc@kerna}
\newcommand*\widebar[1]{\@ifnextchar^{{\wide@bar{#1}{0}}}{\wide@bar{#1}{1}}}
\newcommand*\wide@bar[2]{\if@single{#1}{\wide@bar@{#1}{#2}{1}}{\wide@bar@{#1}{#2}{2}}}
\newcommand*\wide@bar@[3]{%
  \begingroup
  \def\mathaccent##1##2{%
    \let\mathaccent\save@mathaccent
    \if#32 \let\macc@nucleus\first@char \fi
    \setbox\z@\hbox{$\macc@style{\macc@nucleus}_{}$}%
    \setbox\tw@\hbox{$\macc@style{\macc@nucleus}{}_{}$}%
    \dimen@\wd\tw@
    \advance\dimen@-\wd\z@
    \divide\dimen@ 3
    \@tempdima\wd\tw@
    \advance\@tempdima-\scriptspace
    \divide\@tempdima 10
    \advance\dimen@-\@tempdima
    \ifdim\dimen@>\z@ \dimen@0pt\fi
    \rel@kern{0.6}\kern-\dimen@
    \if#31
      \overline{\rel@kern{-0.6}\kern\dimen@\macc@nucleus\rel@kern{0.4}\kern\dimen@}%
      \advance\dimen@0.4\dimexpr\macc@kerna
      \let\final@kern#2%
      \ifdim\dimen@<\z@ \let\final@kern1\fi
      \if\final@kern1 \kern-\dimen@\fi
    \else
      \overline{\rel@kern{-0.6}\kern\dimen@#1}%
    \fi
  }%
  \macc@depth\@ne
  \let\math@bgroup\@empty \let\math@egroup\macc@set@skewchar
  \mathsurround\z@ \frozen@everymath{\mathgroup\macc@group\relax}%
  \macc@set@skewchar\relax
  \let\mathaccentV\macc@nested@a
  \if#31
    \macc@nested@a\relax111{#1}%
  \else
    \def\gobble@till@marker##1\endmarker{}%
    \futurelet\first@char\gobble@till@marker#1\endmarker
    \ifcat\noexpand\first@char A\else
      \def\first@char{}%
    \fi
    \macc@nested@a\relax111{\first@char}%
  \fi
  \endgroup
}
\makeatother

\DeclareFontFamily{U}{dutchcal}{\skewchar\font=45}
\DeclareFontShape{U}{dutchcal}{m}{n}{<-> s*[1.2] dutchcal-r}{}
\DeclareFontShape{U}{dutchcal}{b}{n}{<-> s*[1.2] dutchcal-b}{}
\DeclareMathAlphabet{\mathdutchcal}{U}{dutchcal}{m}{n}
\SetMathAlphabet{\mathdutchcal}{bold}{U}{dutchcal}{b}{n}
\DeclareMathAlphabet{\mathdutchbcal}{U}{dutchcal}{b}{n}


\newlist{steps}{enumerate}{1}
\setlist[steps, 1]{label = {Step \arabic*:}, ref = {Step \arabic*}}

\newlist{alglist}{enumerate}{1}
\setlist[alglist, 1]{label = {\arabic*.}, ref = {\arabic*}}


\newcolumntype{L}[1]{>{\raggedright\arraybackslash}p{#1}}
\newcolumntype{C}[1]{>{\centering\arraybackslash}p{#1}}
\newcolumntype{R}[1]{>{\raggedleft\arraybackslash}p{#1}}

\definecolor{dark_green}{RGB}{102,166,30}
\definecolor{applegreen}{rgb}{0.55, 0.71, 0.0}

\definecolor{dark_red}{RGB}{217,95,2}
\definecolor{bittersweet}{rgb}{1.0, 0.44, 0.37}

\definecolor{dark_yellow}{RGB}{230,171,2}
\definecolor{bananayellow}{rgb}{1.0, 0.88, 0.21}


\mathtoolsset{showonlyrefs=true} 
\restylefloat{table}

\usepackage{xr}
\externaldocument{Dependency_aware_FDR_control_SUPPLEMENT}

\begin{document}

\title{High-Dimensional False Discovery Rate Control\\ for Dependent Variables}

\author{
Jasin Machkour, 
Michael Muma, 
and
Daniel P. Palomar
\thanks{J. Machkour,~Student Member,~IEEE, and M. Muma, Senior Member,~IEEE, are with the Robust Data Science Group at Technische Universit\"at Darmstadt, Germany (e-mail: jasin.machkour@tu-darmstadt.de; michael.muma@tu-darmstadt.de). D. P. Palomar,~Fellow,~IEEE, is with the Convex Optimization Group, The Hong Kong University of Science and Technology, Hong Kong SAR, China (e-mail: palomar@ust.hk).}
\thanks{J. Machkour is supported by the LOEWE initiative (Hesse, Germany) within the emergenCITY center. M. Muma is supported by the ERC Starting Grant ScReeningData (Project Number: 101042407). D.P. Palomar is supported by the Hong Kong GRF 16206123 research grant.}
\thanks{Extensive computations on the Lichtenberg High-Performance Computer of the Technische Universität Darmstadt were conducted for this research.}
}

\maketitle

\begin{abstract}
Algorithms that ensure reproducible findings from large-scale, high-dimensional data are pivotal in numerous signal processing applications. In recent years, multivariate false discovery rate (FDR) controlling methods have emerged, providing guarantees even in high-dimensional settings where the number of variables surpasses the number of samples. However, these methods often fail to reliably control the FDR in the presence of highly dependent variable groups, a common characteristic in fields such as genomics and finance. To tackle this critical issue, we introduce a novel framework that accounts for general dependency structures. Our proposed dependency-aware T-Rex selector integrates hierarchical graphical models within the T-Rex framework to effectively harness the dependency structure among variables. Leveraging martingale theory, we prove that our variable penalization mechanism ensures FDR control. We further generalize the FDR-controlling framework by stating and proving a clear condition necessary for designing both graphical and non-graphical models that capture dependencies. Additionally, we formulate a fully integrated optimal calibration algorithm that concurrently determines the parameters of the graphical model and the T-Rex framework, such that the FDR is controlled while maximizing the number of selected variables. Numerical experiments and a breast cancer survival analysis use-case demonstrate that the proposed method is the only one among the state-of-the-art benchmark methods that controls the FDR and reliably detects genes that have been previously identified to be related to breast cancer. An open-source implementation is available within the R package TRexSelector on CRAN.
\end{abstract}

\begin{IEEEkeywords}
T-Rex selector, false discovery rate (FDR) control, high-dimensional, martingale theory, survival analysis.
\end{IEEEkeywords}

\section{Introduction}
\label{sec: Introduction}
Reliably detecting the few relevant variables (i.e., features, biomarkers, or signals) in a large set of candidate variables given high-dimensional and noisy data is required in many of today's signal processing applications, e.g.~\cite{chung2007detection,chen2020false,chen2018sparse,tan2014direction,benidis2017sparse,zoubir2012robust, zoubir2018robust, machkour2017outlier, machkour2020robust, yang2019weakly,shen2007feature}. An important use-case of this work consists in detecting the few genes that are truly associated with the survival time of patients diagnosed with a certain type of cancer~\cite{tomczak2015review,Hosz2021survival,aalen2008survival}. The expression levels of the detected genes are then classified into low- and high-expressing genes, which allows cancer researchers to make statements such as: "The median survival time of breast cancer patients with a high expression of gene A and a low expression of gene B is $10$ years higher than the survival time of patients with a low expression of gene A and a high expression of gene B." Such information is invaluable for the development of new therapies and personalized medicine~\cite{kamel2017exploitation}.

However, the development and clinical trial of new drugs is costly and resources are limited. Therefore, it is crucial to select as many as possible of the few reproducible genes that are truly associated with the survival time of cancer patients while keeping the number of false discoveries (i.e., irrelevant genes) low. This aim is in line with false discovery rate (FDR) controlling methods. Such methods guarantee that the expected fraction of false discoveries among all discoveries (i.e., FDR) does not exceed a user-specified target level (e.g., $5$, $10$, $20$\%) while maximizing the number of selected variables. Popular FDR-controlling methods for the low-dimensional setting, where the number of samples $n$ is equal to or larger than the number of candidate variables $p$, are the Benjamini-Hochberg (\textit{BH}) method~\cite{benjamini1995controlling}, Benjamini-Yekutielli (\textit{BY}) method~\cite{benjamini2001control}, \textit{fixed-X} knockoff method~\cite{barber2015controlling}, and related approaches (e.g.,~\cite{gavrilov2009adaptive,barber2019knockoff}).

For the considered high-dimensional setting, where $p > n$, there exist \textit{model-X} knockoff methods~\cite{candes2018panning,sesia2019gene,romano2019deep,ren2021derandomizing} and Terminating-Random Experiments (\textit{T-Rex}) methods~\cite{machkour2021terminating,machkour2022TRexGVS,machkour2023ScreenTRex,scheidt2023FDRControlLaptop,machkour2023InformedEN,koka2024false,machkour2024sparse}.
It has been shown that the computational complexity of the \textit{T-Rex} selector~\cite{machkour2021terminating} is linear in $p$, which makes it scalable to millions of variables in a reasonable computation time, while the \textit{model-X} knockoff method~\cite{candes2018panning} is practically infeasible in such large-scale settings (see Figure~1 in~\cite{machkour2021terminating}). Unfortunately, however, both the \textit{model-X} knockoff methods and the \textit{T-Rex} methods fail to control the FDR reliably in the presence of groups of highly dependent variables, which are characteristic for, e.g., gene expression~\cite{segal2003regression}, genomics~\cite{balding2006tutorial}, and stock returns data~\cite{machkour2024TRexIndexTracking}.

In order to reduce the dependencies among the candidate variables, pruning approaches have been used~\cite{candes2018panning,sesia2019gene,machkour2021terminating}. In general, pruning methods cluster highly dependent variables into groups, select a representative variable for each group, and run the FDR controlling method on the set of representatives. This approach is suitable for genome-wide association studies (GWAS) based on large-scale high-dimensional genomics data from large biobanks~\cite{gwasCatalog,sudlow2015uk}, where the goal is to detect the groups of highly correlated single nucleotide polymorphisms (SNPs) that are associated with a disease of interest and not the specific SNPs. However, pruning methods are not applicable in gene expression analysis and other applications where it is crucial to detect specific genes or other variables.

Therefore, we propose a new FDR-controlling and dependency-aware \textit{T-Rex} (\textit{T-Rex+DA}) framework that provably controls the FDR at the user-specified target level. This is achieved and verified through the following theoretical contributions, numerical validations, and real world experiments:
\begin{enumerate}
\item A hierarchical graphical model (i.e., binary tree) is incorporated  into the \textit{T-Rex} framework and  is used to capture and leverage the dependency structure among variables to develop a variable penalization mechanism that allows for provable FDR control.
\item Using martingale theory~\cite{williams1991probability}, we prove that the proposed approach controls the FDR (Theorem~\ref{Theorem: Dependency-aware FDR control}).
\item We extend the proposed framework by stating and proving a comprehensible condition that must be satisfied for the design of graphical and non-graphical dependency-capturing models to be eligible for being incorporated into the \textit{T-Rex} framework (Theorem~\ref{Theorem: Group design principle}).
\item We develop a fully integrated optimal calibration algorithm that simultaneously determines the parameters of the incorporated graphical model and of the \textit{T-Rex} framework, such that the FDR is controlled while maximizing the number of selected variables (Theorem~\ref{Theorem: Optimal Dependency-Aware Calibration Algorithm}).
\item Numerical experiments and a real-world breast cancer survival analysis verify the theoretical results and demonstrate the practical usefulness of the proposed framework.
\end{enumerate}

Organization: Section~\ref{sec: The T-Rex Framework} briefly revisits the \textit{T-Rex} selector and introduces the FDR control problem. Section~\ref{sec: Methodology and Main Theoretical Results} analyzes the relative occurrences of correlated variables for the original \textit{T-Rex} framework (Theorem~\ref{Theorem: absolute difference relative occurrences}), describes the proposed \textit{T-Rex+DA} methodology, and proves our main theoretical results (Theorems~\ref{Theorem: Dependency-aware FDR control} and~\ref{Theorem: Group design principle}). Section~\ref{sec: Optimal Dependency-Aware T-Rex Algorithm} describes the implementation details and theoretical properties (Theorem~\ref{Theorem: Optimal Dependency-Aware Calibration Algorithm}) of the proposed \textit{T-Rex+DA} calibration algorithm. Section~\ref{sec: Numerical Experiments} numerically verifies the theoretical FDR control results and compares the proposed approach against state-of-the-art methods. Section~\ref{sec: FDR-Controlled Survival Analysis} presents the results of an FDR-controlled breast cancer survival analysis. Section~\ref{sec: Conclusion} concludes the paper.

\section{The T-Rex Framework}
\label{sec: The T-Rex Framework}
In this section, the FDR and the true positive rate (TPR) are defined and the \textit{T-Rex} framework is briefly revisited.

\subsection{FDR and TPR}
\label{subsec: FDR and TPR}
A general setting for sparse variable selection consists of $p = p_{1} + p_{0}$ variables out of which $p_{1}$ variables are true active variables (i.e., variables associated with a response of interest $\y$) and $p_{0}$ variables are null (i.e., non-active) variables. The task of sparse variable selection is to determine an estimator $\widehat{\A}$ of the true active set $\A \subseteq \lbrace 1, \ldots, p \rbrace$ of cardinality $| \A | = p_{1}$. In this general setting, the false discovery rate (FDR) and the true positive rate (TPR) are defined by
\begin{equation}
\FDR \coloneqq \mathbb{E} \big[ \FDP \big] \coloneqq \mathbb{E} \bigg [ \dfrac{| \widehat{\mathcal{A}} \backslash \mathcal{A}|}{| \widehat{\mathcal{A}} | \lor 1} \bigg ]
\label{eq: FDR}
\end{equation}
and 
\begin{equation}
\TPR \coloneqq \mathbb{E} \big[ \TPP \big] \coloneqq \mathbb{E} \bigg [ \dfrac{| \mathcal{A}  \cap \widehat{\mathcal{A}} |}{| \mathcal{A} | \lor 1} \bigg ],
\label{eq: TPR}
\end{equation}
respectively, where $\lor$ is the maximum operator, i.e., $| \widehat{\A} | \lor 1 \coloneqq \max \lbrace | \widehat{\A} |, 1 \rbrace$. In words,
\begin{enumerate}
\item the FDR is the expectation of the false discovery proportion (FDP), i.e., the fraction of selected null variables among all selected variables, and
\item the TPR is the expectation of the true positive proportion (TPP), i.e., the fraction of selected true active variables among all true active variables.
\end{enumerate}
In practice, a tradeoff between the FDR and TPR must be managed. That is, the TPR is maximized while not exceeding a user-specified target FDR level $\alpha \in [0, 1]$, i.e., $\FDR \leq \alpha$.

\subsection{The T-Rex Selector}
\label{subsec: The T-Rex Selector}
\begin{figure}[t]
\begin{center}
\scalebox{0.7}{
\hspace{-4em}
\begin{tikzpicture}[>=stealth]

  \coordinate (orig)   at (0,0);
  \coordinate (sample)   at (4,0.5);
  \coordinate (merge)   at (7,0.5);
  \coordinate (varSelect)   at (10,0.5);
  \coordinate (tFDR)   at (15.5,-1.1);
  \coordinate (fuse)   at (13,0.5);
  \coordinate (decision)   at (15.5,0.5);
  \coordinate (increment)   at (15.5,2.9);
  \coordinate (init)   at (7,-1.95);
  \coordinate (output)   at (19,0.5);
  
  \coordinate (between_scale_rank)   at (0.5,0.31);
  \coordinate (X_prime_to_tFDR_point)   at (0.5,6);
  \coordinate (X_prime_to_tFDR_point_point)   at (9,6);
  \coordinate (X_prime_to_merge_point)   at (0.5,-2.5);
  \coordinate (center_to_tFDR_point)   at (9.8,3.7);
  \coordinate (tFDR_to_fuse_point)   at (14.00,5.2);
  \coordinate (tFDR_to_sample_point)   at (4,5);
  
  \coordinate (decision_to_selection_1)   at (15.5,4.2);
  \coordinate (decision_to_selection_2)   at (10.2,4.2);

  \coordinate (Arrow_N_GenDummy)   at (4,3.06);
  \coordinate (Arrow_X_indVar)   at (7,3.06);
  \coordinate (Arrow_y_center)   at (9.5,3.7);
  \coordinate (Arrow_targetFDR_tFDR)   at (10,5.9);
  
   \coordinate (inference_Arrow)   at (15,-1.06);
   \coordinate (fuse_Arrow)   at (16.1,2.06);
  
  \coordinate (vdots2)   at (8.85,-0.1);
  \coordinate (vdots3)   at (11.5,-0.1);
  
  \coordinate (fuse_node)   at (15.00,0.5);
  
  \tikzstyle{decision} = [diamond, draw, 
										minimum width=2cm, minimum height=0.5cm, node distance=3cm, inner sep=0pt]

  \node[anchor=center, align=center] (D) at (merge) {};   
  \node[draw, minimum width=.7cm, minimum height=5.5cm, anchor=center, align=center] (E) at (varSelect) {\rotatebox{90}{\large Forward Variable Selection}};
  \node[draw, minimum width=.7cm, minimum height=5.5cm, anchor=center, align=center] (H) at (fuse) {\rotatebox{90}{\large Calibrate \& Fuse}};
  \node[decision, minimum width=2.7cm, minimum height=0.4cm, anchor=center, align=center] (M) at (decision) {\large $\widehat{\FDP} > \alpha$?};
  \node[draw, minimum width=2.5cm, minimum height=.7cm, anchor=center, align=center] (N) at (output) {\large Output: \\[0.3em] \large $\widehat{\mathcal{A}}_{L}(v^{*}, T^{*})$};
  \node[draw, minimum width=1.5cm, minimum height=.7cm, anchor=center, align=center] (O) at (increment) {\large $T \leftarrow T + 1$};
  \node (K) at (vdots2) {\large $\vdots$};
  \node (L) at (vdots3) {\large $\vdots$};
  
  \draw[->] ($(D.0) + (1,2.25)$) -- node[above]{$\y$} ($(E.0) + (-0.7,2.25)$);
  
  \draw[->] ($(D.0) + (1,1.0)$) -- node[above]{$\XWK_{1}$} ($(E.0) + (-0.7,1.0)$);
  \draw[->] ($(D.0) + (1,0.0)$) -- node[above]{$\XWK_{2}$} ($(E.0) + (-0.7,0.0)$);
  \draw[->] ($(D.0) + (1,-2.0)$) -- node[above]{$\XWK_{K}$} ($(E.0) + (-0.7,-2.0)$);
     
  \draw[->] ($(E.0) + (0,1.0)$) -- node[above]{$\C_{1, L}(T)$} ($(H.0) + (-0.7,1.0)$);
  \draw[->] ($(E.0) + (0,0.0)$) -- node[above]{$\C_{2, L}(T)$} ($(H.0) + (-0.7,0.0)$);
  \draw[->] ($(E.0) + (0,-2.0)$) -- node[above]{$\C_{K, L}(T)$} ($(H.0) +  (-0.7,-2.0)$);
 
  \coordinate (between_varSelect_fuse1)   at ($(E.0) + (2.75,1.5)$);
  \coordinate (between_varSelect_fuse2)   at ($(E.0) + (2.75,0.75)$);
  \coordinate (between_varSelect_fuse3)   at ($(E.0) + (2.75,-1.5)$);
 
  \draw[->] (tFDR) -- node[below, pos=0.1]{\large $\alpha$}(M);
  \draw[->] (H) -- (M);
  \draw[->] ($(M.0)$) -- node[above, pos=0.3]{Yes}(N);
  \draw[->] ($(M.90)$) -- node[right, pos=0.3]{No}(O);
  \path[draw,->] ($(O.90)$) -- (decision_to_selection_1) -- (decision_to_selection_2) --  ($(E.90) + (0.2,0)$);
  
\end{tikzpicture}}
\end{center}
\caption{Schematic overview of the \textit{T-Rex} framework.
}
\label{fig: T-Rex selector}
\end{figure}
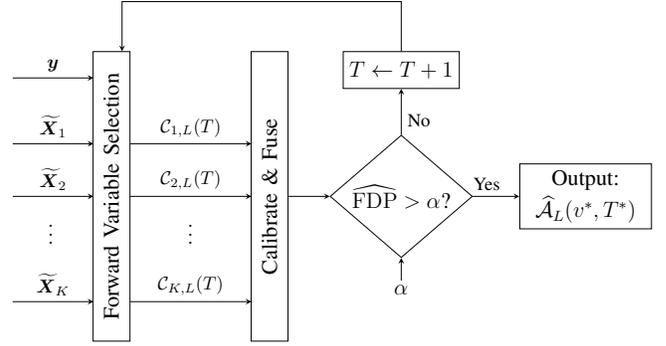

The \textit{T-Rex} selector is an FDR-controlling variable selection framework. A schematic overview of the generic framework is provided in Figure~\ref{fig: T-Rex selector}. A major characteristic of the \textit{T-Rex} framework is that it conducts $K$ supervised, independent, and early terminated random experiments based on a response vector $\y \in \mathbb{R}^{n}$ and the enlarged predictor matrices
\begin{equation}
\XWK_{k} = \big[ \X \,\, \XK_{k} \big] \in \mathbb{R}^{n \times (p + L)}, \, k = 1, \ldots, K,
\label{eq: enlarged predictor matrix}
\end{equation}
where $\X = [ \x_{1} \, \cdots \, \x_{p} ] \in \mathbb{R}^{n \times p}$ is the original predictor matrix containing the $p$ predictor variables and $\XK_{k} = [ \xK_{1} \, \cdots \, \xK_{L} ] \in\mathbb{R}^{n \times L}$ is a matrix containing $L$ dummy variables. The dummy vectors $\xK_{l} = [ \xKvar_{1, l} \, \cdots \, \xKvar_{n, l} ]^{\top}$, $l = 1, \ldots, L$, can be sampled independently from any univariate probability distribution with finite mean and variance (cf. Theorem~2 of~\cite{machkour2021terminating}). The $K$ early terminated random experiments are conducted with $\y$ and $\XWK_{k}$, $k = 1, \ldots, K$, as inputs to a forward variable selection method such as the \textit{LARS} algorithm~\cite{efron2004least}, \textit{Lasso}~\cite{tibshirani1996regression}, \textit{elastic net}~\cite{zou2005regularization}, or related methods (e.g.,~\cite{zou2006adaptive,machkour2020robust,machkour2017outlier}). These forward selection methods include no more than one variable in each iteration. All $K$ random experiments are terminated independently of each other after $T$ dummies have been included along their respective selection paths. This results in $K$ candidate sets $\C_{k, L}(T) \subseteq \lbrace 1, \ldots, p \rbrace$, $k = 1, \ldots, K$, which contain the indices of all original variables in $\X$ that have been selected before the random experiments have been terminated. Using the candidate sets, relative occurrences are computed for each variable, i.e.,
\begin{equation}
\Phi_{T,L}(j) \coloneqq 
\begin{cases}
\dfrac{1}{K} \sum\limits_{k = 1}^{K} \mathbbm{1}_{k}(j, T, L), & T \geq 1 \\
0, \quad & T = 0
\end{cases},
\label{eq: relative occurrences}
\end{equation}
where $\mathbbm{1}_{k}(j, T, L)$ is an indicator function that takes the value one if the $j$th variable is contained in the candidate set $\C_{k, L}(T)$ and zero otherwise. Based on these relative occurrences, the final set of selected variables is given by
\begin{equation}
\widehat{\mathcal{A}}_{L}(v, T) \coloneqq \lbrace j : \Phi_{T, L}(j) > v \rbrace,
\label{eq: selected active set}
\end{equation}
i.e., all variables whose relative occurrences exceed a voting threshold $v \in [0.5, 1)$ are selected. The extended calibration algorithm of the \textit{T-Rex} selector~\cite{machkour2021terminating} automatically determines the triple $(v^{*}, T^{*}, L) \in [0.5, 1) \times \lbrace 1, \ldots, L \rbrace \times \mathbb{N}_{+}$ such that the FDR is controlled at the user-defined target level $\alpha \in [0, 1]$, i.e.,
\begin{equation}
\FDR(v^{*}, T^{*}, L) \leq \alpha.
\label{eq: FDR controlled at target level}
\end{equation}
The calibrated \textit{T-Rex} parameters $v$ and $T$ are highlighted with a superscript ``$*$'' to emphasize that for any $L$, the parameters $v^{*}$ and $T^{*}$ are optimal in the sense that the FDR is controlled, while the number of selected variables is maximized (cf. Theorem~3 of~\cite{machkour2021terminating}). For the choice of the number of dummies $L$, the extended calibration algorithm considers a tradeoff between the computer memory consumption for storing large dummy matrices and maximizing the number of selected variables. However, the FDR is always controlled for any choice of $L$ (cf. Theorem~1 of~\cite{machkour2021terminating}). The extended calibration algorithm achieves the optimal solution $(v^{*}, T^{*})$ by
\begin{enumerate}
\item first terminating all random experiments after only one dummy has been included,
\item computing a conservative estimator $\widehat{\FDP}(v, T, L)$ that satisfies
\begin{equation}
\FDR(v, T, L) = \mathbb{E} \big[ \FDP(v, T, L) \big] \leq \mathbb{E} \big[ \widehat{\FDP}(v, T, L) \big],
\label{eq: conservative FDP estimator}
\end{equation}
\item iteratively increasing the number of included dummies $T$ until $\widehat{\FDP}(v = 1 - 1/K, T, L)$ (i.e., the FDP estimator at the effectively highest voting level $v = 1 - 1/K$) exceeds the target level $\alpha$ for the first time, and
\item returning to the solution $(v^{*}, T^{*})$ of the preceding iteration $T^{*}$ that satisfies the equation
\begin{equation}
v^{*} = \inf \big\lbrace \nu \in [0.5, 1) : \widehat{\FDP}(\nu, T^{*}, L) \leq \alpha \big\rbrace,
\label{eq: stopping time v for T-Rex}
\end{equation}
where $v^{*}$ is the lowest feasible voting level such that $\widehat{\FDP}$ does not exceed the target level $\alpha$.
\end{enumerate}
For details on the design and properties of the conservative FDP estimator $\widehat{\FDP}$, we refer the interested reader to~\cite{machkour2021terminating}.

\section{Methodology and Main Theoretical Results}
\label{sec: Methodology and Main Theoretical Results}
In this section, the proposed FDR-controlling \textit{T-Rex+DA} framework for general dependency structures is introduced. First, a dependency-capturing graph model is incorporated into the \textit{T-Rex+DA} framework. Second, we prove that the considered group design yields FDR control. Finally, we formulate a sufficient group design condition for graphical as well as non-graphical models that can be used as a guiding principle for other application-specific group designs.

\subsection{Preliminaries}
\label{subsec: Preliminaries}
Before the proposed \textit{T-Rex+DA} selector is presented and in order to understand why the ordinary \textit{T-Rex} selector might loose the FDR control property in the presence of highly correlated variables, we establish an interesting relationship between the pairwise relative occurrences of two candidate variables and the correlation coefficient between them. For example, let the \textit{Lasso} \cite{tibshirani1996regression} be used to perform the forward variable selection in each random experiment. Within the \textit{T-Rex} framework and for the $k$th random experiment, the \textit{Lasso} estimator is defined by
\begin{equation}
\hatbbeta_{k}(\lambda_{k}(T, L)) = \underset{\bbeta_{k}}{\arg\min} \, \dfrac{1}{2} \big\| \y - \XWK_{k}\bbeta_{k} \big\|_{2}^{2} + \lambda_{k}(T, L) \cdot \| \bbeta_{k} \|_{1},
\label{eq: Lasso for the kth random experiment}
\noeqref{eq: Lasso for the kth random experiment}
\end{equation}
where $\lambda_{k}(T, L) > 0$ is the sparsity parameter that corresponds to the change point in the $k$th random experiment after $T$ dummies have been included. With these definitions in place we can formulate the following theorem:

\begin{thm}[Absolute difference of relative occurrences]
Let $\rho_{j, j^{\prime}} \coloneqq \x_{j}^{\top} \x_{j^{\prime}}$, $j, j^{\prime} \in \lbrace 1, \ldots,p \rbrace$, be the sample correlation coefficient of the standardized variables $j$ and $j^{\prime}$. Suppose that $\hat{\beta}_{j, k}, \hat{\beta}_{j^{\prime}, k} \neq 0$. Then, for all tuples $(T, L) \in \lbrace 1, \ldots, L \rbrace \times \mathbb{N}_{+}$ it holds that
\begin{equation}
\big| \Phi_{T, L}(j) - \Phi_{T, L}(j^{\prime}) \big| \leq \widebar{\Lambda} \| \y \|_{2} \cdot \sqrt{2 (1 - \rho_{j, j^{\prime}})},
\label{eq: Theorem - absolute difference relative occurrences}
\end{equation}
where $\widebar{\Lambda} \coloneqq \frac{1}{K} \sum_{k = 1}^{K} \frac{1}{\lambda_{k}(T, L)}$.
\label{Theorem: absolute difference relative occurrences}
\end{thm}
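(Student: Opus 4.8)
The plan is to reduce the averaged left-hand side to a sum of per-experiment discrepancies and then to control each discrepancy through the stationarity (KKT) conditions of the Lasso. First I would invoke the definition of the relative occurrences in~\eqref{eq: relative occurrences} together with the triangle inequality to obtain
\begin{equation}
\big| \Phi_{T,L}(j) - \Phi_{T,L}(j') \big| \leq \frac{1}{K} \sum_{k=1}^{K} \big| \mathbbm{1}_{k}(j,T,L) - \mathbbm{1}_{k}(j',T,L) \big|,
\end{equation}
so that it suffices to bound each summand by $\| \y \|_{2}\sqrt{2(1-\rho_{j,j'})}\,/\,\lambda_{k}(T,L)$; averaging over $k$ then reproduces the factor $\widebar{\Lambda}$ and the claimed inequality.

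For a fixed experiment $k$, let $\br_{k} \coloneqq \y - \XWK_{k}\hatbbeta_{k}$ be the Lasso residual. Under the hypothesis $\hat{\beta}_{j,k},\hat{\beta}_{j',k}\neq 0$, both variables lie in the active set, so the subgradient inclusions characterising the Lasso minimizer collapse to the sharp equalities
\begin{equation}
\x_{j}^{\top}\br_{k} = \lambda_{k}(T,L)\,\sign(\hat{\beta}_{j,k}), \qquad \x_{j'}^{\top}\br_{k} = \lambda_{k}(T,L)\,\sign(\hat{\beta}_{j',k}).
\end{equation}
Subtracting these, taking absolute values, and applying the Cauchy--Schwarz inequality yields
\begin{equation}
\lambda_{k}(T,L)\,\big| \sign(\hat{\beta}_{j,k}) - \sign(\hat{\beta}_{j',k}) \big| = \big| (\x_{j}-\x_{j'})^{\top}\br_{k} \big| \leq \| \x_{j}-\x_{j'} \|_{2}\,\| \br_{k} \|_{2}.
\end{equation}
Two routine facts then close the estimate. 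Comparing the Lasso objective at $\hatbbeta_{k}$ against the feasible point $\bbeta_{k}=\boldsymbol{0}$ gives $\tfrac{1}{2}\|\br_{k}\|_{2}^{2}\leq\tfrac{1}{2}\|\y\|_{2}^{2}$, hence $\|\br_{k}\|_{2}\leq\|\y\|_{2}$; and since the variables are standardised to unit norm, $\|\x_{j}-\x_{j'}\|_{2}^{2}=2-2\rho_{j,j'}$, i.e. $\|\x_{j}-\x_{j'}\|_{2}=\sqrt{2(1-\rho_{j,j'})}$. Dividing by $\lambda_{k}(T,L)$ and summing over $k$ assembles precisely $\widebar{\Lambda}\|\y\|_{2}\sqrt{2(1-\rho_{j,j'})}$.

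The delicate point is the passage from the continuous KKT quantity $|(\x_{j}-\x_{j'})^{\top}\br_{k}|/\lambda_{k}$ to the $0/1$ occurrence discrepancy $|\mathbbm{1}_{k}(j)-\mathbbm{1}_{k}(j')|$. Under the stated hypothesis both variables enter the candidate set in experiment $k$, so the genuinely binding quantity is the sign disagreement $|\sign(\hat{\beta}_{j,k})-\sign(\hat{\beta}_{j',k})|\in\{0,2\}$, which the last display bounds by $\|\x_{j}-\x_{j'}\|_{2}\|\br_{k}\|_{2}/\lambda_{k}$. I expect the main obstacle to be the bookkeeping that identifies this per-experiment sign discrepancy with the occurrence difference and that verifies the bound holds uniformly over all admissible $(T,L)$; this is exactly where the assumption $\hat{\beta}_{j,k},\hat{\beta}_{j',k}\neq 0$ is essential, since for an inactive variable one has only the inequality $|\x^{\top}\br_{k}|\leq\lambda_{k}$ and no lower bound forces the two indicators to agree.
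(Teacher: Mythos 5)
Your proof is correct and follows essentially the same route as the paper: the paper likewise reduces via the triangle inequality to per-experiment indicator differences, rewrites them as sign differences of the Lasso coefficients, bounds these through the KKT stationarity equalities for active variables together with Cauchy--Schwarz (this step is packaged as a supplementary lemma), and closes with the same residual bound $\|\boldsymbol{\hat{r}}_{k}\|_{2}\leq\|\y\|_{2}$ obtained by comparing the Lasso objective at $\hatbbeta_{k}$ with that at $\bbeta_{k}=\boldsymbol{0}$. Your remark that the hypothesis $\hat{\beta}_{j,k},\hat{\beta}_{j',k}\neq 0$ is exactly what turns the subgradient inclusions into equalities is the same observation the paper relies on.
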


\begin{proof}
The proof is deferred to Appendix~\ref{appendix: Proofs and Technical Lemmas}.
\label{proof: Theorem - absolute difference relative occurrences}
\end{proof}

\subsection{Proposed: The Dependency-Aware T-Rex Selector}
\label{subsec: Proposed: The Dependency-Aware T-Rex Selector}
From Theorem~\ref{Theorem: absolute difference relative occurrences}, we know that the pairwise absolute differences between the relative occurrences are bounded and the differences are zero when the corresponding variables are perfectly correlated. That is, even if only one of the variables from the pair of highly correlated variables is a true active variable, both variables might be selected. This is the harmful behavior that leads to the loss of the FDR control property in the presence of highly correlated variables. Loosely speaking, if a candidate variable is highly correlated with another candidate variable and has a similar relative occurrence, then even high relative occurrences are no evidence for that variable being a true active one. Therefore, for such types of data, we propose to replace the ordinary relative occurrences of the \textit{T-Rex} selector $\Phi_{T, L}(j)$ by the dependency-aware relative occurrences $\Phi_{T, L}^{\DA}(j, \rho_{\thr})$, $j = 1, \ldots, p$, which are defined as follows:
\begin{defn}[Dependency-aware relative occurrences]
The dependency-aware relative occurrence of variable $j \in \lbrace 1, \ldots, p \rbrace$ is defined by
\begin{equation}
\Phi_{T, L}^{\DA}(j, \rho_{\thr}) \coloneqq \Psi_{T, L}(j, \rho_{\thr}) \cdot \Phi_{T, L}(j),
\label{eq: dependency-aware relative occurrences}
\end{equation}
where
\begin{align}
&\Psi_{T, L}(j, \rho_{\thr}) \coloneqq
\label{eq: dependency factor}
\\
&
\begin{cases}
 \dfrac{1}{2 \,\,\, - \,\,\, \smashoperator{\min\limits_{j^{\prime} \in \Gr(j, \rho_{\thr})}} \quad\, \big\lbrace \big| \Phi_{T, L}(j) - \Phi_{T, L}(j^{\prime}) \big| \big\rbrace}, &\!\! \Gr(j, \rho_{\thr}) \neq \varnothing
 \\[1em]
 1 / 2, &\!\! \Gr(j, \rho_{\thr}) = \varnothing
 \end{cases},
\end{align}
with $\Psi_{T, L}(j, \rho_{\thr}) \in [0.5, 1]$, is a penalty factor,
\begin{equation}
\Gr(j, \rho_{\thr}) \subseteq \lbrace 1, \ldots, p \rbrace \backslash \lbrace j \rbrace
\label{eq: generic definition of variable groups}
\end{equation}
is the generic definition of the group of variables that are associated with variable $j$, and $\rho_{\thr} \in [0, 1]$ is a parameter that determines the size of the variable groups.
\label{definition: Dependency-aware relative occurrences}
\end{defn}
In words, the dependency-aware relative occurrence of variable $j$ is designed to penalize the ordinary relative occurrence of variable $j$ according to its resemblance with the relative occurrences of its associated group of variables $\Gr(j, \rho_{\thr})$.

From Definition~\ref{definition: Dependency-aware relative occurrences}, we can infer that the selected active set of the proposed \textit{T-Rex+DA} selector is a subset of the selected active set of the ordinary \textit{T-Rex} selector in~\eqref{eq: selected active set}:
\begin{cor}
Let $\widehat{\A}_{L}(v, T) \coloneqq \lbrace j : \Phi_{T, L}(j) > v \rbrace$ and $\widehat{\A}_{L}(v, \rho_{\thr}, T) \coloneqq \lbrace j : \Phi_{T, L}^{\DA}(j, \rho_{\thr}) > v \rbrace$ be the selected active sets of the ordinary \textit{T-Rex} selector and the \textit{T-Rex+DA} selector, respectively. Then, it holds that
\begin{equation}
\widehat{\A}_{L}(v, \rho_{\thr}, T) \subseteq \widehat{\A}_{L}(v, T).
\label{eq: T-Rex+DA solution is subset of T-Rex solution}
\end{equation}
\label{Corollary: T-Rex+DA solution is subset of T-Rex solution}
\end{cor}

\begin{proof}
Using the definition of $\Phi_{T, L}^{\DA}(j, \rho_{\thr})$ in~\eqref{eq: dependency-aware relative occurrences}, we obtain
\begin{align}
\widehat{\A}_{L}(v, \rho_{\thr}, T) &= \lbrace j :  \Psi_{T, L}(j, \rho_{\thr}) \cdot \Phi_{T, L}(j) > v \rbrace
\label{proof: Corollary - T-Rex+DA solution is subset of T-Rex solution - 1}
\\
&\subseteq \lbrace j : \Phi_{T, L}(j) > v \rbrace
\label{proof: Corollary - T-Rex+DA solution is subset of T-Rex solution - 2}
\\
&=\widehat{\A}_{L}(v, T),
\label{proof: Corollary - T-Rex+DA solution is subset of T-Rex solution - 3}
\end{align}
where the second line follows from $\Psi_{T, L}(j, \rho_{\thr}) \leq 1$.
\label{proof: Corollary - T-Rex+DA solution is subset of T-Rex solution}
\end{proof}
Loosely speaking, Corollary~\ref{Corollary: T-Rex+DA solution is subset of T-Rex solution} indicates that the effect of replacing $\Phi_{T, L}(j)$ by $\Phi_{T, L}^{\DA}(j, \rho_{\thr})$ is that highly correlated variables, for which there is not sufficient evidence to decide if they are active, are removed from the selected active set.

In order to particularize the \textit{T-Rex+DA} selector for different dependency structures among the candidate variables, only the generic definition of the variable groups $\Gr(j, \rho_{\thr})$ in~\eqref{eq: generic definition of variable groups} has to be specified. Therefore, this work develops a rigorous methodology for the design of $\Gr(j, \rho_{\thr})$ such that the FDR is provably controlled at the user-defined target level $\alpha$ while maximizing the number of selected variables and, thus, implicitly maximizing the TPR.

\subsection{Clustering Variables via Hierarchical Graphical Models}
\label{subsec: Clustering Variables via Hierarchical Graphical Models}
In the following, we specify $\Gr(j, \rho_{\thr})$, $j = 1, \ldots, p$, using a hierarchical graphical model. That is, the variables $\x_{1}, \ldots, \x_{p}$ are clustered in a recursive fashion according to some measure of distance. The resulting binary tree or dendrogram is a structured graph that allows for different distance cutoff values that partition the set of variables. Figure~\ref{fig: dendrogram} depicts such a dendrogram for $p = 6$ variables, where the height of the ``$\sqcap$''-shaped connector of any two clusters represents the distance of the two connected clusters. At the bottom of the dendrogram, all variables are considered as one-element clusters. Then, starting at the bottom, in each iteration the two clusters with the smallest distance are connected until all variables are clustered into a single cluster at the top. The obtained dendrogram can be evaluated at different distances (i.e., values on the $y$-axis), resulting in different variable clusters. The $p$ discrete distances between two consecutive cutoff levels that invoke a change in the clusters, are denoted by $\Delta\rho_{\thr, u}$, $u = 1, \ldots, p$. For example, cutting off the dendrogram in Figure~\ref{fig: dendrogram} at a distance of
\begin{align}
1 - \rho_{\thr}(u_{\cut} = 2) &\coloneqq 1 - \smashoperator{\sum\limits_{u = 1}^{u_{\cut} = 2}} \Delta\rho_{\thr, u} 
\\
&= 1 - (0.1 + 0.3) = 0.6,
\label{eq: cutoff dendrogram}
\end{align}
where $u_{\cut} \in \lbrace 1, \ldots, p \rbrace$ is the discrete cutoff level, yields three disjoint variable clusters: $\lbrace \x_{1}, \x_{2} \rbrace$, $\lbrace \x_{3}, \x_{4}, \x_{5} \rbrace$, and $\lbrace \x_{6} \rbrace$.

With this generic description of hierarchical graphical models in place, we can specify the generic definition of the variable groups in~\eqref{eq: generic definition of variable groups} in a recursive fashion:
\begin{defn}[Hierarchical group design]
The $j$th variable group following a hierarchical graphical model (i.e., binary tree/dendrogram) is defined by
\begin{align}
\Gr&(j, \rho_{\thr}(u_{\cut})) \coloneqq
\\
\big\lbrace & j^{\prime} \in \lbrace 1, \ldots, p \rbrace \backslash \lbrace j \rbrace :
\label{eq: recursive formulation group of correlated variables}
\\
&\,\, \dist_{u_{\cut} - 1}(j, j^{\prime}) \in [ 1 - \rho_{\thr}(u_{\cut}), 1 - \rho_{\thr}(u_{\cut} - 1) ] \big\rbrace,
\end{align}
where $\dist_{u_{\cut} - 1}(j, j^{\prime})$ is a still to be specified measure of distance between the groups $\Gr(j, \rho_{\thr}(u_{\cut} - 1))$ and $\Gr(j^{\prime}, \rho_{\thr}(u_{\cut} - 1))$.
\label{Definition: binary tree group model}
\end{defn}
Note that in this recursive definition of the variable groups, we consider $\rho_{\thr}(u_{\cut})$ to be a variable that can be optimized and, therefore, include it in $\Gr(j, \rho_{\thr}(u_{\cut}))$ as a second argument.
%
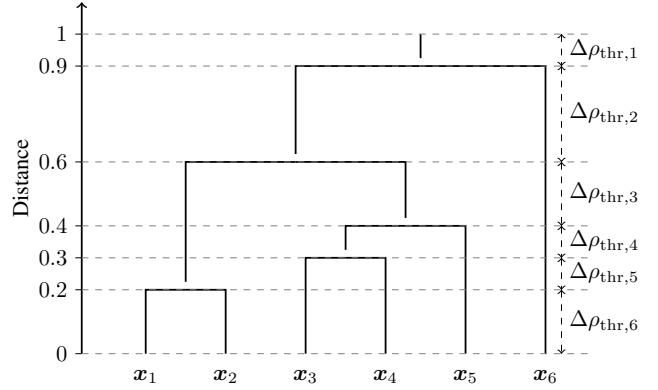
\begin{figure}[t]
\begin{center}
\scalebox{0.85}{
\begin{tikzpicture}[sloped]
    \node[label=below:$\x_{1}$] (x1) at (-6,0) {};
    \node[label=below:$\x_{2}$] (x2) at (-4.75,0) {};
    \node[label=below:$\x_{3}$] (x3) at (-3.5,0) {};
    \node[label=below:$\x_{4}$] (x4) at (-2.25,0) {};
    \node[label=below:$\x_{5}$] (x5) at (-1,0) {};
    \node[label=below:$\x_{6}$] (x6) at (0.25,0) {};
    
    \node (x12) at (-5.375,1) {};
    \node (x34) at (-2.875,1.5) {};
    \node (x345) at (-1.9375,2) {};
    \node (x12345) at (-3.65625,3) {};
    \node (x123456) at (-1.703125,4.5) {};
    \node (end) at (-1.703125,5) {};
	
    \draw[line width=0.8pt]  (x1.center) |- (x12.center);
    \draw[line width=0.8pt]  (x2.center) |- (x12.center);
    \draw[line width=0.8pt]  (x3.center) |- (x34.center);
    \draw[line width=0.8pt]  (x4.center) |- (x34.center);
    \draw[line width=0.8pt]  (x34) |- (x345.center);
    \draw[line width=0.8pt]  (x5.center) |- (x345.center);
    \draw[line width=0.8pt]  (x12) |- (x12345.center);
    \draw[line width=0.8pt]  (x345) |- (x12345.center);
    \draw[line width=0.8pt]  (x12345) |- (x123456.center);
    \draw[line width=0.8pt]  (x6.center) |- (x123456.center);
    \draw[line width=0.8pt]  (x123456) |- (end.center);
    
    \draw[->,line width=0.8pt] (-7,0) -- node[above,yshift=2em]{Distance}(-7,5.5);
    
    \foreach \y in {0,1,1.5,2,3,4.5,5}                     
    \draw[shift={(0,\y)},color=black] (-7,0) -- (-7.1,0);
    
    \node[left] at (-7.1,0) {$0$} ;
    \node[left] at (-7.1,1) {$0.2$} ;
     \node[left] at (-7.1,1.5) {$0.3$} ;
    \node[left] at (-7.1,2) {$0.4$} ;
    \node[left] at (-7.1,3) {$0.6$} ;
     \node[left] at (-7.1,4.5) {$0.9$} ;
    \node[left] at (-7.1,5) {$1$} ;
	
	\draw[dashed,color=gray] (-7,5) -- (1,5);
	\draw[<->,dashed,color=black] (0.5,5) -- node[right,rotate=90]{$\Delta\rho_{\thr, 1}$}(0.5,4.5);
    \draw[dashed,color=gray] (-7,4.5) -- (1,4.5);
    \draw[<->,dashed,color=black] (0.5,4.5) -- node[right,rotate=90]{$\Delta\rho_{\thr, 2}$}(0.5,3);
    \draw[dashed,color=gray] (-7,3) -- (1,3);
    \draw[<->,dashed,color=black] (0.5,3) -- node[right,rotate=90]{$\Delta\rho_{\thr, 3}$}(0.5,2);
    \draw[dashed,color=gray] (-7,2) -- (1,2);
    \draw[<->,dashed,color=black] (0.5,2) -- node[right,rotate=90]{$\Delta\rho_{\thr, 4}$}(0.5,1.5);
	\draw[dashed,color=gray] (-7,1.5) -- (1,1.5);
	\draw[<->,dashed,color=black] (0.5,1.5) -- node[right,rotate=90]{$\Delta\rho_{\thr, 5}$}(0.5,1);
	\draw[dashed,color=gray] (-7,1) -- (1,1);
	\draw[<->,dashed,color=black] (0.5,1) -- node[right,rotate=90]{$\Delta\rho_{\thr, 6}$}(0.5,0);
	\draw[dashed,color=gray] (-7,0) -- (1,0);
\end{tikzpicture}}
\end{center}
\caption{Hierarchical graphical models: The dendrogram.}
\label{fig: dendrogram}
\end{figure}
%

\begin{rmk}
The following three distance measures are frequently used in hierarchical graphical models~\cite{murtagh2012algorithms}:
\begin{enumerate}[wide, labelwidth=!, labelindent=0pt]
\item \underline{Single linkage}:
\begin{equation}
\dist_{u_{\cut}}(g, h) \coloneqq \min_{
\begin{array}{l}
g^{\prime} \in \Gr(g, \rho_{\thr}(u_{\cut}))
\\
h^{\prime} \in \Gr(h, \rho_{\thr}(u_{\cut}))
\end{array}
} 1 - | \rho_{g^{\prime}, h^{\prime}} |,
\label{eq: single linkage distance}
\end{equation}
\item \underline{Complete linkage}:
\begin{equation}
\dist_{u_{\cut}}(g, h) \coloneqq \max_{
\begin{array}{l}
g^{\prime} \in \Gr(g, \rho_{\thr}(u_{\cut}))
\\
h^{\prime} \in \Gr(h, \rho_{\thr}(u_{\cut}))
\end{array}
} 1 - | \rho_{g^{\prime}, h^{\prime}} |,
\label{eq: complete linkage distance}
\end{equation}
\item \underline{Average linkage}:
\begin{equation}
\dist_{u_{\cut}}(g, h) \coloneqq 
\dfrac{
\qquad
\smashoperator{\sum\limits_{
\footnotesize{
\begin{array}{c}
g^{\prime} \in
\\
\Gr(g, \rho_{\thr}(u_{\cut}))
\end{array}
}
}}
\qquad\qquad\quad
\smashoperator{\sum\limits_{
\footnotesize{
\begin{array}{c}
h^{\prime} \in
\\
\Gr(h, \rho_{\thr}(u_{\cut}))
\end{array}
}
}}
\big( 1 - | \rho_{g^{\prime}, h^{\prime}} | \big)}{| \Gr(g, \rho_{\thr}(u_{\cut})) | \cdot |  \Gr(h, \rho_{\thr}(u_{\cut})) |}.
\label{eq: average linkage distance}
\end{equation}
\end{enumerate}
\label{Remark: distance measures for dendrogram}
\end{rmk}

\begin{rmk}
Note that, for all $u_{\cut} \in \lbrace 1, \ldots, p \rbrace$, it holds that
\begin{align}
\Gr&(j_{1}, \rho_{\thr}(u_{\cut})) \cap \Gr(j_{2}, \rho_{\thr}(u_{\cut})) =
\\[0.5em]
&
\begin{cases}
\varnothing, \qquad \nexists \, j \in \lbrace 1, \ldots, p \rbrace : j_{1}, j_{2} \in \Gr(j, \rho_{\thr}(u_{\cut}))
\\[1em]
\Gr(j_{1}, \rho_{\thr}(u_{\cut})) \cup \Gr(j_{2}, \rho_{\thr}(u_{\cut})), \quad \text{otherwise}
\end{cases},
\label{eq: identical or disjoint variable groups}
\end{align}
i.e., any arbitrary pair of groups among the obtained $p$ groups $\Gr(j, \rho_{\thr}(u_{\cut}))$, $j = 1, \ldots, p$, are disjoint if and only if there exist no two variables $j_{1}$ and $j_{2}$ that belong to the same group and identical otherwise. Loosely speaking, due to the binary tree structure of hierarchical graphical models, there exist no ``overlapping'' variable groups.
\label{Remark: no overlapping groups}
\end{rmk}

\subsection{Preliminaries for the FDR Control Theorem}
\label{subsec: Preliminaries for the FDR Control Theorem}
Based on the recursive definition of the variable groups in~\eqref{eq: recursive formulation group of correlated variables}, we can formulate the FDR, TPR, and the conservative FDP estimator $\widehat{\FDP}$. For this purpose, let
\begingroup
\begin{align}
V_{T,L}(v, \rho_{\thr}(u_{\cut})) &\coloneqq \big| \widehat{\mathcal{A}}^{\, 0}(v, \rho_{\thr}(u_{\cut})) \big|
\\
&\coloneqq  \big| \lbrace \text{null } j : \Phi_{T, L}^{\DA}(j, \rho_{\thr}(u_{\cut})) > v \rbrace  \big|, 
\label{eq: number of selected null variables}
\\[1em]
S_{T,L}(v, \rho_{\thr}(u_{\cut})) &\coloneqq \big| \widehat{\mathcal{A}}^{\, 1}(v, \rho_{\thr}(u_{\cut})) \big| \\
&\coloneqq  \big| \lbrace \text{active } j : \Phi_{T, L}^{\DA}(j, \rho_{\thr}(u_{\cut})) > v \rbrace  \big|,
\label{eq: number of selected active variables}
\\[1em]
R_{T,L}(v, \rho_{\thr}(u_{\cut})) &\coloneqq \big| \widehat{\mathcal{A}}(v, \rho_{\thr}(u_{\cut})) \big| \\
&\coloneqq  \big| \lbrace j : \Phi_{T, L}^{\DA}(j, \rho_{\thr}(u_{\cut})) > v \rbrace  \big|,
\label{eq: number of selected variables}
\end{align}
\endgroup
be the number of selected null variables, the number of selected active variables, and the total number of selected variables, respectively. Note that the expressions $\widehat{\mathcal{A}}^{\, 0}(v, \rho_{\thr}(u_{\cut}))$, $\widehat{\mathcal{A}}^{\, 1}(v, \rho_{\thr}(u_{\cut}))$, and $\widehat{\mathcal{A}}(v, \rho_{\thr}(u_{\cut}))$ are shortcuts (i.e., $L$ and $T$ are dropped) of the expressions $\widehat{\mathcal{A}}_{L}^{\, 0}(v, \rho_{\thr}(u_{\cut}), T)$, $\widehat{\mathcal{A}}_{L}^{\, 1}(v, \rho_{\thr}(u_{\cut}), T)$, and $\widehat{\mathcal{A}}_{L}(v, \rho_{\thr}(u_{\cut}), T)$, respectively.

\begin{defn}[Dependency-aware FDP and FDR]
The dependency-aware FDR is defined as the expectation of the dependency-aware FDP, i.e.,
\begin{align}
\FDR(v, \rho_{\thr}(u_{\cut}), T, L) &\coloneqq \mathbb{E} \big[ \FDP(v, \rho_{\thr}(u_{\cut}), T, L) \big] 
\\[0.75em]
&\coloneqq \mathbb{E} \bigg[ \dfrac{V_{T, L}(v, \rho_{\thr}(u_{\cut}))}{R_{T, L}(v, \rho_{\thr}(u_{\cut})) \lor 1} \bigg].
\label{eq: dependency-aware FDP and FDR}
\end{align}
\label{definition: dependency-aware FDP and FDR}
\end{defn}
\begin{defn}[Dependency-aware TPP and TPR]
The dependency-aware TPR is defined as the expectation of the dependency-aware TPP, i.e.,
\begin{align}
\TPR(v, \rho_{\thr}(u_{\cut}), T, L) &\coloneqq \mathbb{E} \big[ \TPP(v, \rho_{\thr}(u_{\cut}), T, L) \big] 
\\[0.75em]
&\coloneqq \mathbb{E} \bigg[ \dfrac{S_{T, L}(v, \rho_{\thr}(u_{\cut}))}{p_{1} \lor 1} \bigg].
\label{eq: dependency-aware TPP and TPR}
\end{align}
\label{definition: dependency-aware TPP and TPR}
\end{defn}

From Definition~\ref{definition: dependency-aware FDP and FDR}, we know that in order to design a dependency-aware and conservative FDP estimator, we only need to design a dependency-aware estimator of the number of selected null variables $V_{T, L}(v, \rho_{\thr}(u_{\cut}))$, since the total number of selected variables $R_{T, L}(v, \rho_{\thr}(u_{\cut}))$ is observable. For this purpose, we plug the dependency-aware relative occurrences from Definition~\ref{definition: Dependency-aware relative occurrences} and the group design from Definition~\ref{Definition: binary tree group model} into the ordinary \textit{T-Rex} estimator of $V_{T, L}(v)$ in~\cite{machkour2021terminating}, which yields the dependency-aware estimator of the number of selected null variables

\begingroup
\allowdisplaybreaks
\begin{align}
&\widehat{V}_{T, L}(v, \rho_{\thr}(u_{\cut})) \coloneqq
\label{eq: estimator of V_T_L with hierarchical group design}
\\
&\qquad\qquad \smashoperator{\sum\limits_{j \in \widehat{\A}(v, \rho_{\thr}(u_{\cut}))}} \Big( 1 - \Phi_{T, L}^{\DA}(j, \rho_{\thr}(u_{\cut})) \Big) + \widehat{V}_{T, L}^{\prime}(v, \rho_{\thr}(u_{\cut})),
\end{align}
where
\begin{align}
&\widehat{V}_{T, L}^{\prime}(v, \rho_{\thr}(u_{\cut})) \coloneqq
\label{eq: V_T_L_prime}
\\
&\, \sum\limits_{t = 1}^{T} \dfrac{p - \sum\limits_{q = 1}^{p}\Phi_{t, L}^{\DA}(q, \rho_{\thr}(u_{\cut}))}{L - (t - 1)} \cdot \dfrac{\qquad\, \smashoperator{\sum\limits_{j \in \widehat{\A}(v, \rho_{\thr}(u_{\cut}))}} \Delta\Phi_{t, L}^{\DA}(j, \rho_{\thr}(u_{\cut}))}{\qquad\, \smashoperator{\sum\limits_{j \in \widehat{\A}(0.5, \rho_{\thr}(u_{\cut}))}} \Delta\Phi_{t, L}^{\DA}(j, \rho_{\thr}(u_{\cut}))}
\end{align}
\endgroup
and $\Delta\Phi_{t, L}^{\DA}(j, \rho_{\thr}) \coloneqq \Phi_{t, L}^{\DA}(j, \rho_{\thr}) - \Phi_{t - 1, L}^{\DA}(j, \rho_{\thr})$ is the increase in the dependency-aware relative occurrence from step $t - 1$ to $t$. The expressions in~\eqref{eq: estimator of V_T_L with hierarchical group design} and~\eqref{eq: V_T_L_prime} are derived along the lines of the ordinary estimator of $V_{T, L}(v)$ in~\cite{machkour2021terminating} except that the ordinary relative occurrences have been replaced by the proposed dependency-aware relative occurrences in Definition~\ref{definition: Dependency-aware relative occurrences}. Thus, for more details on the derivation of~\eqref{eq: estimator of V_T_L with hierarchical group design} and~\eqref{eq: V_T_L_prime}, we refer the interested reader to~\cite{machkour2021terminating}.

Finally, the conservative estimator of the FDP is defined as follows:
\begin{defn}[Dependency-aware FDP estimator]
The dependency-aware FDP estimator is defined by
\begin{equation}
\widehat{\FDP}(v, \rho_{\thr}(u_{\cut}), T, L) \coloneqq \dfrac{\widehat{V}_{T, L}(v, \rho_{\thr}(u_{\cut}))}{R_{T, L}(v, \rho_{\thr}(u_{\cut})) \lor 1}.
\label{eq: dependency-aware FDP estimator}
\end{equation}
\label{definition: dependency-aware FDP estimator}
\end{defn}

With all preliminary definitions in place, the overarching goal of this paper, i.e., maximizing the number of selected variables while controlling the FDR at the target level $\alpha$, is formulated as follows:
\begin{empheq}[box=\fbox]{align}
&\max_{v, \rho_{\thr}(u_{\cut}), T} \, R_{T, L}(v, \rho_{\thr}(u_{\cut}))
\\
&\qquad\text{subject  to} \quad \widehat{\FDP}(v, \rho_{\thr}(u_{\cut}), T, L) \leq \alpha.
\label{eq: optimization problem - FDR-controlled variable selection}
\end{empheq}
In Section~\ref{subsec: Dependency-Aware FDR Control}, we prove that satisfying the condition of the optimization problem in~\eqref{eq: optimization problem - FDR-controlled variable selection} yields FDR control and in Section~\ref{sec: Optimal Dependency-Aware T-Rex Algorithm}, we propose an efficient algorithm to solve~\eqref{eq: optimization problem - FDR-controlled variable selection} and prove that it yields an optimal solution.

\subsection{Dependency-Aware FDR Control}
\label{subsec: Dependency-Aware FDR Control}
In this section, using martingale theory~\cite{williams1991probability} we state and prove that controlling $\widehat{\FDP}(v, \rho_{\thr}(u_{\cut}), T, L)$ at the target level $\alpha$ (i.e., the condition in the optimization problem in~\eqref{eq: optimization problem - FDR-controlled variable selection}) guarantees FDR control. The required technical Lemmas~\ref{Lemma: T-Rex+DA super-martingale},~\ref{Lemma: V_T,L monotonically increasing in rho}, and~\ref{Lemma: V'_hat_T,L monotonically decreasing in rho} are deferred to Appendix~\ref{appendix: Proofs and Technical Lemmas}.
\begin{thm}[Dependency-aware FDR control]
For all quadruples $(T, L, \rho_{\thr}(u_{\cut}), v) \in \lbrace 1, \ldots, L \rbrace \times \mathbb{N}_{+} \times [0, 1] \times [0.5, 1)$ that satisfy the equation
\begin{equation}
v = \inf \big\lbrace \nu \in [0.5, 1) : \widehat{\FDP}(\nu, \rho_{\thr}(u_{\cut}), T, L) \leq \alpha \big\rbrace,
\label{eq: stopping time v}
\end{equation}
and as $K \rightarrow \infty$, the \textit{T-Rex+DA} selector with $\Gr(j, \rho_{\thr}(u_{\cut}))$ from Definition~\ref{Definition: binary tree group model} controls the FDR at any fixed target level $\alpha \in [0, 1]$, i.e.,
\begin{equation}
\FDR(v, \rho_{\thr}(u_{\cut}), T, L) \leq \alpha.
\label{eq: dependency-aware FDR control}
\end{equation}
\label{Theorem: Dependency-aware FDR control}
\end{thm}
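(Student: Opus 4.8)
The plan is to adapt the martingale-based FDR-control argument of the ordinary \textit{T-Rex} selector (Theorem~1 of~\cite{machkour2021terminating}) to the penalized relative occurrences $\Phi_{T,L}^{\DA}$. Since the total count $R_{T,L}(v,\rho_{\thr}(u_{\cut}))$ is observable, the FDP in~\eqref{eq: dependency-aware FDP and FDR} and its estimator in~\eqref{eq: dependency-aware FDP estimator} share the same denominator, so it suffices to control the numerator $V_{T,L}$ through $\widehat{V}_{T,L}$. I would therefore establish the chain
\[
\FDR(v,\rho_{\thr}(u_{\cut}),T,L)\;\le\;\mathbb{E}\big[\widehat{\FDP}(v,\rho_{\thr}(u_{\cut}),T,L)\big]\;\le\;\alpha,
\]
where the first inequality is the conservativeness of the estimator and the second is enforced by the stopping-time definition of $v$ in~\eqref{eq: stopping time v}: at the infimum $v$ one has $\widehat{\FDP}(v,\rho_{\thr}(u_{\cut}),T,L)\le\alpha$ by construction (right-continuity on the grid of voting levels).

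The heart of the argument is the conservativeness inequality evaluated at the \emph{data-dependent} threshold $v$. Fixing $(T,L,\rho_{\thr}(u_{\cut}))$, I would view $\widehat{V}_{T,L}-V_{T,L}$ as a process indexed by the decreasing voting level. Lemma~\ref{Lemma: T-Rex+DA super-martingale} supplies the key structural fact that this process is a super-martingale with respect to the filtration generated as variables enter $\widehat{\A}(v,\rho_{\thr}(u_{\cut}))$; the level $v$ in~\eqref{eq: stopping time v} is, by definition, a stopping time for this filtration, so the optional stopping theorem transfers conservativeness from a fixed level to $v$. The disjointness established in Remark~\ref{Remark: no overlapping groups} is what makes this work: because the dendrogram yields a partition, the penalty factor $\Psi_{T,L}(j,\rho_{\thr}(u_{\cut}))$ acts block-wise, so the conditional increments needed for the super-martingale property factor across groups without cross-terms. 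The limit $K\to\infty$ enters to neutralize discretization: the relative occurrences $\Phi_{T,L}(j)=\tfrac{1}{K}\sum_{k=1}^{K}\mathbbm{1}_{k}(j,T,L)$ converge by the law of large numbers to the selection probabilities, the voting grid $\{i/K\}$ becomes dense in $[0.5,1)$, and the super-martingale increments become exact in the continuum, removing the boundary slack present at finite $K$.

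It remains to certify conservativeness for the particular group size encoded by $\rho_{\thr}(u_{\cut})$, and this is where the monotonicity lemmas enter. Corollary~\ref{Corollary: T-Rex+DA solution is subset of T-Rex solution} already gives $\widehat{\A}_{L}(v,\rho_{\thr}(u_{\cut}),T)\subseteq\widehat{\A}_{L}(v,T)$, so $V_{T,L}(v,\rho_{\thr}(u_{\cut}))$ never exceeds the ordinary null count. To compare the \emph{estimators}, I would anchor the inequality at a reference cutoff and propagate it across the discrete cutoff levels $u_{\cut}\in\{1,\dots,p\}$: Lemma~\ref{Lemma: V_T,L monotonically increasing in rho} shows $V_{T,L}$ is nondecreasing in $\rho_{\thr}(u_{\cut})$, while Lemma~\ref{Lemma: V'_hat_T,L monotonically decreasing in rho} shows the correction term $\widehat{V}_{T,L}^{\prime}$ is nonincreasing, so together the two monotonicities pin the estimator above the truth uniformly in $u_{\cut}$. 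Theorem~\ref{Theorem: absolute difference relative occurrences} is used to justify that the $\min$-based penalty genuinely down-weights highly correlated nulls, so that the term $\sum_{j\in\widehat{\A}}\big(1-\Phi_{T,L}^{\DA}(j,\rho_{\thr}(u_{\cut}))\big)$ in~\eqref{eq: estimator of V_T_L with hierarchical group design} absorbs the decrease of $\widehat{V}_{T,L}^{\prime}$.

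The main obstacle is precisely the coupling introduced by the penalty factor. Because $\Psi_{T,L}(j,\rho_{\thr}(u_{\cut}))$ depends on $\min_{j^{\prime}\in\Gr(j,\rho_{\thr}(u_{\cut}))}\big|\Phi_{T,L}(j)-\Phi_{T,L}(j^{\prime})\big|$, the penalized occurrence of a null variable is no longer a function of that variable alone, which threatens the increment-wise conditional-expectation computation underlying Lemma~\ref{Lemma: T-Rex+DA super-martingale}. Showing that the super-martingale property survives this within-group coupling---and simultaneously that $V_{T,L}$ increasing and $\widehat{V}_{T,L}^{\prime}$ decreasing in $\rho_{\thr}(u_{\cut})$ do not break conservativeness---is the step I expect to be hardest, together with making the $K\to\infty$ passage rigorous so that the optional-stopping bound holds with no residual discretization gap.
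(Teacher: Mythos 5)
Your skeleton (stopping time, optional stopping for a backward super-martingale, then the two monotonicity lemmas) matches the paper's architecture, but the proposal has a genuine gap at its center: you never establish an anchor where conservativeness is actually \emph{known}, and the monotonicity lemmas alone cannot provide one. In the paper, the object controlled by optional stopping is the \emph{ratio} $H_{T,L}(v,\rho_{\thr}(u_{\cut})) \coloneqq V_{T,L}/\widehat{V}_{T,L}^{\prime}$ (this is what Lemma~\ref{Lemma: T-Rex+DA super-martingale} asserts is a backward super-martingale), not the difference $\widehat{V}_{T,L}-V_{T,L}$ that you propose; the ratio form is what makes the random denominator $R_{T,L}\lor 1$ disappear via the factorization $\FDP = \widehat{\FDP}\cdot(V_{T,L}/\widehat{V}_{T,L}) \le \alpha\, V_{T,L}/\widehat{V}_{T,L}^{\prime}$, so that the whole theorem reduces to $\mathbb{E}[H_{T,L}(v,\rho_{\thr}(u_{\cut}))]\le 1$. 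A difference-based super-martingale does not yield $\FDR\le\mathbb{E}[\widehat{\FDP}]$ at the data-dependent $v$, because $\mathbb{E}[\widehat{V}-V]\ge 0$ does not control $\mathbb{E}[(\widehat{V}-V)/(R\lor 1)]$. After optional stopping reduces the problem to $v=0.5$, the paper closes the argument by a step you are missing entirely: it introduces the modified penalty $\Psi^{+}_{t,L}$, which equals $1$ (rather than $1/2$) on empty groups, shows $H_{T,L}(0.5,\rho_{\thr}(u_{\cut}))\le H^{+}_{T,L}(0.5,\rho_{\thr}(u_{\cut}))$, and then uses Lemmas~\ref{Lemma: V_T,L monotonically increasing in rho} and~\ref{Lemma: V'_hat_T,L monotonically decreasing in rho} (which are statements about the $\Psi^{+}$-versions $V^{+}$ and $\widehat{V}^{\prime\,+}$, not about $V$ and $\widehat{V}^{\prime}$) to push $\rho_{\thr}$ up to $\rho_{\thr}(p)=1$. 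At that endpoint all groups are empty, $\Psi^{+}\equiv 1$, and $H^{+}_{T,L}(0.5,1)$ collapses to the ordinary \textit{T-Rex} quantity $H_{T,L}(0.5)$, whose expectation bound $\mathbb{E}[H_{T,L}(0.5)]\le 1$ is inherited from Theorem~1 of~\cite{machkour2021terminating}. That reduction is the crux; your claim that the two monotonicities by themselves ``pin the estimator above the truth uniformly in $u_{\cut}$'' is circular, since monotonicity only compares different cutoffs and proves nothing without the known bound at $\rho_{\thr}=1$. Note also that with the original $\Psi$ (empty-group value $1/2$) the monotonicity would fail when a group becomes empty, which is exactly why the $\Psi^{+}$ device is needed.

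Two further misattributions are worth flagging. Theorem~\ref{Theorem: absolute difference relative occurrences} plays no role in the paper's proof of Theorem~\ref{Theorem: Dependency-aware FDR control}; it is motivation for the penalization, not an ingredient of the FDR bound, so your use of it to argue that $\sum_{j}(1-\Phi^{\DA}_{T,L})$ ``absorbs'' the decrease of $\widehat{V}^{\prime}_{T,L}$ is an invented step (the paper needs only $\widehat{V}_{T,L}\ge \widehat{V}^{\prime}_{T,L}$, which is immediate from nonnegativity of that sum). Likewise, the operative structural property of the dendrogram is not the disjointness of Remark~\ref{Remark: no overlapping groups} but the nesting $\Gr(j,\rho_{2})\subseteq\Gr(j,\rho_{1})$ for $\rho_{2}>\rho_{1}$ (the group design principle of Theorem~\ref{Theorem: Group design principle}), which is what makes $\Psi^{+}$ monotone in $\rho_{\thr}$ and hence drives both lemmas.
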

\begin{proof}
Rewriting the expression for the FDP in Definition~\ref{definition: dependency-aware FDP and FDR}, we obtain
\begin{align}
& \FDP(v, \rho_{\thr}(u_{\cut}), T, L)
\\
& \quad = \dfrac{V_{T, L}(v, \rho_{\thr}(u_{\cut}))}{R_{T, L}(v, \rho_{\thr}(u_{\cut})) \lor 1}
\\
& \quad = \dfrac{\widehat{V}_{T, L}(v, \rho_{\thr}(u_{\cut}))}{R_{T, L}(v, \rho_{\thr}(u_{\cut})) \lor 1} \cdot \dfrac{V_{T, L}(v, \rho_{\thr}(u_{\cut}))}{\widehat{V}_{T, L}(v, \rho_{\thr}(u_{\cut}))}
\\
& \quad \leq \alpha \cdot \dfrac{V_{T, L}(v, \rho_{\thr}(u_{\cut}))}{\widehat{V}_{T, L}(v, \rho_{\thr}(u_{\cut}))}
\\
& \quad \leq \alpha \cdot \dfrac{V_{T, L}(v, \rho_{\thr}(u_{\cut}))}{\widehat{V}_{T, L}^{\prime}(v, \rho_{\thr}(u_{\cut}))}
\\
& \quad \eqcolon \alpha \cdot H_{T, L}(v, \rho_{\thr}(u_{\cut})),
\label{eq: proof - Theorem - Dependency-aware FDR control - 1}
\end{align}
where the inequality in the fourth line follows from the condition in~\eqref{eq: stopping time v} that all considered quadruples $(T, L, \rho_{\thr}(u_{\cut}), v)$ must satisfy. Taking the expectation of the FDP, we obtain
\begin{equation}
\FDR(v, \rho_{\thr}(u_{\cut}), T, L) \leq \alpha \cdot \mathbb{E} \big[ H_{T, L}(v, \rho_{\thr}(u_{\cut})) \big]
\label{eq: proof - Theorem - Dependency-aware FDR control - 2}
\end{equation}
and, thus, it remains to prove that $\mathbb{E} [ H_{T, L}(v, \rho_{\thr}(u_{\cut})) ] \leq 1$. Since~\eqref{eq: stopping time v} is a stopping time that is adapted to the filtration $\mathcal{F}_{v}$ in Lemma~\ref{Lemma: T-Rex+DA super-martingale} (i.e., $v$ is $\mathcal{F}_{v}$-measurable), we can apply Doob's optional stopping theorem to obtain an upper bound for $\mathbb{E} [ H_{T, L}(v, \rho_{\thr}(u_{\cut})) ]$, i.e.,
\begin{equation}
\mathbb{E} \big[ H_{T, L}(v, \rho_{\thr}(u_{\cut})) \big] \leq \mathbb{E} \big[ H_{T, L}(0.5, \rho_{\thr}(u_{\cut})) \big].
\label{eq: proof - Theorem - Dependency-aware FDR control - 3}
\end{equation}

Defining
\begin{align}
& \Psi_{t, L}^{+}(j, \rho_{\thr}(u_{\cut}))
\label{eq: dependency factor plus}
\\
& \coloneqq
\begin{cases}
 \dfrac{1}{2 \,\,\, - \,\,\, \smashoperator{\min\limits_{
 \substack{
 j^{\prime} \in \\
 \Gr(j, \rho_{\thr}(u_{\cut}))
 }
 }}  \big\lbrace \big| \Phi_{T, L}(j) - \Phi_{T, L}(j^{\prime}) \big| \big\rbrace}, 
&\!\!
\begin{array}{@{}c@{}}
\normalsize \Gr(j, \rho_{\thr}(u_{\cut})) \\
\neq \varnothing
\end{array}
 \\[2em]
 1, 
 &\!\!
 \begin{array}{@{}c@{}}
\normalsize \Gr(j, \rho_{\thr}(u_{\cut})) \\
= \varnothing
\end{array}
 \end{cases}.
\\
& \geq
\begin{cases}
 \dfrac{1}{2 \,\,\, - \,\,\, \smashoperator{\min\limits_{
 \substack{
 j^{\prime} \in \\
 \Gr(j, \rho_{\thr}(u_{\cut}))
 }
 }}  \big\lbrace \big| \Phi_{T, L}(j) - \Phi_{T, L}(j^{\prime}) \big| \big\rbrace}, 
&\!\!
\begin{array}{@{}c@{}}
\normalsize \Gr(j, \rho_{\thr}(u_{\cut})) \\
\neq \varnothing
\end{array}
 \\[2em]
 1 / 2, 
 &\!\!
 \begin{array}{@{}c@{}}
\normalsize \Gr(j, \rho_{\thr}(u_{\cut})) \\
= \varnothing
\end{array}
 \end{cases}.
 \\
 & =  \Psi_{t, L}(j, \rho_{\thr}(u_{\cut})),
\end{align}
$(t, L) \in \lbrace 1, \ldots, T \rbrace \times \mathbb{N}_{+}$, $H_{T, L}(0.5, \rho_{\thr}(u_{\cut}))$ can be upper bounded as follows:
\begin{align}
H_{T, L}(0.5, \rho_{\thr}(u_{\cut})) &= \dfrac{V_{T, L}(0.5, \rho_{\thr}(u_{\cut}))}{\widehat{V}_{T, L}^{\prime}(0.5, \rho_{\thr}(u_{\cut}))}
\\
&\leq \dfrac{V_{T, L}^{+}(0.5, \rho_{\thr}(u_{\cut}))}{\widehat{V}_{T, L}^{\prime \, +}(0.5, \rho_{\thr}(u_{\cut}))}
\\
&\eqqcolon H_{T, L}^{+}(0.5, \rho_{\thr}(u_{\cut})).
\label{eq: upper bound on H_T_L(0.5, rho_thr)}
\end{align}
The inequality in the second line follows from
\begin{align}
\text{(i) } V_{T, L}^{+}&(v, \rho_{\thr}(u_{\cut}))
\label{eq: upper bound on V_T_L(0.5, rho_thr)}
\\
&\coloneqq \lbrace \text{null } j : \Psi_{T, L}^{+}(j, \rho_{\thr}(u_{\cut})) \cdot \Phi_{T, L}(j) > v \rbrace
\\
&\geq \lbrace \text{null } j : \Psi_{T, L}(j, \rho_{\thr}(u_{\cut})) \cdot \Phi_{T, L}(j) > v \rbrace
\\
& = V_{T, L}(v, \rho_{\thr}(u_{\cut})),
\\[1em]
\text{(ii) } \widehat{V}_{T, L}^{\prime \, +}&(0.5, \rho_{\thr}(u_{\cut}))
\label{eq: lower bound on V_hat_prime_T_L(0.5, rho_thr)}
\\
&\coloneqq \sum\limits_{t = 1}^{T} \dfrac{p - \sum\limits_{q = 1}^{p}\Psi_{t, L}^{+}(q, \rho_{\thr}(u_{\cut})) \cdot \Phi_{t, L}(q)}{L - (t - 1)}
\\
& \leq \sum\limits_{t = 1}^{T} \dfrac{p - \sum\limits_{q = 1}^{p}\Psi_{t, L}(q, \rho_{\thr}(u_{\cut})) \cdot \Phi_{t, L}(q)}{L - (t - 1)}
\\
&= \widehat{V}_{T, L}^{\prime}(0.5, \rho_{\thr}(u_{\cut})).
\end{align}

Next, we show that $H_{T, L}^{+}(0.5, \rho_{\thr}(u_{\cut}))$ is monotonically increasing in $\rho_{\thr}(u_{\cut})$, i.e.,
\begin{align}
H_{T, L}^{+}(0.5, \rho_{\thr}(u_{\cut} + 1)) &= \dfrac{V_{T, L}^{+}(0.5, \rho_{\thr}(u_{\cut} + 1))}{\widehat{V}_{T, L}^{\prime \, +}(0.5, \rho_{\thr}(u_{\cut} + 1))}
\\
& \geq \dfrac{V_{T, L}^{+}(0.5, \rho_{\thr}(u_{\cut}))}{\widehat{V}_{T, L}^{\prime \, +}(0.5, \rho_{\thr}(u_{\cut}))}
\\
& = H_{T, L}^{+}(0.5, \rho_{\thr}(u_{\cut})),
\label{eq: proof - Theorem - Dependency-aware FDR control - 4}
\end{align}
where the inequality in the second line follows from Lemmas~\ref{Lemma: V_T,L monotonically increasing in rho} and~\ref{Lemma: V'_hat_T,L monotonically decreasing in rho}. Combining these preliminaries and noting that $\Psi_{t, L}^{+}(j, \rho_{\thr}(u_{\cut})) = 1$ yields
\begin{align}
V_{T, L}(0.5) &\coloneqq \lbrace \text{null } j : \Phi_{T, L}(j) > 0.5 \rbrace,
\label{eq: ordinary V_T_L}
\\[1em]
\widehat{V}_{T, L}^{\prime}(0.5) &\coloneqq \sum\limits_{t = 1}^{T} \dfrac{p - \sum\limits_{q = 1}^{p} \Phi_{t, L}(q)}{L - (t - 1)}
\label{eq: ordinary V_hat_prime_T_L}
\\[1em]
H_{T, L}(0.5) &\coloneqq \dfrac{V_{T, L}(0.5)}{\widehat{V}_{T, L}^{\prime}(0.5)}
\label{eq: ordinary H_T_L}
\end{align}
i.e., the counterparts of the dependency-aware expressions $V_{T, L}^{+}(0.5, \rho_{\thr}(u_{\cut}))$, $V_{T, L}^{\prime \, +}(0.5, \rho_{\thr}(u_{\cut}))$, and $H_{T, L}^{+}(0.5, \rho_{\thr}(u_{\cut}))$, respectively, we finally obtain
\begin{align}
\mathbb{E} \big[ H_{T, L}(v, \rho_{\thr}(u_{\cut})) \big] & \leq \mathbb{E} \big[ H_{T, L}(0.5, \rho_{\thr}(u_{\cut})) \big]
\\
& \leq \mathbb{E} \big[ H_{T, L}^{+}(0.5, \rho_{\thr}(u_{\cut})) \big]
\\
& \leq \mathbb{E} \big[ H_{T, L}^{+}(0.5, \rho_{\thr}(p)) \big]
\\
& = \mathbb{E} \big[ H_{T, L}^{+}(0.5, 1) \big]
\\
& = \mathbb{E} \big[ H_{T, L}(0.5) \big]
\\
& \leq 1,
\label{eq: proof - Theorem - Dependency-aware FDR control - 5}
\end{align}
where the first inequality is the same as in~\eqref{eq: proof - Theorem - Dependency-aware FDR control - 3}, the second and third inequalities follow from~\eqref{eq: upper bound on H_T_L(0.5, rho_thr)} and~\eqref{eq: proof - Theorem - Dependency-aware FDR control - 4}, respectively, and the equation in the fourth line follows from~\eqref{eq: cutoff dendrogram} (i.e., $\rho_{\thr}(p) = \sum_{u = 1}^{p} \Delta\rho_{\thr, u} = 1$). The equation in the fifth line is a consequence of the following: For $\rho_{\thr}(p) = 1$, we have $\Gr(j, 1) = \varnothing$ for all $j \in \lbrace 1, \ldots, p \rbrace$, and, therefore, it holds that $\Psi_{t, L}^{+}(j, 1) = 1$ for all $j \in \lbrace 1, \ldots, p \rbrace$. Thus, $H_{T, L}^{+}(0.5, 1)$ boils down to its ordinary counterpart $H_{T, L}(0.5)$ in~\eqref{eq: ordinary H_T_L}, i.e., $H_{T, L}^{+}(0.5, 1) = H_{T, L}(0.5)$. Finally, the proof of Inequality~\eqref{eq: proof - Theorem - Dependency-aware FDR control - 5} is omitted because it is exactly as in the proof of Theorem~1 (FDR control) in~\cite{machkour2021terminating}.
\label{proof: Theorem - Dependency-aware FDR control}
\end{proof}

\subsection{General Group Design Principle}
\label{subsec: General Group Design Principle}
The \textit{T-Rex+DA} selector is not only suitable for the considered binary tree graphs or dendrograms but also for various other dependency models. In fact, a closer look at Lemmas~\ref{Lemma: V_T,L monotonically increasing in rho} and~\ref{Lemma: V'_hat_T,L monotonically decreasing in rho}, which are essential for the proof of Theorem~\ref{Theorem: Dependency-aware FDR control} (dependency-aware FDR control), reveals that the following general design principle for the groups $\Gr(j, \rho_{\thr})$ can be derived from Lemmas~\ref{Lemma: V_T,L monotonically increasing in rho} and~\ref{Lemma: V'_hat_T,L monotonically decreasing in rho}:
\begin{thm}[Group design principle]
Consider the generic definition of the variable groups in Definition~\ref{definition: Dependency-aware relative occurrences}, i.e., $\Gr(j, \rho_{\thr}) \in \lbrace 1, \ldots, p \rbrace \backslash \lbrace j \rbrace$, $j = 1, \ldots, p$, $\rho_{\thr} \in [0, 1]$. If any $\rho_{1}$, $\rho_{2} \in [0, 1]$, $\rho_{2} > \rho_{1}$, satisfy
\begin{equation}
\Gr(j, \rho_{2}) \subseteq \Gr(j, \rho_{1}), \quad j = 1, \ldots, p,
\label{eq: Group design principle}
\end{equation}
then the \textit{T-Rex+DA} selector controls the FDR at the target level $\alpha \in [0, 1]$.
\label{Theorem: Group design principle}
\end{thm}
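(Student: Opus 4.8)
The plan is to show that the FDR-control argument for Theorem~\ref{Theorem: Dependency-aware FDR control} never uses the dendrogram beyond the single structural fact that a larger threshold shrinks the groups; replacing the hierarchical design by \emph{any} family satisfying the nesting condition~\eqref{eq: Group design principle} then leaves the whole derivation intact. Concretely, I would revisit the inequality chain culminating in~\eqref{eq: proof - Theorem - Dependency-aware FDR control - 5} and isolate its only group-dependent ingredients, namely Lemmas~\ref{Lemma: V_T,L monotonically increasing in rho} and~\ref{Lemma: V'_hat_T,L monotonically decreasing in rho}. Everything else---the bound $\FDP \le \alpha\, H_{T,L}(v,\rho_{\thr}(u_{\cut}))$ obtained from the $v$-stopping-time condition~\eqref{eq: stopping time v}, the supermartingale property in the voting level from Lemma~\ref{Lemma: T-Rex+DA super-martingale} together with Doob's optional stopping theorem, and the terminal bound $\mathbb{E}[H_{T,L}(0.5)]\le 1$ imported from Theorem~1 of~\cite{machkour2021terminating}---refers only to the fixed relative occurrences and to the filtration $\mathcal{F}_v$ indexed by $v$, not to the group geometry, so it carries over verbatim.

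The core step is therefore to re-derive Lemmas~\ref{Lemma: V_T,L monotonically increasing in rho} and~\ref{Lemma: V'_hat_T,L monotonically decreasing in rho} from the nesting hypothesis alone. First I would show that the penalty factor $\Psi_{T,L}^{+}(j,\rho_{\thr})$ of~\eqref{eq: dependency factor plus} is non-decreasing in $\rho_{\thr}$: for $\rho_2>\rho_1$ with $\Gr(j,\rho_2)\subseteq\Gr(j,\rho_1)$, the minimum of $|\Phi_{T,L}(j)-\Phi_{T,L}(j')|$ taken over the smaller index set $\Gr(j,\rho_2)$ is at least as large as the one over $\Gr(j,\rho_1)$, so $2-\min$ is no larger and hence $\Psi_{T,L}^{+}(j,\rho_2)\ge\Psi_{T,L}^{+}(j,\rho_1)$; the cases where one or both groups are empty are absorbed by $\Psi_{T,L}^{+}\in[1/2,1]$ with value $1$ on an empty group. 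Since the product $\Psi_{T,L}^{+}(j,\rho_{\thr})\,\Phi_{T,L}(j)$ then increases pointwise in $\rho_{\thr}$, more null variables clear the voting level, giving $V_{T,L}^{+}(0.5,\cdot)$ non-decreasing (Lemma~\ref{Lemma: V_T,L monotonically increasing in rho}), while the numerator $p-\sum_q\Psi_{t,L}^{+}(q,\rho_{\thr})\Phi_{t,L}(q)$ shrinks, giving $\widehat{V}_{T,L}^{\prime\,+}(0.5,\cdot)$ non-increasing (Lemma~\ref{Lemma: V'_hat_T,L monotonically decreasing in rho}). Consequently $H_{T,L}^{+}(0.5,\rho_{\thr})=V_{T,L}^{+}/\widehat{V}_{T,L}^{\prime\,+}$ is non-decreasing in $\rho_{\thr}$, which is precisely the monotonicity~\eqref{eq: proof - Theorem - Dependency-aware FDR control - 4} invoked in the original proof.

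The remaining---and most delicate---point is the terminal reduction to the ordinary T-Rex quantity. In the dendrogram proof this is the identity $H_{T,L}^{+}(0.5,1)=H_{T,L}(0.5)$, which rests on the dendrogram-specific fact that at the strictest cut one has $\Gr(j,1)=\varnothing$ for every $j$. Because the nesting hypothesis by itself need not force the groups to be empty at $\rho_{\thr}=1$, I expect this to be the main obstacle. The clean fix is to bypass the boundary altogether and use the pointwise inequality $\Psi_{T,L}^{+}(j,\rho_{\thr})\le 1$, whose right-hand side is the unit penalty of the ordinary selector: it yields $V_{T,L}^{+}(0.5,\rho_{\thr})\le V_{T,L}(0.5)$ and $\widehat{V}_{T,L}^{\prime\,+}(0.5,\rho_{\thr})\ge\widehat{V}_{T,L}^{\prime}(0.5)$, hence $H_{T,L}^{+}(0.5,\rho_{\thr})\le H_{T,L}(0.5)$ for \emph{every} $\rho_{\thr}$, so that $\mathbb{E}[H_{T,L}(0.5)]\le 1$ closes the chain without appeal to $\Gr(j,1)=\varnothing$. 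Alternatively, if one adopts the natural convention that groups vanish at the strictest threshold, the monotonicity established above lets one pass to $\rho_{\thr}=1$ and invoke the original identity directly. Either route reproduces the chain~\eqref{eq: proof - Theorem - Dependency-aware FDR control - 5} and yields $\FDR(v,\rho_{\thr}(u_{\cut}),T,L)\le\alpha$, as claimed.
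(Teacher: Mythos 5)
Your proposal is correct, and it is in fact more careful than the paper's own proof. The paper's argument for Theorem~\ref{Theorem: Group design principle} is exactly your opening reduction: the group design enters the proof of Theorem~\ref{Theorem: Dependency-aware FDR control} only through Lemmas~\ref{Lemma: V_T,L monotonically increasing in rho} and~\ref{Lemma: V'_hat_T,L monotonically decreasing in rho}, and those lemmas use nothing beyond the inclusion $\Gr(j, \rho_{\thr}(u_{\cut}+1)) \subseteq \Gr(j, \rho_{\thr}(u_{\cut}))$ (a minimum over a subset is no smaller than a minimum over its superset), which is precisely how you re-derive them from~\eqref{eq: Group design principle}. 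Where you go beyond the paper is the point you single out as delicate: the chain~\eqref{eq: proof - Theorem - Dependency-aware FDR control - 5} does not rest on those two lemmas alone, but also on the dendrogram-specific boundary facts $\rho_{\thr}(p) = \sum_{u=1}^{p}\Delta\rho_{\thr,u} = 1$ and $\Gr(j,1) = \varnothing$ for all $j$, which justify the terminal identification $H_{T,L}^{+}(0.5,1) = H_{T,L}(0.5)$; neither fact is implied by the nesting hypothesis (a constant family of nonempty nested groups satisfies~\eqref{eq: Group design principle} and violates both), so the paper's two-sentence proof is incomplete as stated. Your repair is valid: since $\Psi_{T,L}^{+}(j,\rho_{\thr}) \leq 1$ by~\eqref{eq: dependency factor plus}, one gets $V_{T,L}^{+}(0.5,\rho_{\thr}) \leq V_{T,L}(0.5)$ and $\widehat{V}_{T,L}^{\prime\,+}(0.5,\rho_{\thr}) \geq \widehat{V}_{T,L}^{\prime}(0.5)$ pointwise, hence $H_{T,L}^{+}(0.5,\rho_{\thr}) \leq H_{T,L}(0.5)$ for every $\rho_{\thr}$, and the group-independent ingredients (the stopping-time bound, Lemma~\ref{Lemma: T-Rex+DA super-martingale} with Doob's optional stopping theorem, and $\mathbb{E}[H_{T,L}(0.5)] \leq 1$ imported from the ordinary \textit{T-Rex} proof) close the argument.

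One consequence worth noting: once your terminal bound is in place, Lemmas~\ref{Lemma: V_T,L monotonically increasing in rho} and~\ref{Lemma: V'_hat_T,L monotonically decreasing in rho} are never invoked, so your argument actually establishes the FDR inequality for an \emph{arbitrary} group design $\Gr(j,\rho_{\thr})$, nested or not. Theorem~\ref{Theorem: Group design principle}, which asserts nesting as a sufficient condition, follows a fortiori, but the monotonicity machinery---and with it the hypothesis~\eqref{eq: Group design principle}---plays no role in your proof of the FDR bound itself; it matters for the paper's route only because the paper insists on passing through the boundary value $\rho_{\thr} = 1$.
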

\begin{proof}
The FDR control property in Theorem~\ref{Theorem: Dependency-aware FDR control} holds if Lemmas~\ref{Lemma: V_T,L monotonically increasing in rho} and~\ref{Lemma: V'_hat_T,L monotonically decreasing in rho} hold. Lemmas~\ref{Lemma: V_T,L monotonically increasing in rho} and~\ref{Lemma: V'_hat_T,L monotonically decreasing in rho} hold for any definition of $\Gr(j, \rho_{\thr})$ that satisfies the group design principle in Theorem~\ref{eq: Group design principle}.
\label{proof: Lemma - Group design principle}
\end{proof}

Loosely speaking, Theorem~\ref{Theorem: Group design principle} states that the cardinalities of any variable group $j$ must be monotonically decreasing in $\rho_{\thr}$ and follow the subset structure illustrated in Figure~\ref{fig: Illustration of group design principle}. Thus, any dependency model (e.g., graph models, time series models, equicorrelated models, etc.) that follows the design principle in Theorem~\ref{Theorem: Group design principle} can be incorporated into the \textit{T-Rex+DA} selector. This property makes the \textit{T-Rex+DA} selector a versatile method that can cope with various dependency models.
%
\begin{figure}
\centering
\begin{tikzpicture}[scale=0.9,transform shape]
\draw [thick, fill=none] (0,0) arc (-90:270:1.3cm and 0.75cm);
\draw [thick, fill=none] (0.25,-0.2) arc (-90:270:2.3cm and 1.25cm);
\draw [thick, fill=none] (0.5,-0.4) arc (-90:270:3.15cm and 1.7cm);
\draw [thick, fill=none] (0.75,-0.6) arc (-90:270:4.1cm and 2.35cm);
\node [xshift=0cm, yshift=0.7cm] (0,0) {\small $\Gr(j, \rho_{\thr} = 1)$};
\node [xshift=1.25cm, yshift=1.65cm] (0,0) {\small $\Gr(j, \rho_{\thr})$};
\node [xshift=2cm, yshift=2.45cm] (0,0) {\small $\iddots$};
\node [xshift=2.6cm, yshift=3.15cm] (0,0) {\small $\Gr(j, \rho_{\thr} = 0)$};
\end{tikzpicture}
\caption{Illustration of the group design principle for dependency-aware FDR control in Theorem~\ref{Theorem: Group design principle}.}
\label{fig: Illustration of group design principle}
\end{figure}
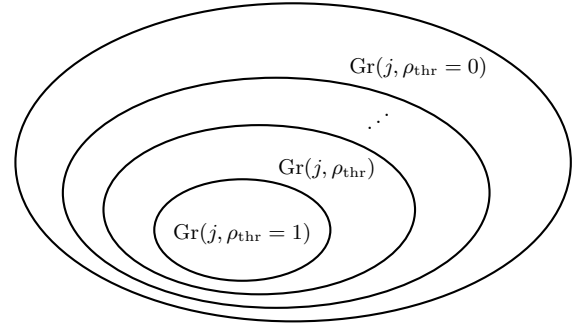
%

\section{Optimal Dependency-Aware T-Rex Algorithm}
\label{sec: Optimal Dependency-Aware T-Rex Algorithm}
In this section, we propose an efficient calibration algorithm and prove that it yields an optimal solution of~\eqref{eq: optimization problem - FDR-controlled variable selection}. That is, it optimally calibrates the parameters $v$, $\rho_{\thr}(u_{\cut})$, and $T$ of the proposed \textit{T-Rex+DA} selector, such that the FDR is controlled at the target level while maximizing the number of selected variables.The pseudocode of the proposed \textit{T-Rex+DA} calibration algorithm is given in Algorithm~\ref{algorithm: Extended T-Rex+DA}. An open source implementation of the proposed calibration algorithm for the \textit{T-Rex+DA} selector is available within the R package `TRexSelector' on CRAN~\cite{machkour2022TRexSelector}.
\begin{algorithm}[h]
\caption{Extended \textit{T-Rex+DA} Calibration.}
\begin{alglist}
\item \textbf{Input}: $\alpha \in [0, 1]$, $\X$, $\y$, $K$, $\tilde{v}$, $\tilde{\rho}_{\thr}$, $L_{\max}$, $T_{\max}$.
\label{algorithm: Extended T-Rex+DA Step 1}
\item \textbf{Set} $L = p$, $T = 1$.
\label{algorithm: Extended T-Rex+DA Step 2}
\item \textbf{While} $\widehat{\FDP}(v = \tilde{v}, \rho_{\thr}(u_{\cut}) = \tilde{\rho}_{\thr}, T, L) > \alpha$ and\\
$L \leq L_{\max}$ \textbf{do}:
\label{algorithm: Extended T-Rex+DA Step 3}
\begin{enumerate}
\item[] \textbf{Set} $L \leftarrow L + p$.
\end{enumerate}
\item \textbf{Set} $\Delta v \!=\! \dfrac{1}{K}$, $\widehat{\FDP}(v = 1 - \Delta v, \rho_{\thr}(u_{\cut}) = \tilde{\rho}_{\thr}, T, L) = 0$.
\label{algorithm: Extended T-Rex+DA Step 4}
\item \textbf{While} $\widehat{\FDP}(v = 1 - \Delta v, \rho_{\thr}(u_{\cut}) = \tilde{\rho}_{\thr}, T, L) \leq \alpha$ and\\
$T \leq T_{\max}$ \textbf{do}:
\begin{enumerate}[label*=\arabic*.]
\item[5.1.] \textbf{For} $v = 0.5, 0.5 + \Delta v, 0.5 + 2 \cdot \Delta v, \ldots, 1 - \Delta v$ \textbf{do}:
\begin{enumerate}[label*=\arabic*.]
\item[5.1.1.] \textbf{For} $u_{\cut} = 1, \ldots, p$ \textbf{do}:
\begin{enumerate}
\item[i.] \textbf{Compute} $\widehat{\FDP}(v, \rho_{\thr}(u_{\cut}), T, L)$ as in Def.~\ref{definition: dependency-aware FDP estimator}.
\item[ii.] \textbf{If} $\widehat{\FDP}(v, \rho_{\thr}(u_{\cut}), T, L) \leq \alpha$
\begin{enumerate}
\item[] \textbf{Compute} $\widehat{\A}_{L}(v, \rho_{\thr}(u_{\cut}), T)$ as in~\eqref{eq: selected active set T-Rex+DA}.
\end{enumerate}
\textbf{Else}
\begin{enumerate}
\item[] \textbf{Set} $\widehat{\A}_{L}(v, \rho_{\thr}(u_{\cut}), T) = \varnothing$.
\end{enumerate}
\end{enumerate}
\end{enumerate}
\item[5.2.] \textbf{Set} $T \leftarrow T + 1$.
\end{enumerate}
\label{algorithm: Extended T-Rex+DA Step 5}
\item \textbf{Solve}
\begin{alignat*}{2}
& \max_{v^{\prime}, \rho_{\thr}(u_{\cut}^{\prime}), T^{\prime}} &{}& \, \big| \widehat{\A}_{L}(v^{\prime}, \rho_{\thr}(u_{\cut}^{\prime}), T^{\prime}) \big| \\
& \text{subject to} &{}& T^{\prime} \in \lbrace 1, \ldots, T - 1 \rbrace \\
&{}&{}& u_{\cut}^{\prime} \in \lbrace 1, \ldots, p \rbrace \\
&{}&{}& v^{\prime} \in \lbrace  0.5, 0.5 + \Delta v, 0.5 + 2 \Delta v, \ldots, 1 - \Delta v \rbrace
\end{alignat*}
and let $(v^{*}, \rho_{\thr}(u_{\cut}^{*}), T^{*})$ be a solution.
\label{algorithm: Extended T-Rex+DA Step 6}
\item \textbf{Output}: $(v^{*}, \rho_{\thr}(u_{\cut}^{*}), T^{*}, L)$ and $\widehat{\A}_{L}(v^{*}, \rho_{\thr}(u_{\cut}^{*}), T^{*})$.
\label{algorithm: Extended T-Rex+DA Step 7}
\end{alglist}
\label{algorithm: Extended T-Rex+DA}
\end{algorithm}

First, some hyperparameters, which are relevant for managing the tradeoff between achieving a high TPR, the memory consumption, and computation time but have no influence on the FDR control property of the proposed method, are set. Throughout this paper, the hyperparameters are chosen based on the suggestions in~\cite{machkour2021terminating} as follows: $\tilde{v} = 0.75$, $\tilde{\rho}_{\thr} = \rho_{\thr}(\lfloor 0.75 \cdot p \rceil)$, $L_{\max} = 10p$, $T_{\max} = \lceil n / 2 \rceil$, where $\lfloor \cdot \rceil$ and $\lceil \cdot \rceil$ denote rounding towards the nearest integer and the nearest higher integer, respectively. Moreover, it was shown for various applications and in extensive simulations that there are no improvements for $K > 20$ random experiments~\cite{machkour2021terminating,machkour2022TRexGVS,machkour2023ScreenTRex} and, therefore, we set $K = 20$.

The algorithm proceeds as follows: It takes the user-defined target FDR, the original predictor matrix $\X$, and the response vector $\y$ as inputs. Then, it determines $L$ via a loop that adds $p$ dummies in each iteration until $\widehat{\FDP}(v = \tilde{v}, \rho_{\thr}(u_{\cut}) = \tilde{\rho}_{\thr}, T = 1, L)$ falls below the target FDR level $\alpha$ at a reference point $(v, \rho_{\thr}(u_{\cut}), T) = (\tilde{v}, \tilde{\rho}_{\thr}, 1)$ or $L$ reaches the maximum allowed value $L_{\max}$. This guarantees that the FDR is controlled as tightly as possible at the target FDR level $\alpha$ while ensuring that the TPR is as high as possible. Supposing that there exists no $\tilde{\rho}_{\thr}^{\prime} \neq \tilde{\rho}_{\thr}$ that satisfies $\widehat{\FDP}(v = 1 - \Delta v, \rho_{\thr}(u_{\cut}) = \tilde{\rho}_{\thr}^{\prime}, T, L) < \widehat{\FDP}(v = 1 - \Delta v, \rho_{\thr}(u_{\cut}) = \tilde{\rho}_{\thr}, T, L)$, the algorithm proceeds by increasing the number of included dummies $T$ at a reference point $(v, \rho_{\thr}(u_{\cut})) = (1 - \Delta v, \tilde{\rho}_{\thr})$, where $\Delta v = 1 / K$, until $\widehat{\FDP}(v = 1 - \Delta v, \rho_{\thr}(u_{\cut}) = \tilde{\rho}_{\thr}, T, L)$ exceeds the target level $\alpha$ or reaches the maximum allowed number of included dummies $T_{\max}$. Fixing the optimized parameters $L$, $T$ and the corresponding FDR-controlled variable sets, the algorithm then determines the optimal values of $v$ and $\rho_{\thr}(u_{\cut})$ that maximize the number of selected variables by solving the optimization problem in~\eqref{algorithm: Extended T-Rex+DA Step 6}. Finally, the obtained solution $(v^{*}, \rho_{\thr}(u_{\cut}^{*}), T^{*}, L)$ yields the FDR-controlled set of selected variables
\begin{align}
\widehat{\A}_{L}(v^{*}, \rho_{\thr}(u_{\cut}^{*}), T^{*}) = \big\lbrace j : \Phi_{T^{*}, L}^{\DA}(j, \rho_{\thr}(u_{\cut}^{*})) > v^{*} \big\rbrace.\quad
\label{eq: selected active set T-Rex+DA}
\end{align}

In the following, we state and prove that Algorithm~\ref{algorithm: Extended T-Rex+DA} yields an optimal solution of~\eqref{eq: optimization problem - FDR-controlled variable selection}:

\begin{thm}[Optimal Dependency-Aware Calibration]
Suppose that $L$, as obtained by Algorithm~\ref{algorithm: Extended T-Rex+DA}, is fixed. Then, any triple $(v^{*}, \rho_{\thr}(u_{\cut}^{*}), T^{*})$ of a quadruple $(v^{*}, \rho_{\thr}(u_{\cut}^{*}), T^{*}, L)$, as obtained by Algorithm~\ref{algorithm: Extended T-Rex+DA}, is an optimal solution of~\eqref{eq: optimization problem - FDR-controlled variable selection}.
\label{Theorem: Optimal Dependency-Aware Calibration Algorithm}
\end{thm}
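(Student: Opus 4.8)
The plan is to prove that Algorithm~\ref{algorithm: Extended T-Rex+DA} performs an exhaustive search over a finite set that provably contains a global maximizer of the continuous problem~\eqref{eq: optimization problem - FDR-controlled variable selection}, so that the cardinality maximization in Step~\ref{algorithm: Extended T-Rex+DA Step 6} returns an optimal triple. I would split the argument into a \emph{feasibility} part and an \emph{optimality} part. Feasibility is immediate from the construction: in Step~\ref{algorithm: Extended T-Rex+DA Step 5} a nonempty set $\widehat{\A}_{L}(v, \rho_{\thr}(u_{\cut}), T)$ is stored only when $\widehat{\FDP}(v, \rho_{\thr}(u_{\cut}), T, L) \leq \alpha$ (Definition~\ref{definition: dependency-aware FDP estimator}), the infeasible configurations being overwritten by $\varnothing$. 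Since Step~\ref{algorithm: Extended T-Rex+DA Step 6} maximizes $|\widehat{\A}_{L}|$ over the stored sets and $R_{T, L} = |\widehat{\A}_{L}|$, the returned triple is feasible whenever any feasible configuration exists, and the claim is vacuous otherwise.

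For optimality I would reduce each coordinate to the finite set the algorithm sweeps. The cutoff level $u_{\cut}$ is intrinsically discrete, ranging over $\lbrace 1, \ldots, p \rbrace$, and Step~\ref{algorithm: Extended T-Rex+DA Step 5} enumerates all of its values, so no reduction is needed there. For the number of included dummies $T$, I would mirror the optimality argument of Theorem~3 in~\cite{machkour2021terminating}: by the monotonicity of $\widehat{\FDP}$ in $T$ (in particular at the highest effective voting level $v = 1 - 1/K$), once $\widehat{\FDP}(1 - 1/K, \tilde{\rho}_{\thr}, T, L)$ exceeds $\alpha$, increasing $T$ only inflates the estimator and shrinks the feasible region, so the maximal attainable $R_{T, L}$ cannot grow. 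Together with the minimality of the FDP estimator at $\tilde{\rho}_{\thr}$ assumed in the algorithm, this shows every $T$ capable of producing a feasible nonempty set lies in the enumerated range $\lbrace 1, \ldots, T - 1 \rbrace$, while $T_{\max}$ merely caps the search for computational reasons.

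The crux is the reduction of the continuous voting threshold $v \in [0.5, 1)$ to the grid $\lbrace 0.5, 0.5 + 1/K, \ldots, 1 - 1/K \rbrace$, and this is the step I expect to require the most care. I would first establish that, for fixed $(\rho_{\thr}(u_{\cut}), T)$, the objective $R_{T, L}(v, \rho_{\thr}(u_{\cut}))$ is non-increasing in $v$ and that the feasible set $\lbrace v : \widehat{\FDP}(v, \rho_{\thr}(u_{\cut}), T, L) \leq \alpha \rbrace$ is an upper interval $[v^{*}, 1)$, so that the continuous optimum over $v$ is attained at $v^{*} = \inf\lbrace \nu \in [0.5, 1) : \widehat{\FDP}(\nu, \rho_{\thr}(u_{\cut}), T, L) \leq \alpha \rbrace$, which is exactly the stopping time~\eqref{eq: stopping time v} underlying the FDR guarantee of Theorem~\ref{Theorem: Dependency-aware FDR control}. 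The remaining and most delicate point is to certify that this optimum is attained on the algorithm's grid. In the ordinary \textit{T-Rex} selector the relative occurrences $\Phi_{T, L}(j)$ are exact multiples of $1/K$, so any threshold strictly between two grid points selects the same set as the lower grid point and the grid loses nothing; here, however, the dependency-aware occurrences $\Phi_{T, L}^{\DA}(j, \rho_{\thr}) = \Psi_{T, L}(j, \rho_{\thr}) \cdot \Phi_{T, L}(j)$ are shifted off the $1/K$-grid by the multiplicative penalty $\Psi_{T, L}(j, \rho_{\thr}) \in [0.5, 1]$, so the breakpoints of the selection rule need no longer align with the grid and the argument of~\cite{machkour2021terminating} does not transfer verbatim. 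This is the main obstacle, and I would attack it by analyzing the order statistics of $\lbrace \Phi_{T, L}^{\DA}(j, \rho_{\thr}(u_{\cut})) \rbrace_{j}$ at the critical configuration and showing that the smallest feasible grid point realizes the same selected set as $v^{*}$, i.e., that no active index is dropped when $v^{*}$ is rounded up to the grid.

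Finally, I would assemble the pieces. The three reductions show that the finite grid over $(v, \rho_{\thr}(u_{\cut}), T)$ swept in Step~\ref{algorithm: Extended T-Rex+DA Step 5} contains a global maximizer of~\eqref{eq: optimization problem - FDR-controlled variable selection}. Since Step~\ref{algorithm: Extended T-Rex+DA Step 6} returns a configuration of maximal $|\widehat{\A}_{L}|$ among the stored feasible sets and $R_{T, L} = |\widehat{\A}_{L}|$, the returned triple $(v^{*}, \rho_{\thr}(u_{\cut}^{*}), T^{*})$ attains the continuous optimum; combined with the feasibility part, this completes the proof and, moreover, shows that \emph{any} maximizer produced in Step~\ref{algorithm: Extended T-Rex+DA Step 6} is optimal.
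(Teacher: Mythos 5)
Your skeleton is the same as the paper's: feasibility of the returned triple is immediate from the construction of Steps~\ref{algorithm: Extended T-Rex+DA Step 5}--\ref{algorithm: Extended T-Rex+DA Step 6} (infeasible configurations are overwritten by $\varnothing$, and $|\widehat{\A}_{L}|=R_{T,L}$ on feasible ones), and optimality is argued coordinate-wise: $u_{\cut}$ is fully enumerated, $T$ is bounded via the monotonicity of $\widehat{\FDP}$ in $T$ at the reference voting level (the paper formalizes this through $T_{\fin}$, defined by $\widehat{\FDP}(1-\Delta v,\rho_{\thr}(u_{\cut}),T_{\fin},L)\le\alpha<\widehat{\FDP}(1-\Delta v,\rho_{\thr}(u_{\cut}),T_{\fin}+1,L)$), and the continuous $v\in[0.5,1)$ is reduced to the $1/K$-grid. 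The problem is that your proof stops exactly at the step you yourself call the crux: you only announce a plan (``analyze the order statistics \ldots\ show that the smallest feasible grid point realizes the same selected set as $v^{*}$''), and that plan cannot succeed as stated at finite $K$. Writing $\Phi_{T,L}(j)=n_{j}/K$ and letting $m_{j}/K$ denote the minimal within-group gap, the selection breakpoints are $\Phi_{T,L}^{\DA}(j,\rho_{\thr})=n_{j}/(2K-m_{j})$, and two such values can fall strictly inside one grid cell (for $K=20$, e.g., $19/33\approx 0.576$ and $20/34\approx 0.588$ both lie in $(0.55,0.6)$). Since $\widehat{\FDP}$ depends on $v$ only through the selected set, the feasibility infimum $v^{*}$ is itself such a breakpoint; if $v^{*}=19/33$ in the example, then $R_{T,L}(v^{*},\rho_{\thr})$ strictly exceeds $R_{T,L}(v,\rho_{\thr})$ at every feasible grid point $v$, i.e., rounding $v^{*}$ up to the grid \emph{does} drop selected indices. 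So your $v$-reduction is a genuine gap, not merely a delicate step.

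To your credit, you have identified precisely the point that the paper itself does not prove: its one-sentence claim that the $v$-grid is ``adapted to $K$,'' so that off-grid values of $R_{T,L}$ are also attained on-grid, is imported from the ordinary \textit{T-Rex} selector, where all relative occurrences are exact multiples of $1/K$ and selected sets cannot change between grid points; the multiplicative penalty $\Psi_{T,L}\in[0.5,1]$ destroys exactly this property, as you observe. The claim survives either asymptotically---as $K\rightarrow\infty$ the grid becomes dense in $[0.5,1)$, and the paper's proof already invokes $K\rightarrow\infty$ to identify $1-\Delta v$ with the supremum of $[0.5,1)$---or after a small algorithmic modification, e.g., enlarging the threshold grid in Step 5.1 by the $p$ attained values $\Phi_{T,L}^{\DA}(j,\rho_{\thr}(u_{\cut}))$, which makes grid-sufficiency exact at finite $K$. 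Stating and using one of these two repairs is what is missing; without it, neither your proposal nor a verbatim transfer of the original \textit{T-Rex} grid argument closes the proof.
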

\begin{proof}
From Equation~\eqref{eq: number of selected variables}, it follows that for all quadruples $(v, \rho_{\thr}(u_{\cut}), T, L)$ that satisfy $\widehat{\FDP}(v, \rho_{\thr}(u_{\cut}), T, L) \leq \alpha$, the objective functions in Step~\ref{algorithm: Extended T-Rex+DA Step 6} of Algorithm~\ref{algorithm: Extended T-Rex+DA} and in the optimization problem in~\eqref{eq: optimization problem - FDR-controlled variable selection} are equal, i.e., $| \widehat{\A}_{L}(v^{\prime}, \rho_{\thr}(u_{\cut}^{\prime}), T^{\prime}) | = R_{T^{\prime}, L}(v^{\prime}, \rho_{\thr}(u_{\cut}^{\prime}))$. Therefore, and since all attainable values of $u_{\cut}$ are considered in Step~\ref{algorithm: Extended T-Rex+DA Step 6} of Algorithm~\ref{algorithm: Extended T-Rex+DA}, it suffices to show that for fixed $\rho_{\thr}(u_{\cut})$ and $L$, the set of feasible tuples $(v, T)$ of~\eqref{eq: optimization problem - FDR-controlled variable selection} is a subset of or equal to the set of feasible tuples obtained by Algorithm~\ref{algorithm: Extended T-Rex+DA}. Since, ceteris paribus, $\widehat{\FDP}(v, \rho_{\thr}(u_{\cut}), T, L)$ is monotonically decreasing in $v$, and monotonically increasing in $T$~\cite{machkour2021terminating}, for $T = T_{\fin}$, $\widehat{\FDP}(v, \rho_{\thr}(u_{\cut}), T, L)$ attains its minimum value at $(v, T) = (1 - \Delta v, T_{\fin})$, where $T_{\fin} \in \lbrace 1, \ldots, L \rbrace$ is implicitly defined through the inequalities
\begin{equation}
\widehat{\FDP}(1 - \Delta v, \rho_{\thr}(u_{\cut}), T_{\fin}, L) \leq \alpha
\label{eq: proof - Theorem - Optimal Dependency-Aware Calibration Algorithm 1}
\end{equation}
and
\begin{equation}
\widehat{\FDP}(1 - \Delta v, \rho_{\thr}(u_{\cut}), T_{\fin} + 1, L) > \alpha.
\label{eq: proof - Theorem - Optimal Dependency-Aware Calibration Algorithm 2}
\end{equation}
Thus, the feasible set of the optimization problem in~\eqref{eq: optimization problem - FDR-controlled variable selection} is given by
\begin{align}
\big\lbrace (v, & T) : \widehat{\FDP}(v, \rho_{\thr}(u_{\cut}), T, L) \leq \alpha \rbrace
\\
& \qquad= \lbrace  (v, T) : v \in [ 0.5, 1 - \Delta v ],
\\
& \qquad\qquad\qquad\quad\, T \in \lbrace 1, \ldots, T_{\fin} \rbrace,
\\
& \qquad\qquad\qquad\quad\, \widehat{\FDP}(v, \rho_{\thr}(u_{\cut}), T, L) \leq \alpha \big\rbrace.
\label{eq: proof - Theorem - Optimal Dependency-Aware Calibration Algorithm 3}
\end{align}
Since $\Delta v = 1 / K$, the upper endpoint of the interval $[0.5, 1 - \Delta v]$ asymptotically (i.e., $K \rightarrow \infty$) coincides with the supremum of the interval $[0.5, 1)$. That is, the set in~\eqref{eq: proof - Theorem - Optimal Dependency-Aware Calibration Algorithm 3} contains all feasible solutions of~\eqref{eq: optimization problem - FDR-controlled variable selection}. However, since the $v$-grid in Algorithm~\ref{algorithm: Extended T-Rex+DA} is, as in \cite{machkour2021terminating}, adapted to $K$, all values of $R_{T, L}(v, \rho_{\thr}(u_{\cut}))$ that are attained by off-grid solutions can also be attained by on-grid solutions. Thus, instead of~\eqref{eq: proof - Theorem - Optimal Dependency-Aware Calibration Algorithm 3} only the following fully discrete feasible set of~\eqref{eq: optimization problem - FDR-controlled variable selection} needs to be considered:
\begin{align}
\big\lbrace  (v, T)  : 
\,\,& v \in \lbrace 0.5, 0.5 + \Delta v, 0.5 + 2 \Delta v, \ldots, 1 - \Delta v \rbrace,
\\
& T \in \lbrace 1, \ldots, T_{\fin} \rbrace,
\\
& \widehat{\FDP}(v, \rho_{\thr}(u_{\cut}), T, L) \leq \alpha \big\rbrace.
\label{eq: proof - Theorem - Optimal Dependency-Aware Calibration Algorithm 4}
\end{align}
Since the ``while''-loop in Step~\ref{algorithm: Extended T-Rex+DA Step 5} of Algorithm~\ref{algorithm: Extended T-Rex+DA} is terminated when $T = T_{\fin} + 1$, the feasible set of the optimization problem in Step~\ref{algorithm: Extended T-Rex+DA Step 6} of Algorithm~\ref{algorithm: Extended T-Rex+DA} is given by
\begin{align}
\big\lbrace  (v, T) : 
\,\,& v \in \lbrace 0.5, 0.5 + \Delta v, 0.5 + 2 \Delta v, \ldots, 1 - \Delta v \rbrace,
\\
& T \in \lbrace 1, \ldots, T_{\fin} \rbrace
\\
& \widehat{\FDP}(v, \rho_{\thr}(u_{\cut}), T, L) \leq \alpha \big\rbrace,
\label{eq: proof - Theorem - Optimal Dependency-Aware Calibration Algorithm 5}
\end{align}
which is equal to~\eqref{eq: proof - Theorem - Optimal Dependency-Aware Calibration Algorithm 4}.
\label{proof: Theorem - Optimal Dependency-Aware Calibration Algorithm}
\end{proof}

\section{Numerical Experiments}
\label{sec: Numerical Experiments}
In this section, we verify the FDR control property of the proposed \textit{T-Rex+DA} selector via numerical experiments and compare its performance against three state-of-the-art methods for high-dimensional data, i.e., \textit{model-X} knockoff~\cite{candes2018panning}, \textit{model-X} knockoff+~\cite{candes2018panning}, and \textit{T-Rex} selector~\cite{machkour2021terminating}.
%
\begin{figure*}[!htbp]
  \centering
  \subfloat[]{
  		\scalebox{1}{
  			\includegraphics[width=0.37\linewidth]{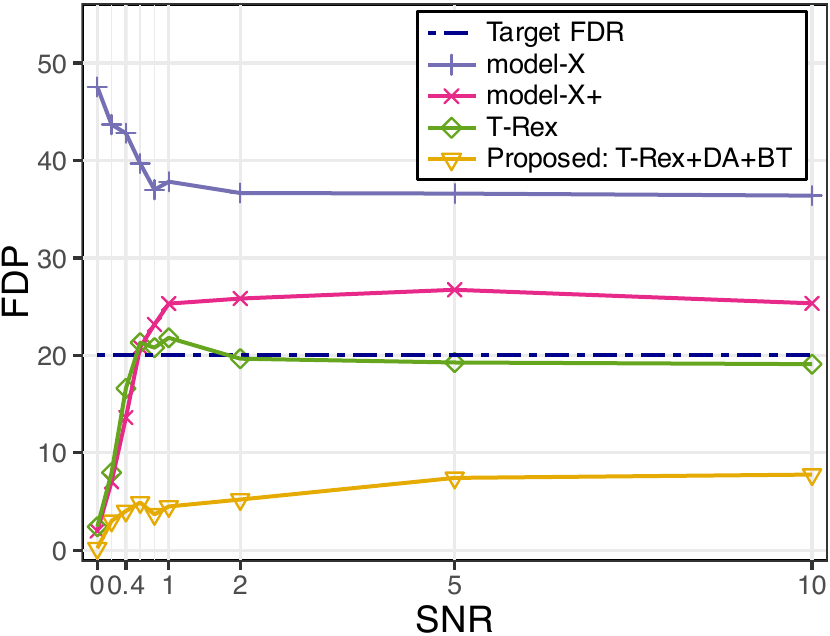}
  		}
   		\label{fig: FDP_vs_SNR_tFDR_20_n_150_rho_07}
   }
	\hspace*{2em}
  \subfloat[]{
  		\scalebox{1}{
  			\includegraphics[width=0.3765\linewidth]{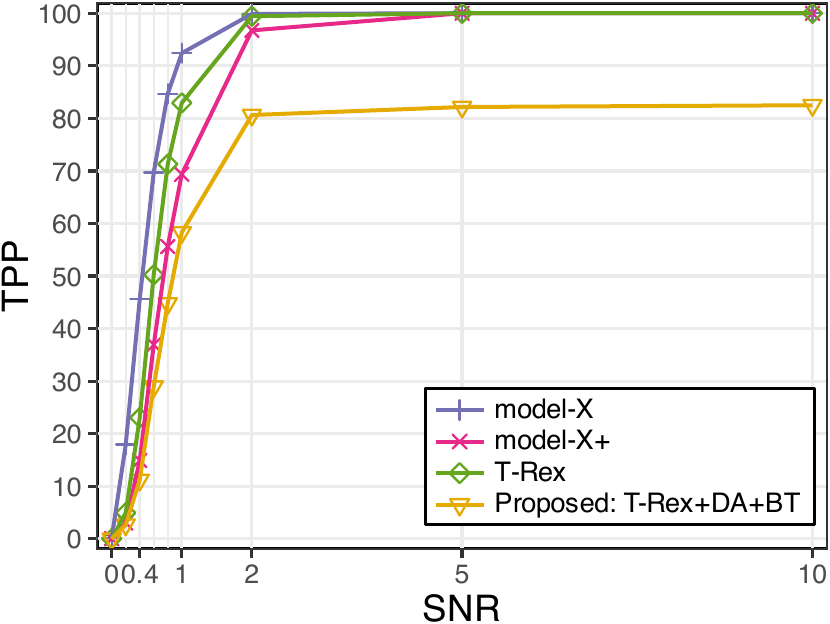}
  		}
   		\label{fig: TPP_vs_SNR_tFDR_20_n_150_rho_07}
   }
   \\
\vspace{-0.8em}
  \subfloat[]{
  		\scalebox{1}{
  			\includegraphics[width=0.37\linewidth]{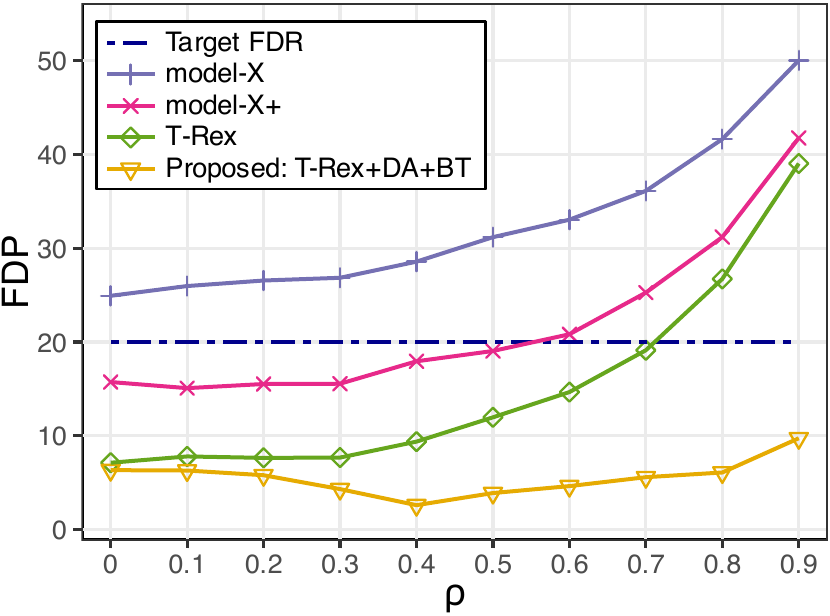}
  		}
   		\label{fig: FDP_vs_rho_tFDR_20_n_150_rho_07}
   }
	\hspace*{2em}
  \subfloat[]{
  		\scalebox{1}{
  			\includegraphics[width=0.3765\linewidth]{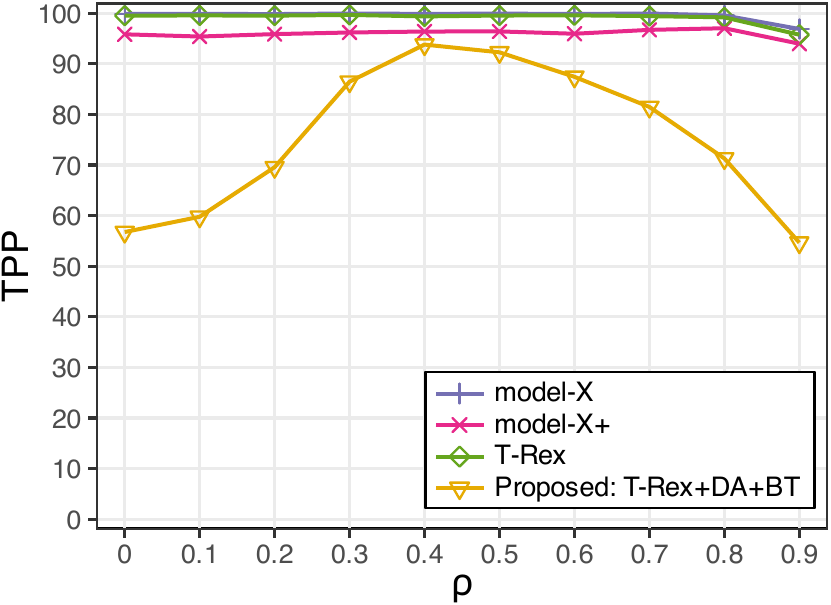}
  		}
   		\label{fig: TPP_vs_rho_tFDR_20_n_150_rho_07}
   }
   \\
\vspace{-0.8em}
  \subfloat[]{
  		\scalebox{1}{
  			\includegraphics[width=0.37\linewidth]{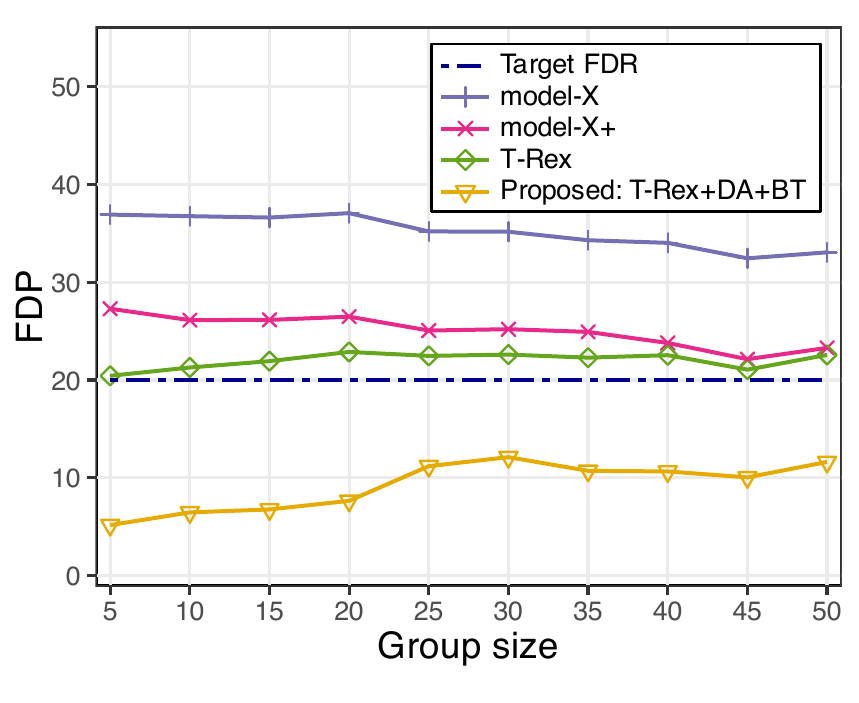}
  		}
   		\label{fig: FDP_vs_groupSize_tFDR_20_n_150_rho_07}
   }
	\hspace*{2em}
  \subfloat[]{
  		\scalebox{1}{
  			\includegraphics[width=0.3765\linewidth]{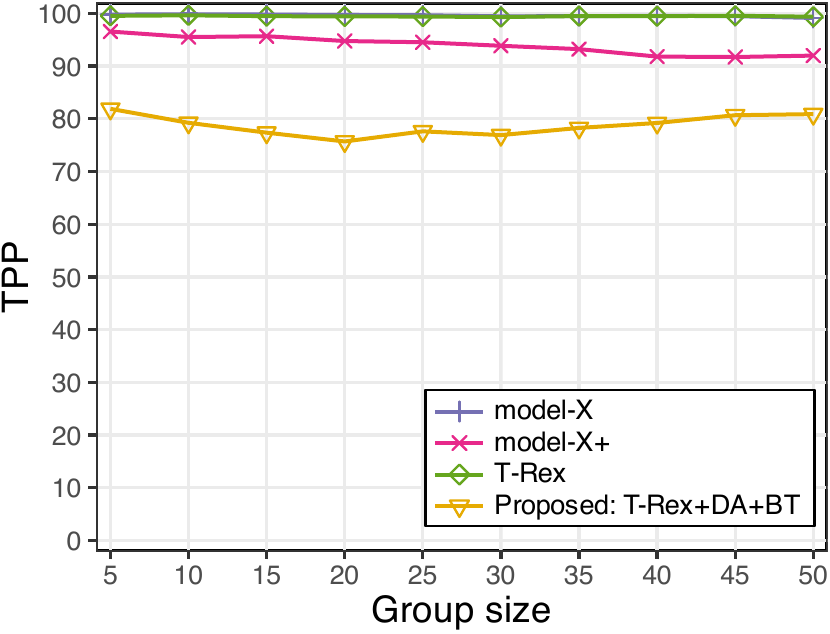}
  		}
   		\label{fig: TPP_vs_groupSize_tFDR_20_n_150_rho_07}
   }
   \\
\vspace{-0.8em}
  \subfloat[]{
  		\scalebox{1}{
  			\includegraphics[width=0.37\linewidth]{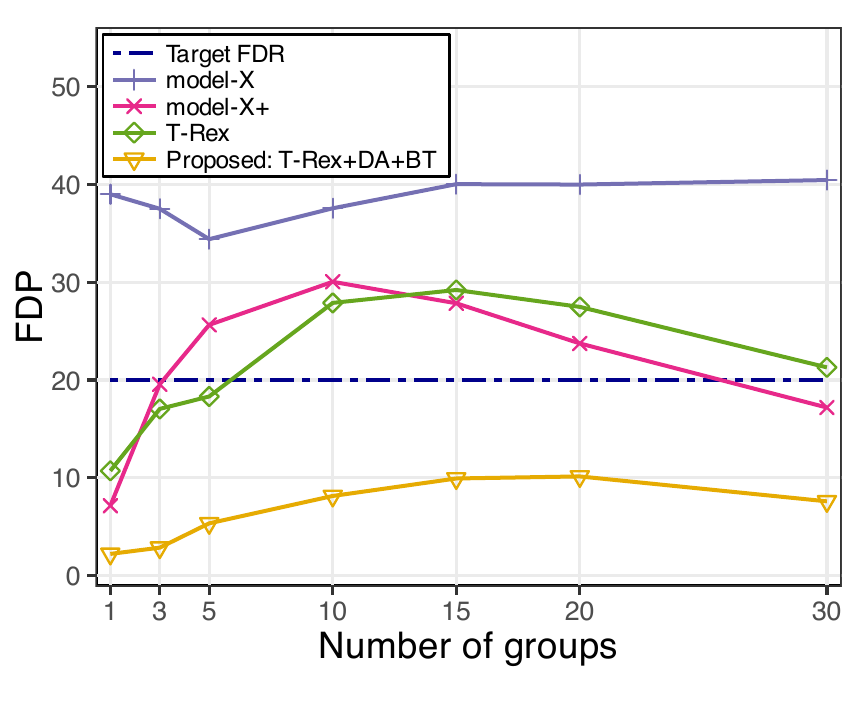}
  		}
   		\label{fig: FDP_vs_numGroups_tFDR_20_n_150_rho_07}
   }
	\hspace*{2em}
  \subfloat[]{
  		\scalebox{1}{
  			\includegraphics[width=0.3765\linewidth]{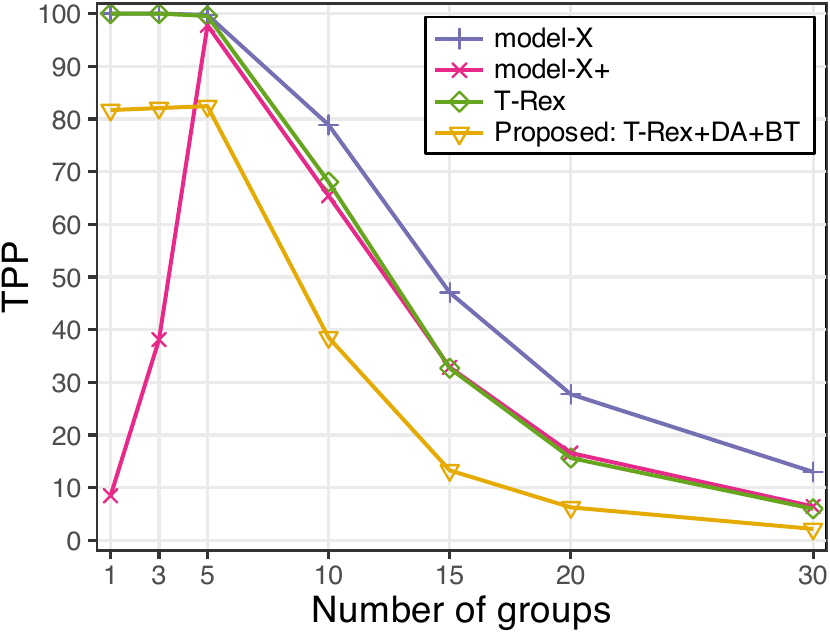}
  		}
   		\label{fig: TPP_vs_numGroups_tFDR_20_n_150_rho_07}
   }
  \caption{Only the proposed \textit{T-Rex+DA} selector with a binary tree group model (\textit{T-Rex+DA+BT}) reliably controls the FDR in all settings while achieving a reasonably high TPR. In Figure~(c), we see that with increasing correlations among the variables in a group, the benchmark methods exhibit an alarming increase in FDR.}
  \label{fig: sweep_plots_3_tFDR_20_n_150_rho_07}
\end{figure*}
%
\begin{figure}[!htbp]
  \centering
  \hspace*{-0.75em}
  \subfloat[$\rho = 0.7$]{
  		\scalebox{1}{
  			\includegraphics[width=0.485\linewidth]{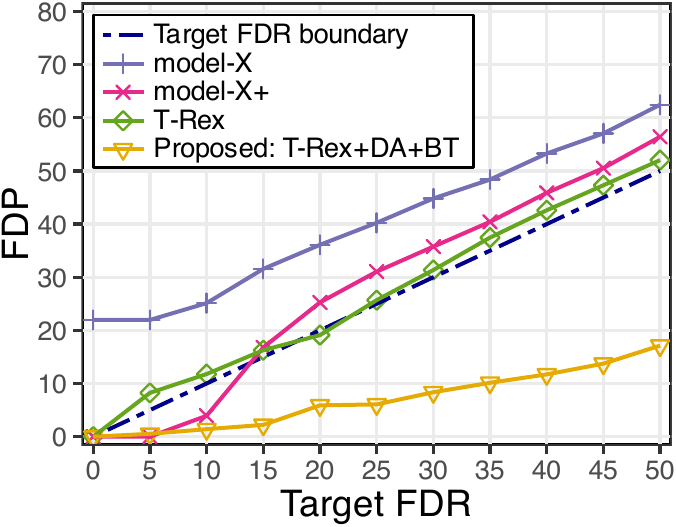}
  		}
   		\label{fig: FDP_vs_tFDR_n_150_p_500_rho_07}
   }
	\hspace*{-1.3em}
  \subfloat[$\rho = 0.7$]{
  		\scalebox{1}{
  			\includegraphics[width=0.485\linewidth]{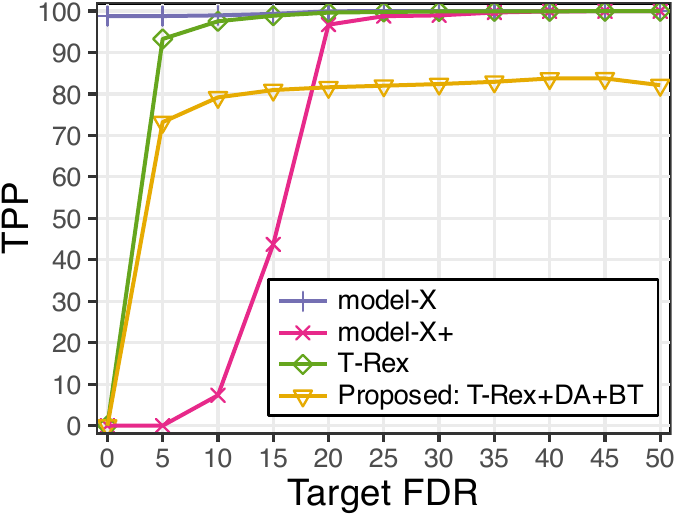}
  		}
   		\label{fig: TPP_vs_tFDR_n_150_p_500_rho_07}
   }
   \\
 \hspace*{-0.75em}
  \subfloat[$\rho = 0.8$]{
  		\scalebox{1}{
  			\includegraphics[width=0.485\linewidth]{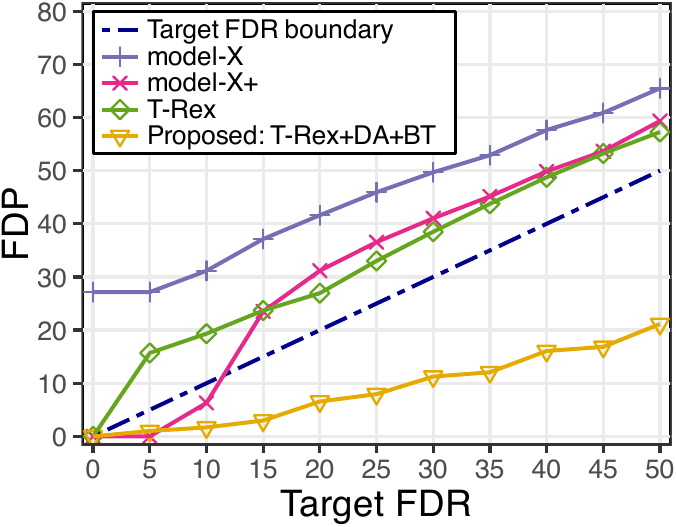}
  		}
   		\label{fig: FDP_vs_tFDR_n_150_p_500_rho_08}
   }
	\hspace*{-1.3em}
  \subfloat[$\rho = 0.8$]{
  		\scalebox{1}{
  			\includegraphics[width=0.485\linewidth]{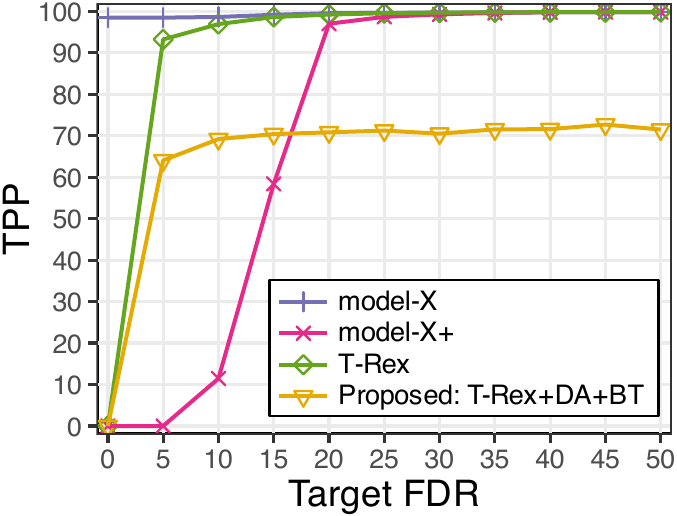}
  		}
   		\label{fig: TPP_vs_tFDR_n_150_p_500_rho_08}
   }
   \\
 \hspace*{-0.75em}
  \subfloat[$\rho = 0.9$]{
  		\scalebox{1}{
  			\includegraphics[width=0.485\linewidth]{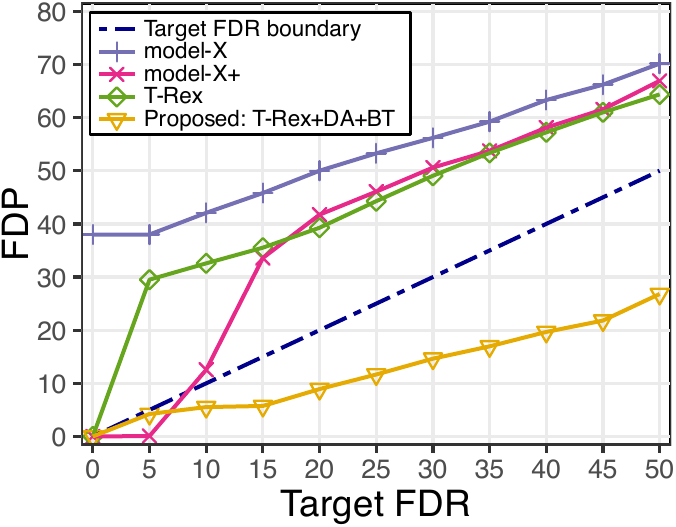}
  		}
   		\label{fig: FDP_vs_tFDR_n_150_p_500_rho_09}
   }
	\hspace*{-1.3em}
  \subfloat[$\rho = 0.9$]{
  		\scalebox{1}{
  			\includegraphics[width=0.485\linewidth]{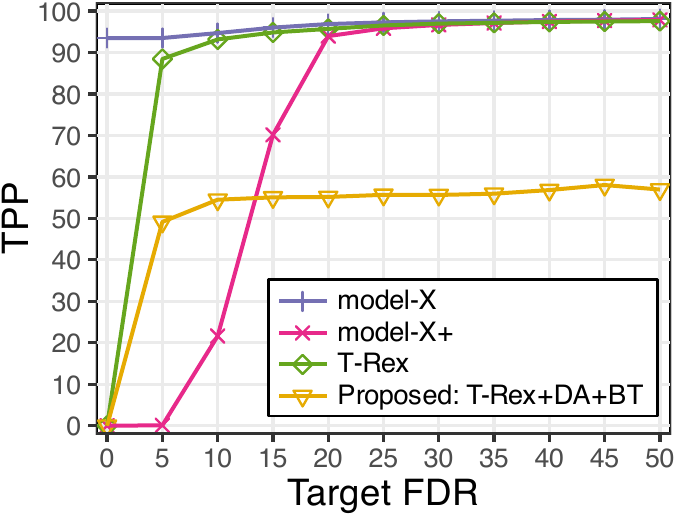}
  		}
   		\label{fig: TPP_vs_tFDR_n_150_p_500_rho_09}
   }
  \caption{The proposed \textit{T-Rex+DA} selector reliably controls the FDR in all settings while achieving a reasonably high TPR in harsh high correlation settings. We observe that with increasing correlations among the variables in a group, the benchmark methods do not control the FDR for almost any choice of target FDR.}
  \label{fig: sweep_plots_tFDR}
\end{figure}
%

We consider a high-dimensional setting with $p = 500$ variables and $n = 150$ samples and generate the predictor matrix $\X$ from a zero mean multivariate Gaussian distribution with an $M$ block diagonal correlation matrix, where each block is a $Q \times Q$ toeplitz correlation matrix, i.e., 
\begin{equation}
\begingroup
\renewcommand*{\arraystretch}{1}
\setlength\arraycolsep{1.5pt}
\bSigma \! = \!
\begin{bmatrix}
\bSigma_{1} & \boldsymbol{0} & \hdots & \boldsymbol{0}
\\
\boldsymbol{0} & \ddots &  & \vdots
\\
\vdots &  & \bSigma_{M} & \boldsymbol{0}
\\
\boldsymbol{0} & \hdots & \boldsymbol{0} & \boldsymbol{0}
\end{bmatrix}
\!\!,
\bSigma_{m} \! = \!
\begin{bmatrix}
1 & \rho & \rho^{2} & \cdots & \rho^{Q - 1}
\\
\rho & 1 & \rho & \cdots & \rho^{Q - 2}
\\
\rho^{2} & \rho & 1 & \cdots & \rho^{Q - 3}
\\
\vdots & \vdots & \vdots & \ddots & \vdots
\\
\rho^{Q - 1} & \rho^{Q - 2} & \rho^{Q - 3} & \hdots & 1
\end{bmatrix}.
\endgroup
\label{eq: block diagonal correlation matrix}
\end{equation}
That is, each block mimics a dependency structure that is often present in biomedical data (e.g., gene expression~\cite{segal2003regression} and genomics data~\cite{balding2006tutorial}) and may lead to the breakdown of the FDR control property of existing methods.
The response vector $\y$ is generated from the linear model $\y = \X\bbeta + \bepsilon$, where $\bbeta = [ \beta_{1} \, \cdots \, \beta_{p} ]^{\top} \in \mathbb{R}^{p}$ is the sparse true coefficient vector and $\epsilon \sim \mathcal{N}(\boldsymbol{0}, \sigma^{2} \boldsymbol{I})$ is an additive noise vector with variance $\sigma^{2}$ and identity matrix $\boldsymbol{I}$. The variance $\sigma^{2}$ is set such that the signal-to-noise ratio $\SNR = \Var [\X\bbeta] / \sigma^{2}$ has the desired value. In the base setting, we set the parameters as follows: $\SNR = 2$, $\rho = 0.7$, $Q = 5$, $M = 5$, $\alpha = 0.2$. The coefficient vector $\bbeta$ is generated such that the $m$th block consists of one true active variable with coefficient value one, while the remaining variables are nulls with coefficient value zero. In the numerical experiments, all parameters except for one parameter of the base setting are varied. That is, ceteris paribus, $\SNR$, $\rho$, group size $Q$, number of groups $M$, and target FDR $\alpha$ are varied. 

The results in Figures~\ref{fig: sweep_plots_3_tFDR_20_n_150_rho_07} and~\ref{fig: sweep_plots_tFDR} are averaged over $955$ Monte Carlo replications.\footnote{The uneven number of Monte Carlo replications was chosen to run the simulations efficiently and in parallel on the Lichtenberg High-Performance Computer of the Technische Universität Darmstadt.} For the performance comparison, we consider the averaged FDP and TPP (in \%) which are estimates of the FDR and TPR. We observe that only the proposed \textit{T-Rex+DA} selector with a binary tree group model (\textit{T-Rex+DA+BT}) reliably controls the FDR over all values of $\SNR$, $\rho$, $Q$, $M$, and $\alpha$, while the benchmarks lose the FDR control property, especially in the practically important case where groups of highly correlated variables are present in the data. It is remarkable that the frequently used \textit{model-X} knockoff method exceeds the target FDR level in all scenarios by far. Note that achieving a higher TPR without controlling the FDR is undesirable, since it leads to reporting false discoveries, which need to be avoided in order to alleviate the unfortunately still ongoing reproducibility crisis in many scientific fields~\cite{baker2016reproducibility}.

The results of two additional simulation setups, where 
\begin{enumerate}
\item the $p$-dimensional samples of the predictor matrix (i.e., rows of $\X$) are sampled from a zero-mean multivariate heavy-tailed Student-$t$ distribution with covariance matrix $\bSigma$ and $3$ degrees of freedom,
\item the noise vector $\bepsilon$ is sampled from a heavy-tailed Student-$t$ distribution with $3$ degrees of freedom,
\end{enumerate}
are deferred to Appendix~\ref{appendix: Additional Simulations} in the supplementary materials. These additional simulations verify the theoretical results and show that only the proposed \textit{T-Rex+DA} selector reliably controls the FDR in these heavy-tailed settings.

\section{Breast Cancer Survival Analysis}
\label{sec: FDR-Controlled Survival Analysis}
To illustrate the practical usefulness of the proposed \textit{T-Rex+DA} selector in large-scale high-dimensional settings, we demonstrate the performance of the proposed method via a breast cancer survival analysis using gene expression and survival time data from the open source resource The Cancer Genome Atlas (TCGA)~\cite{tomczak2015review,colaprico2016tcgabiolinks}. In order to detect the genes that are truly associated with the survival time of breast cancer patients, we conduct an FDR-controlled breast cancer survival analysis.

\subsection{TCGA Breast Cancer Data}
\label{subsec: TCGA Breast Cancer Data}
The gene expression levels are derived from the RNA-sequencing (RNA-seq) count data. The raw RNA-seq count data matrix $\X \in \mathbb{R}^{n \times p}$ contains $n = 1{,}095$ samples (i.e., breast cancer patients) and $p = 19{,}962$ protein coding genes. After two standard preprocessing steps, which are removing all genes with extremely low expression levels (i.e., where the sum of the RNA-seq counts is less than $10$) and performing a standard variance stabilizing transformation on the count data using the DESeq2 software~\cite{love2014moderated}, $p = 19{,}405$ candidate genes are left. The response vector $\y \in\mathbb{R}^{n}$ contains the log-transformed survival times of the patients. After removing missing and uninformative entries (i.e., entries with a survival time of zero days) from $\y$, $n = 1{,}072$ samples are left. During the study, the event (i.e., death) occurred for only $149$ patients, while $923$ patients were either still alive after the end of the study or dropped out of the study. That is, the survival times of $923$ patients are right censored. This is dealt with by treating these entries in $\y$ as missing data and imputing them using the well-known Buckley-James estimator~\cite{buckley1979linear}.
%
\begin{figure}[t]
  \centering
  \hspace*{-0.85em}
  \subfloat[\textit{T-Rex} methods]{
  		\scalebox{1}{
  			\includegraphics[width=0.476\linewidth]{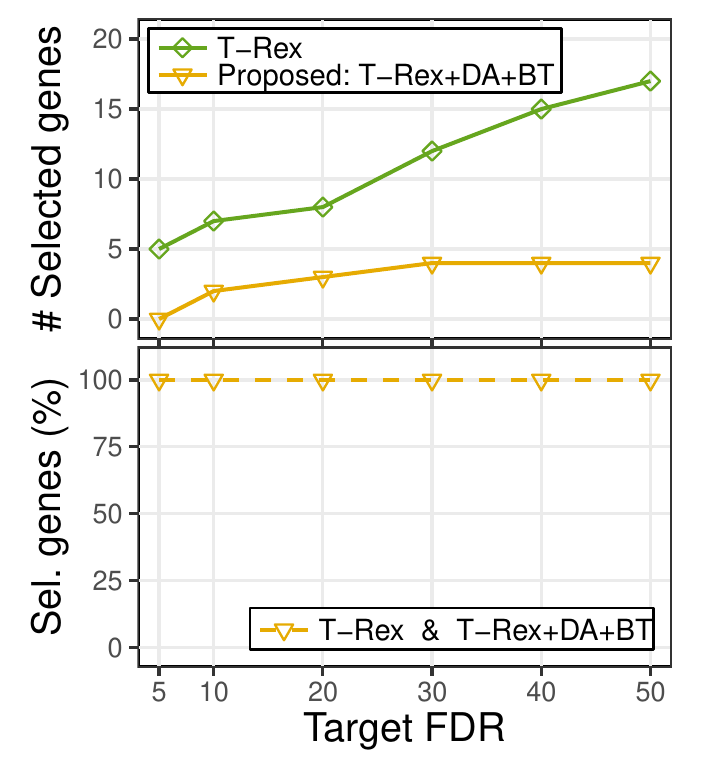}
  		}
   		\label{fig: TRex_survivalAnalysis}
   }
	\hspace*{-1.4em}
  \subfloat[Cox \textit{Lasso} and \textit{EN}]{
  		\scalebox{1}{
  			\includegraphics[width=0.506\linewidth]{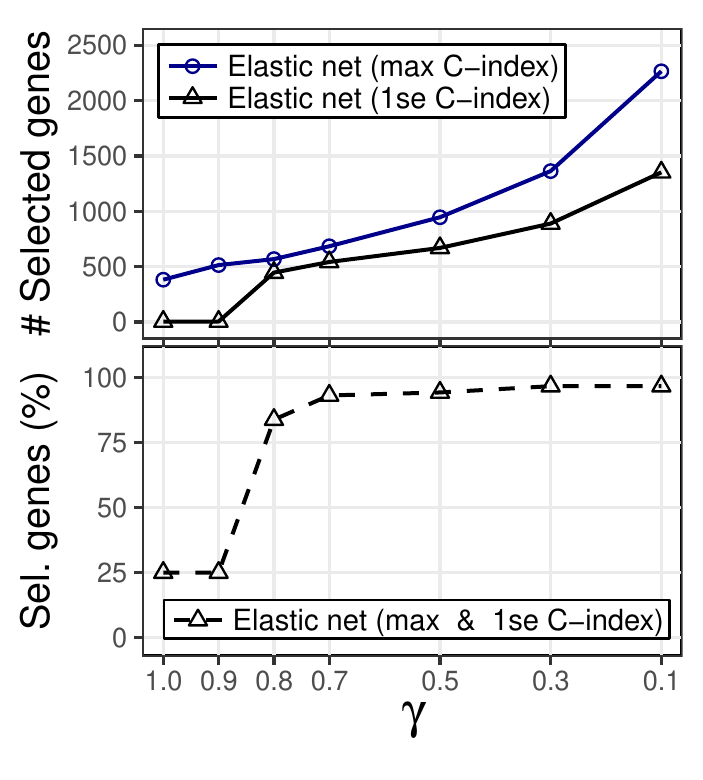}
  		}
   		\label{fig: enCox_survivalAnalysis}
   }
  \caption{Number of selected genes in the TCGA breast cancer survival analysis study.}
  \label{fig: numSelectedGenes_survivalAnalysis}
\end{figure}
%

\subsection{Methods and Results}
\label{subsec: Breast Cancer Study: Methods and Results}
As benchmark methods, we consider the Cox proportional hazards \textit{Lasso} and \textit{elastic net}~\cite{simon2011regularization}, which are specifically designed for censored survival data, and the ordinary \textit{T-Rex} selector~\cite{machkour2021terminating}. The elastic net Cox model requires the tuning of two parameters, i.e., a sparsity parameter $\lambda$ and a mixture parameter $\gamma \in [0, 1]$ that balances a convex combination of the $\ell_{1}$- and $\ell_{2}$-norm regularization terms. Here, $\gamma = 1$ sets the $\ell_{2}$ regularization term to zero and yields the Lasso solution. As suggested in~\cite{simon2011regularization}, we evaluate a range of values for $\gamma$ and, for each fixed $\gamma$, we perform $10$-fold cross-validation to choose $\lambda$. We consider, as suggested by the authors, the $\lambda$-value that achieves the maximum C-index and the $\lambda$-value that deviates by one standard error ($1$se criterion) from the maximum C-index to obtain a sparser solution. Due to the high computational complexity of the \textit{model-X} knockoff method (see, e.g., Figure~1 in~\cite{machkour2021terminating}), it is practically infeasible in this large-scale high-dimensional setting. Therefore, we cannot consider it in this survival analysis.

Figure~\ref{fig: numSelectedGenes_survivalAnalysis} shows the number of selected genes for different target FDR levels (in \%) and different values of $\gamma$. First, we observe that the ordinary \textit{T-Rex} method selected more genes than the proposed dependency-aware \textit{T-Rex} selector. In accordance with Corollary~\ref{Corollary: T-Rex+DA solution is subset of T-Rex solution}, all genes that were selected by the proposed method were also selected by the more liberal ordinary \textit{T-Rex} selector. In contrast, the regularized Cox methods did not provide consistent results for many values of $\gamma$ because many genes that were selected by the more conservative $1$se criterion do not appear in the selected set of the more liberal maximum C-index criterion. Moreover, it seems that many choices of $\gamma$ lead to a very high number of selected genes, which raises some suspicion with respect to reproducibility because only $149$ non-censored data points are usually not sufficient to reliably detect thousands of genes. By contrast, all three genes that were selected by the proposed method at a target FDR level of $20$\% (i.e., `ITM2A', `SCGB2A1', `RYR2') have been previously identified to be related to breast cancer~\cite{zhou2019integral,lacroix2006significance,xu2021bioinformatic}.

\section{Conclusion}
\label{sec: Conclusion}
The dependency-aware \textit{T-Rex} selector has been proposed. In contrast to existing methods, it reliably controls the FDR in the presence of groups of highly correlated variables in the data. A real world TCGA breast cancer survival analysis showed that the proposed method selects genes that have been previously identified to be related to breast cancer. Thus, the \textit{T-Rex+DA} selector is a promising tool for making reproducible discoveries in biomedical applications. Moreover, the derived group design principle allows to easily adapt the method to various application-specific dependency-structures, which opens the door to other fields that require large-scale high-dimensional variable selection with FDR-control guarantees. In fact, the group design principle was already successfully applied as a guiding principle in adapting the proposed \textit{T-Rex+DA} selector for FDR-controlled sparse financial index tracking~\cite{machkour2024TRexIndexTracking}.

\appendices

\section{Proofs and Technical Lemmas}
\label{appendix: Proofs and Technical Lemmas}

\subsection{Proof of Theorem 1}
\label{subsec: Proof of Theorem 1}
\begin{proof}
Note that for any $\hat{\beta}_{j, k} \neq 0$, the indicator function in~\eqref{eq: relative occurrences} can be written as follows:
\begin{equation}
\mathbbm{1}_{k}(j, T, L) = \big| \sign \big( \hat{\beta}_{j, k} \big) \big|, \quad j = 1, \ldots, p.
\label{eq: sign indicator function}
\end{equation}
Thus, we can rewrite the left-hand side of the inequality in Theorem~\ref{Theorem: absolute difference relative occurrences} as follows:
\begingroup
\allowdisplaybreaks
\begin{align}
&\dfrac{\big| \Phi_{T, L}(j) - \Phi_{T, L}(j^{\prime}) \big|}{\| \y \|_{2}}
\\
&\quad = \dfrac{1}{\| \y \|_{2} \cdot K} \Bigg| \sum\limits_{k = 1}^{K} \big( \mathbbm{1}_{k}(j, T, L) - \mathbbm{1}_{k}(j^{\prime}, T, L) \big) \Bigg|
\\
&\quad \leq \dfrac{1}{\| \y \|_{2} \cdot K} \sum\limits_{k = 1}^{K} \Big| \mathbbm{1}_{k}(j, T, L) - \mathbbm{1}_{k}(j^{\prime}, T, L) \Big|
\\
&\quad = \dfrac{1}{\| \y \|_{2} \cdot K} \sum\limits_{k = 1}^{K} \bigg| \big| \sign \big( \hat{\beta}_{j, k} \big) \big| - \big| \sign \big( \hat{\beta}_{j^{\prime}, k} \big) \big| \bigg|
\\
&\quad \leq \sqrt{2(1 - \rho_{j, j^{\prime}})} \cdot \dfrac{1}{K} \sum\limits_{k = 1}^{K} \dfrac{1}{\lambda_{k}(T, L)} \cdot \dfrac{\| \boldsymbol{\hat{r}}_{k} \|_{2}}{\| \y \|_{2}}
\\
&\quad \leq \sqrt{2 (1 - \rho_{j, j^{\prime}})} \cdot \widebar{\Lambda},
\label{eq: rewriting LHS of theorem - absolute difference relative occurrences}
\end{align}
\endgroup
where $\widebar{\Lambda} = \frac{1}{K} \sum_{k = 1}^{K} \frac{1}{\lambda_{k}(T, L)}$. The equation in the second line follows from the definition of the relative occurrences in~\eqref{eq: relative occurrences}, the inequality in the third line is a consequence of the triangle inequality, the equation in the fourth line follows from~\eqref{eq: sign indicator function}, and the inequality in the fifth line follows from Lemma~\ref{lemma: absolute sign difference}, which is deferred to Appendix~\ref{sec: Preliminaries for Theorem 1} in the supplementary materials. The inequality in the last line holds since $\hatbbeta_{k}$ is by definition the minimizer of~\eqref{eq: Lasso Lagrangian} and, therefore,
\begin{equation}
\mathcal{L} \big( \bbeta_{k} = \hatbbeta_{k}, \lambda_{k}(T, L) \big) \leq \mathcal{L} \big( \bbeta_{k} = \boldsymbol{0}, \lambda_{k}(T, L) \big)
\label{eq: put minimizer into Lagrangian}
\end{equation}
and, equivalently,
\begin{equation}
\dfrac{1}{2} \| \boldsymbol{\hat{r}}_{k} \|_{2}^{2} + \lambda_{k}(T, L) \| \hatbbeta_{k} \|_{1} \leq \dfrac{1}{2} \| \y \|_{2}^{2},
\label{eq: equivalent to eq: put minimizer into Lagrangian}
\end{equation}
which yields $\| \boldsymbol{\hat{r}}_{k} \|_{2} \leq \| \y \|_{2}$. Finally, we obtain
\begin{align}
\big| \Phi_{T, L}(j) - \Phi_{T, L}(j^{\prime}) \big| \leq \widebar{\Lambda} \| \y \|_{2} \cdot \sqrt{2 (1 - \rho_{j, j^{\prime}})}. \quad \qedhere
\label{eq: proof - Theorem - absolute difference relative occurrences}
\end{align}
\label{proof: theorem: absolute difference relative occurrences}
\end{proof}

\subsection{Technical Lemmas}
\label{subsec: Technical Lemmas}
\begin{lemma}
Define $\mathcal{V} \coloneq \lbrace \Phi_{T, L}^{\DA}(j, \rho_{\thr}(u_{\cut})) \geq 0.5, \, j = 1, \ldots, p \rbrace \backslash \lbrace 1 \rbrace$. Let 
\begin{align}
\mathcal{F}_{v} 
\coloneq 
\sigma \big( \big\lbrace V_{T, L}(u, &\rho_{\thr}(u_{\cut}))_{u \geq v} \big\rbrace,
\\
&\big\lbrace \widehat{V}_{T, L}^{\prime}(u, \rho_{\thr}(u_{\cut}))_{u \geq v} \big\rbrace \big)
\label{eq: backward-filtration}
\end{align}
be a backward-filtration with respect to $v$. Then, for all triples $(T, L, \rho_{\thr}(u_{\cut})) \in \lbrace 1, \ldots, L \rbrace \times \mathbb{N}_{+} \times [ 0, 1 ]$, $\lbrace H_{T, L}(v, \rho_{\thr}(u_{\cut})) \rbrace_{v \in \mathcal{V}}$ is a backward-running super-martingale with respect to $\mathcal{F}_{v}$. That is,
\begin{align}
\mathbb{E} \big[ H_{T, L}(v - \epsilon_{T, L}^{*}(v, \rho_{\thr}(&u_{\cut}))) \mid \mathcal{F}_{v} \big]
\\
&\geq H_{T, L}(v, \rho_{\thr}(u_{\cut})),
\label{eq: T-Rex+DA super-martingale}
\end{align}
where $v \in [ 0.5, 1 )$ and
\begin{align}
\epsilon_{T, L}^{*}(v, &\rho_{\thr}(u_{\cut})) \coloneq
\\
&\inf \big\lbrace \epsilon \in (0, v) : R_{T, L}(v - \epsilon, \rho_{\thr}(u_{\cut}))
\\
&\qquad\qquad\qquad\qquad - R_{T, L}(v, \rho_{\thr}(u_{\cut})) = 1\big\rbrace
\label{eq: martingale epsilon-step}
\end{align}
with the convention that $\epsilon_{T, L}^{*}(v, \rho_{\thr}(u_{\cut})) = 0$ if the infimum does not exist.
\label{Lemma: T-Rex+DA super-martingale}
\end{lemma}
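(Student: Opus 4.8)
The plan is to reduce the claim to the super-martingale property already established for the ordinary \textit{T-Rex} selector in Theorem~1 of~\cite{machkour2021terminating}, by recognizing that the dependency-aware relative occurrences in~\eqref{eq: dependency-aware relative occurrences} are nothing but a fixed, threshold-independent rescaling of the ordinary ones. The first step is to observe that, for a fixed triple $(T, L, \rho_{\thr}(u_{\cut}))$, each penalty factor $\Psi_{T, L}(j, \rho_{\thr}(u_{\cut}))$ defined in~\eqref{eq: dependency factor} depends only on the terminal relative occurrences $\{\Phi_{T, L}(j')\}_{j' = 1}^{p}$ through the group minima and, in particular, does not depend on the voting threshold $v$. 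Hence $\Phi_{T, L}^{\DA}(j, \rho_{\thr}(u_{\cut})) = \Psi_{T, L}(j, \rho_{\thr}(u_{\cut})) \cdot \Phi_{T, L}(j)$ is an order-structure on the variables that is measurable with respect to the backward-filtration $\mathcal{F}_{v}$ in~\eqref{eq: backward-filtration} for every $v$, the factor $\Psi_{T, L}(j, \rho_{\thr}(u_{\cut})) \in [0.5, 1]$ acting as a known constant.

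Second, I would verify that, under this substitution, every object entering the ratio $H_{T, L}(v, \rho_{\thr}(u_{\cut})) = V_{T, L}(v, \rho_{\thr}(u_{\cut})) / \widehat{V}_{T, L}^{\prime}(v, \rho_{\thr}(u_{\cut}))$ is the formal image of the corresponding ordinary \textit{T-Rex} object under $\Phi_{T, L} \mapsto \Phi_{T, L}^{\DA}$: the null count $V_{T, L}$ in~\eqref{eq: number of selected null variables}, the estimator $\widehat{V}_{T, L}^{\prime}$ in~\eqref{eq: V_T_L_prime}, the selected set $\widehat{\A}(v, \rho_{\thr}(u_{\cut}))$, the filtration~\eqref{eq: backward-filtration}, and the stopping step $\epsilon_{T, L}^{*}(v, \rho_{\thr}(u_{\cut}))$ in~\eqref{eq: martingale epsilon-step} are all defined verbatim as their unweighted counterparts, but applied to the rescaled occurrences. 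Consequently, the increment analysis of the backward chain carries over: lowering the threshold by $\epsilon_{T, L}^{*}$ admits exactly one new variable $j_{0}$ into $\widehat{\A}$, incrementing the numerator $V_{T, L}$ by the indicator that $j_{0}$ is null and the denominator $\widehat{V}_{T, L}^{\prime}$ by a strictly positive, $\mathcal{F}_{v}$-measurable amount proportional to $\Delta\Phi_{t, L}^{\DA}(j_{0}, \rho_{\thr}(u_{\cut}))$.

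The crux of the argument, and the step I expect to be the main obstacle, is confirming that the exchangeability between genuine null variables and dummy variables that drives the original martingale survives the reweighting. Unlike in the ordinary \textit{T-Rex}, the factor $\Psi_{T, L}(j, \rho_{\thr}(u_{\cut}))$ couples variable $j$ to the relative occurrences of the other members of its group $\Gr(j, \rho_{\thr}(u_{\cut}))$, so one must check that this coupling does not bias the conditional law of the next entering variable. The key point to establish is that, conditionally on $\mathcal{F}_{v}$, the penalty factors are fixed and the map $\Phi_{T, L}(j) \mapsto \Psi_{T, L}(j, \rho_{\thr}(u_{\cut})) \, \Phi_{T, L}(j)$ is a deterministic per-variable transformation; since a true null variable's selection path is statistically indistinguishable from a dummy's (Theorem~2 of~\cite{machkour2021terminating} permits dummies drawn from any distribution with finite mean and variance), conditioning on $\mathcal{F}_{v}$ leaves the same null-versus-dummy symmetry in the increment as in the unweighted case.

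Granting this exchangeability, the super-martingale inequality~\eqref{eq: T-Rex+DA super-martingale} follows from exactly the increment computation used in Theorem~1 of~\cite{machkour2021terminating}: the conditional probability that the entering variable $j_{0}$ is null balances the deterministic jump of $\widehat{V}_{T, L}^{\prime}$, so that $\mathbb{E}[ H_{T, L}(v - \epsilon_{T, L}^{*}(v, \rho_{\thr}(u_{\cut}))) \mid \mathcal{F}_{v} ] \geq H_{T, L}(v, \rho_{\thr}(u_{\cut}))$. I would therefore present the measurability and exchangeability verifications in full and invoke~\cite{machkour2021terminating} for the remaining algebraic manipulation, which is unchanged under the substitution $\Phi_{T, L} \mapsto \Phi_{T, L}^{\DA}$.
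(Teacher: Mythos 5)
Your proposal takes essentially the same route as the paper: the paper's proof of Lemma~\ref{Lemma: T-Rex+DA super-martingale} is precisely a substitution argument, replacing $\Phi_{t, L}(j)$, $V_{T, L}(v)$, and $R_{T, L}(v)$ by $\Phi_{t, L}^{\DA}(j, \rho_{\thr}(u_{\cut}))$, $V_{T, L}(v, \rho_{\thr}(u_{\cut}))$, and $R_{T, L}(v, \rho_{\thr}(u_{\cut}))$ in the proof of the super-martingale lemma (Lemma~5) of~\cite{machkour2021terminating} and observing that the argument carries over unchanged. Your added verifications (threshold-independence of the penalty factors, preservation of the null-versus-dummy exchangeability in the increment) simply make explicit what the paper leaves implicit in its one-line reduction.
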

\begin{proof}
The proof of Lemma~\ref{Lemma: T-Rex+DA super-martingale} follows along the lines of the proof of Lemma~5 in~\cite{machkour2021terminating} and by replacing $\Phi_{t, L}(j)$, $V_{T, L}(v)$, and $R_{T, L}(v)$ by their dependency-aware extensions $\Phi_{t, L}^{\DA}(j, \rho_{\thr}(u_{\cut}))$, $V_{T, L}(v, \rho_{\thr}(u_{\cut}))$, and $R_{T, L}(v, \rho_{\thr}(u_{\cut}))$, respectively.
\label{proof: Lemma - T-Rex+DA super-martingale}
\end{proof}
\begin{lemma}
Let $V_{T, L}^{+}(v, \rho_{\thr}(u_{\cut}))$ be as in~\eqref{eq: upper bound on V_T_L(0.5, rho_thr)}. For all triples $(T, L, v) \in \lbrace 1, \ldots, L \rbrace \times \mathbb{N}_{+} \times [0.5, 1)$, $V_{T, L}^{+}(v, \rho_{\thr}(u_{\cut}))$ is monotonically increasing in $\rho_{\thr}(u_{\cut})$, i.e., for any $u_{\cut} \in \lbrace 1, \ldots, p - 1 \rbrace$, it holds that
\begin{equation}
V_{T, L}^{+}(v, \rho_{\thr}(u_{\cut} + 1)) \geq V_{T, L}^{+}(v, \rho_{\thr}(u_{\cut})).
\label{eq: V_T,L monotonically increasing in rho}
\end{equation}
\label{Lemma: V_T,L monotonically increasing in rho}
\end{lemma}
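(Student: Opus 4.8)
The plan is to reduce the cardinality inequality to a pointwise monotonicity statement for the penalty factors $\Psi_{T, L}^{+}(j, \rho_{\thr}(u_{\cut}))$ and then lift it back to the counts. The single substantive ingredient is the nesting of the variable groups: for every $j \in \{1, \ldots, p\}$ and every $u_{\cut} \in \{1, \ldots, p - 1\}$,
\begin{equation}
\Gr(j, \rho_{\thr}(u_{\cut} + 1)) \subseteq \Gr(j, \rho_{\thr}(u_{\cut})).
\label{eq: nesting}
\end{equation}
First I would verify \eqref{eq: nesting} directly from the hierarchical group design in Definition~\ref{Definition: binary tree group model}. Since $\rho_{\thr}(u_{\cut}) = \sum_{u = 1}^{u_{\cut}} \Delta\rho_{\thr, u}$ is strictly increasing in $u_{\cut}$, the distance cutoff $1 - \rho_{\thr}(u_{\cut})$ is decreasing, and cutting the dendrogram at a smaller distance can only split clusters, never merge them. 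Hence the cluster containing $j$, and therefore $\Gr(j, \cdot)$, can only shrink as $u_{\cut}$ grows; this is exactly the subset structure illustrated in Figure~\ref{fig: Illustration of group design principle}.

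Given \eqref{eq: nesting}, I would show that $\Psi_{T, L}^{+}(j, \rho_{\thr}(u_{\cut}))$ is nondecreasing in $u_{\cut}$ for each fixed $j$. Writing $d(j, u_{\cut}) \coloneqq \min_{j' \in \Gr(j, \rho_{\thr}(u_{\cut}))} | \Phi_{T, L}(j) - \Phi_{T, L}(j') |$ whenever the group is nonempty, there are three cases. If both groups are nonempty, minimizing the same objective over the smaller index set in \eqref{eq: nesting} can only raise the minimum, so $d(j, u_{\cut} + 1) \geq d(j, u_{\cut})$; as $1 / (2 - d)$ is increasing in $d$, the claim follows. If the smaller group is empty while the larger is not, then $\Psi_{T, L}^{+}(j, \rho_{\thr}(u_{\cut} + 1)) = 1 \geq 1 / (2 - d) = \Psi_{T, L}^{+}(j, \rho_{\thr}(u_{\cut}))$ because $d \in [0, 1]$. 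If the larger group is already empty, then by \eqref{eq: nesting} so is the smaller, and both factors equal $1$. In every case,
\begin{equation}
\Psi_{T, L}^{+}(j, \rho_{\thr}(u_{\cut} + 1)) \geq \Psi_{T, L}^{+}(j, \rho_{\thr}(u_{\cut})).
\label{eq: Psi plus monotone}
\end{equation}

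Finally, since the ordinary relative occurrences satisfy $\Phi_{T, L}(j) \geq 0$, multiplying \eqref{eq: Psi plus monotone} by $\Phi_{T, L}(j)$ preserves the inequality, so any null $j$ with $\Psi_{T, L}^{+}(j, \rho_{\thr}(u_{\cut})) \cdot \Phi_{T, L}(j) > v$ also satisfies $\Psi_{T, L}^{+}(j, \rho_{\thr}(u_{\cut} + 1)) \cdot \Phi_{T, L}(j) > v$. The null index set counted by $V_{T, L}^{+}(v, \rho_{\thr}(u_{\cut}))$ is thus contained in that counted by $V_{T, L}^{+}(v, \rho_{\thr}(u_{\cut} + 1))$, and taking cardinalities gives \eqref{eq: V_T,L monotonically increasing in rho}. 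I expect the main obstacle to be the first step: carefully parsing the recursive group definition to confirm the nesting \eqref{eq: nesting}, together with checking that the piecewise transition of $\Psi_{T, L}^{+}$ from its nonempty branch (taking values in $[1/2, 1]$) to its empty branch (value $1$) is consistent with monotonicity. Once the nesting is secured, the monotonicity of $\Psi_{T, L}^{+}$ and the resulting set containment are routine.
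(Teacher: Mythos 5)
Your proposal is correct and takes essentially the same route as the paper: group nesting $\Gr(j, \rho_{\thr}(u_{\cut}+1)) \subseteq \Gr(j, \rho_{\thr}(u_{\cut}))$ implies pointwise monotonicity of $\Psi_{T,L}^{+}$ (minimum over a subset can only increase), which lifts to containment of the selected null index sets and hence to the cardinality inequality. If anything, you are slightly more careful than the paper, which asserts the nesting as a fact rather than deriving it from the dendrogram and whose case (ii) glosses over the sub-case $\Gr(j, \rho_{\thr}(u_{\cut}+1)) = \varnothing$ with $\Gr(j, \rho_{\thr}(u_{\cut})) \neq \varnothing$, which you handle explicitly.
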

\begin{proof}
Using the definition of $V_{T, L}^{+}(v, \rho_{\thr}(u_{\cut}))$ in~\eqref{eq: upper bound on V_T_L(0.5, rho_thr)}, we obtain
\begingroup
\allowdisplaybreaks
\begin{align}
V_{T, L}^{+}(&v, \rho_{\thr}(u_{\cut} + 1))
\\[0.5em]
\break
& = \big\lbrace \text{null } j : \Psi_{T, L}^{+} (j, \rho_{\thr}(u_{\cut} + 1)) \cdot \Phi_{T, L} (j) > v \big\rbrace
\\[0.5em]
& \geq \big\lbrace \text{null } j : \Psi_{T, L}^{+} (j, \rho_{\thr}(u_{\cut})) \cdot \Phi_{T, L} (j) > v \big\rbrace
\\[0.5em]
& = V_{T, L}^{+}(v, \rho_{\thr}(u_{\cut})).
\label{eq: proof - Lemma - V_T,L monotonically increasing in rho - 1}
\end{align}
\endgroup
The inequality in the third line follows from
\begin{equation}
\Psi_{T, L}^{+}(j, \rho_{\thr}(u_{\cut} + 1)) \geq \Psi_{T, L}^{+}(j, \rho_{\thr}(u_{\cut})),
\label{eq: Psi_T_L(j, rho_thr(u_c)) monotonically increasing in tho_thr(u_c)}
\end{equation}
which is a consequence of the following two cases:
\begin{align}
\text{(i) } &\Gr(j, \rho_{\thr}(u_{\cut})) = \varnothing:
\\[0.5em]
& \quad \Psi_{T, L}^{+}(j, \rho_{\thr}(u_{\cut} + 1)) = 1 = \Psi_{T, L}^{+}(j, \rho_{\thr}(u_{\cut})),
\\[1em]
\text{(ii) } &\Gr(j, \rho_{\thr}(u_{\cut})) \neq \varnothing:
\\[0.5em]
&\quad \Psi_{T, L}^{+} (j, \rho_{\thr}(u_{\cut} + 1))
\\[0.5em]
&\quad = \bigg[ 2 - \min_{j^{\prime} \in \Gr(j, \rho_{\thr}(u_{\cut} + 1))} \Big\lbrace \big| \Phi_{t, L}(j) - \Phi_{t, L}(j^{\prime}) \big| \Big\rbrace \bigg]^{-1}
\\
&\quad \geq \bigg[ 2 - \min_{j^{\prime} \in \Gr(j, \rho_{\thr}(u_{\cut}))} \Big\lbrace \big| \Phi_{t, L}(j) - \Phi_{t, L}(j^{\prime}) \big| \Big\rbrace \bigg]^{-1}
\\[0.5em]
&\quad = \Psi_{T, L}^{+} (j, \rho_{\thr}(u_{\cut})).
\label{eq: proof - Lemma - V_T,L monotonically increasing in rho - 2}
\end{align}
In~(ii), the inequality in the third line follows from the fact that, for any $u_{\cut} \in \lbrace 1, \ldots, p - 1 \rbrace$, it holds that
\begin{equation}
\Gr(j, \rho_{\thr}(u_{\cut} + 1)) \subseteq \Gr(j, \rho_{\thr}(u_{\cut})), \quad j = 1, \ldots, p,
\label{eq: proof - Lemma - V_T,L monotonically increasing in rho - 3}
\end{equation}
and, therefore,
\begin{align}
&\min_{j^{\prime} \in \Gr(j, \rho_{\thr}(u_{\cut} + 1))} \Big\lbrace \big| \Phi_{t, L}(j) - \Phi_{t, L}(j^{\prime}) \big| \Big\rbrace
\\[0.5em]
& \qquad\qquad\quad \geq \min_{j^{\prime} \in \Gr(j, \rho_{\thr}(u_{\cut}))} \Big\lbrace \big| \Phi_{t, L}(j) - \Phi_{t, L}(j^{\prime}) \big| \Big\rbrace. \qedhere
\label{eq: proof - Lemma - V_T,L monotonically increasing in rho - 4}
\end{align}
\label{proof: Lemma - V_T,L monotonically increasing in rho}
\end{proof}
\begin{lemma}
Let $\widehat{V}_{T, L}^{\prime \, +}(0.5, \rho_{\thr}(u_{\cut}))$ be as in~\eqref{eq: lower bound on V_hat_prime_T_L(0.5, rho_thr)}. For all tuples $(T, L) \in \lbrace 1, \ldots, L \rbrace \times \mathbb{N}_{+}$, $\widehat{V}_{T, L}^{\prime \, +}(0.5, \rho_{\thr}(u_{\cut}))$ is monotonically decreasing in $\rho_{\thr}(u_{\cut})$, i.e., for any $u_{\cut} \in \lbrace 1, \ldots, p - 1 \rbrace$, it holds that
\begin{equation}
\widehat{V}_{T, L}^{\prime \, +}(0.5, \rho_{\thr}(u_{\cut} + 1)) \leq \widehat{V}_{T, L}^{\prime \, +}(0.5, \rho_{\thr}(u_{\cut})).
\label{Lemma: V'_hat_T,L monotonically decreasing in rho}
\end{equation}
\end{lemma}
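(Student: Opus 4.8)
The plan is to exploit that $\widehat{V}_{T, L}^{\prime \, +}(0.5, \rho_{\thr}(u_{\cut}))$ depends on $\rho_{\thr}(u_{\cut})$ only through the penalty factors $\Psi_{t, L}^{+}(q, \rho_{\thr}(u_{\cut}))$ appearing in its numerators, while the denominators $L - (t - 1)$ are strictly positive for $t \in \lbrace 1, \ldots, T \rbrace$ (recall $T \leq L$) and independent of $\rho_{\thr}(u_{\cut})$. Hence it suffices to compare the summands termwise.

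First I would invoke the monotonicity of the penalty factor already derived within the proof of Lemma~\ref{Lemma: V_T,L monotonically increasing in rho}, namely that for every index $t$, every $q \in \lbrace 1, \ldots, p \rbrace$, and any $u_{\cut} \in \lbrace 1, \ldots, p - 1 \rbrace$,
\begin{equation}
\Psi_{t, L}^{+}(q, \rho_{\thr}(u_{\cut} + 1)) \geq \Psi_{t, L}^{+}(q, \rho_{\thr}(u_{\cut})),
\end{equation}
which follows from the nesting $\Gr(q, \rho_{\thr}(u_{\cut} + 1)) \subseteq \Gr(q, \rho_{\thr}(u_{\cut}))$ enforced by the hierarchical group design in Definition~\ref{Definition: binary tree group model}. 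Since the relative occurrences satisfy $\Phi_{t, L}(q) \geq 0$ by construction, multiplying this inequality by $\Phi_{t, L}(q)$ and summing over $q$ yields
\begin{equation}
\sum_{q = 1}^{p} \Psi_{t, L}^{+}(q, \rho_{\thr}(u_{\cut} + 1)) \, \Phi_{t, L}(q) \geq \sum_{q = 1}^{p} \Psi_{t, L}^{+}(q, \rho_{\thr}(u_{\cut})) \, \Phi_{t, L}(q),
\end{equation}
so each numerator $p - \sum_{q} \Psi_{t, L}^{+}(q, \cdot) \, \Phi_{t, L}(q)$ is smaller at $\rho_{\thr}(u_{\cut} + 1)$ than at $\rho_{\thr}(u_{\cut})$.

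The remaining steps are routine: I would divide by the positive, $\rho_{\thr}$-independent factor $L - (t - 1)$ to preserve the inequality for each summand, and then sum over $t = 1, \ldots, T$ to obtain $\widehat{V}_{T, L}^{\prime \, +}(0.5, \rho_{\thr}(u_{\cut} + 1)) \leq \widehat{V}_{T, L}^{\prime \, +}(0.5, \rho_{\thr}(u_{\cut}))$, as claimed.

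The argument has no genuine obstacle once Lemma~\ref{Lemma: V_T,L monotonically increasing in rho} is available, since the substantive content — the group nesting and the consequent monotonicity of $\Psi_{t, L}^{+}$ in $\rho_{\thr}$ — is established there. The only point requiring care is the bookkeeping of signs: one must verify that the quantities multiplying and dividing the penalty factors, namely $\Phi_{t, L}(q) \geq 0$ and $L - (t - 1) > 0$, keep their sign throughout, so that the direction of the inequality is maintained at every step.
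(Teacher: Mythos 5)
Your proof is correct and follows essentially the same route as the paper's: both reduce the claim to the penalty-factor monotonicity $\Psi_{t, L}^{+}(q, \rho_{\thr}(u_{\cut} + 1)) \geq \Psi_{t, L}^{+}(q, \rho_{\thr}(u_{\cut}))$ established within the proof of Lemma~\ref{Lemma: V_T,L monotonically increasing in rho}, and then compare the sums in~\eqref{eq: lower bound on V_hat_prime_T_L(0.5, rho_thr)} term by term. Your additional care about the signs of $\Phi_{t, L}(q)$ and $L - (t-1)$ is implicit in the paper's argument but makes the same step explicit.
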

\begin{proof}
Using Equation~\eqref{eq: V_T_L_prime}, we obtain
\begingroup
\allowdisplaybreaks
\begin{align}
& \widehat{V}_{T, L}^{\prime \, +}(0.5, \rho_{\thr}(u_{\cut} + 1))
\\[0.5em]
& \quad = \sum\limits_{t = 1}^{T} \dfrac{p - \sum_{q = 1}^{p} \Psi_{t, L}^{+}(q, \rho_{\thr}(u_{\cut} + 1)) \cdot \Phi_{t, L}(q)}{L - (t - 1)}
\\[0.5em]
& \quad \leq \sum\limits_{t = 1}^{T} \dfrac{p - \sum_{q = 1}^{p} \Psi_{t, L}^{+}(q, \rho_{\thr}(u_{\cut})) \cdot \Phi_{t, L}(q)}{L - (t - 1)}
\\[0.5em]
& \quad = \widehat{V}_{T, L}^{\prime \, +}(0.5, \rho_{\thr}(u_{\cut})),
\label{eq: proof - Lemma - V'_hat_T,L monotonically decreasing in rho - 3}
\end{align}
\endgroup
where the inequality in the third line follows from
\begin{equation}
\Psi_{t, L}^{+}(q, \rho_{\thr}(u_{\cut} + 1)) \geq \Psi_{t, L}^{+}(q, \rho_{\thr}(u_{\cut})),
\label{eq: proof - Lemma - V'_hat_T,L monotonically decreasing in rho - 4}
\end{equation}
which was shown to hold within the proof of Lemma~\ref{Lemma: V_T,L monotonically increasing in rho}.
\label{proof: Lemma - V'_hat_T,L monotonically decreasing in rho}
\end{proof}
%

\bibliographystyle{IEEEtran}
\bibliography{bibliography}

\typeout{get arXiv to do 4 passes: Label(s) may have changed. Rerun}

\end{document}


\title{Supplement to\\``High-Dimensional False Discovery Rate Control\\for Dependent Variables''}

\author{
Jasin Machkour, 
Michael Muma, 
and
Daniel P. Palomar
\thanks{J. Machkour,~Student Member,~IEEE, and M. Muma, Senior Member,~IEEE, are with the Robust Data Science Group at Technische Universit\"at Darmstadt, Germany (e-mail: jasin.machkour@tu-darmstadt.de; michael.muma@tu-darmstadt.de). D. P. Palomar,~Fellow,~IEEE, is with the Convex Optimization Group, The Hong Kong University of Science and Technology, Hong Kong SAR, China (e-mail: palomar@ust.hk).}
\thanks{J. Machkour is supported by the LOEWE initiative (Hesse, Germany) within the emergenCITY center. M. Muma is supported by the ERC Starting Grant ScReeningData (Project Number: 101042407). D.P. Palomar is supported by the Hong Kong GRF 16206123 research grant.}
\thanks{Extensive computations on the Lichtenberg High-Performance Computer of the Technische Universität Darmstadt were conducted for this research.}
}

\maketitle

\appendices
\setcounter{section}{1}

\section{Preliminaries for Theorem 1}
\label{sec: Preliminaries for Theorem 1}
\begin{lemma}
Let $\rho_{j, j^{\prime}} \coloneqq \x_{j}^{\top} \x_{j^{\prime}}$, $j, j^{\prime} \in \lbrace 1, \ldots, p \rbrace$, be the sample correlation coefficient of the standardized variables $j$ and $j^{\prime}$, $\boldsymbol{\hat{r}}_{k} \coloneqq \y - \XWK_{k} \hatbbeta_{k}$, and $\sign \big( \hat{\beta}_{j, k} \big)$, $\sign \big( \hat{\beta}_{j^{\prime}, k} \big)$ be the signs of the $j$th and $j^{\prime}$th \textit{Lasso} coefficient estimates of the $k$th random experiment, respectively. Suppose that $\hat{\beta}_{j, k}, \hat{\beta}_{j^{\prime}, k} \neq 0$. Then, it holds that
\begin{equation}
\bigg| \big| \sign \big( \hat{\beta}_{j, k} \big) \big| - \big| \sign \big( \hat{\beta}_{j^{\prime}, k} \big) \big| \bigg| \leq \dfrac{\| \boldsymbol{\hat{r}}_{k} \|_{2} \sqrt{2\big( 1 - \rho_{j, j^{\prime}} \big)}}{\lambda_{k}(T, L)}.
\label{eq: absolute sign difference}
\end{equation}
\label{lemma: absolute sign difference}
\end{lemma}
\begin{proof}
The \textit{Lasso} optimization problem is solved by the coefficient vector $\bbeta_{k} = \hatbbeta_{k}$ that minimizes the function
\begin{equation}
\mathcal{L} \big( \bbeta_{k}, \lambda_{k}(T, L) \big) \coloneqq \dfrac{1}{2} \| \y - \XWK_{k} \bbeta_{k} \|_{2}^{2} + \lambda_{k}(T, L) \| \bbeta_{k} \|_{1}.
\label{eq: Lasso Lagrangian}
\end{equation}
Taking the first derivative of \eqref{eq: Lasso Lagrangian} and setting it equal to zero yields
\begin{equation}
\dfrac{\partial \mathcal{L} \big( \bbeta_{k}, \lambda_{k}(T, L) \big)}{\partial  \bbeta_{k}} \Biggr|_{\bbeta_{k} = \hatbbeta_{k}} 
\hspace{-1em}= -\XWK_{k}^{\top} \boldsymbol{\hat{r}}_{k} + \lambda_{k}(T, L) \dfrac{\partial \| \hatbbeta_{k} \|_{1}}{\partial \hatbbeta_{k}}
\overset{!}{=} 0,
\label{eq: first derivative Lasso Lagrangian}
\end{equation}
which is a system of equations whose $j$th and $j^{\prime}$th equation are given by
\begin{equation}
-\x_{j}^{\top} \boldsymbol{\hat{r}}_{k} + \lambda_{k}(T, L) \cdot \sign \big( \hat{\beta}_{j, k} \big) = 0,
\label{eq: jth equation of system of equations}
\end{equation}
\begin{equation}
-\x_{j^{\prime}}^{\top} \boldsymbol{\hat{r}}_{k} + \lambda_{k}(T, L) \cdot \sign \big( \hat{\beta}_{j^{\prime}, k} \big) = 0.
\label{eq: mth equation of system of equations}
\end{equation}
Subtracting Equation~\eqref{eq: mth equation of system of equations} from Equation~\eqref{eq: jth equation of system of equations} and rearranging the resulting equation yields
\begin{equation}
\sign \big( \hat{\beta}_{j, k} \big) - \sign \big( \hat{\beta}_{j^{\prime}, k} \big) = \dfrac{1}{\lambda_{k}(T, L)} \big( \x_{j} - \x_{j^{\prime}} \big)^{\top} \boldsymbol{\hat{r}}_{k}.
\label{eq: jth minus mth equation}
\end{equation}
Now, we can rewrite the left-hand side of~\eqref{eq: absolute sign difference} as follows:
\begin{align}
\bigg| \big| \sign \big( \hat{\beta}_{j, k} \big) \big| &- \big| \sign \big( \hat{\beta}_{j^{\prime}, k} \big) \big| \bigg|
\\
&\leq \Big| \sign \big( \hat{\beta}_{j, k} \big) - \sign \big( \hat{\beta}_{j^{\prime}, k} \big) \bigg|
\\
&= \dfrac{1}{\lambda_{k}(T, L)} \Big| \big( \x_{j} - \x_{j^{\prime}} \big)^{\top} \boldsymbol{\hat{r}}_{k} \Big|
\\
&\leq \dfrac{1}{\lambda_{k}(T, L)} \| \x_{j} - \x_{j^{\prime}} \|_{2} \cdot \| \boldsymbol{\hat{r}}_{k} \|_{2}
\\
&= \dfrac{\| \boldsymbol{\hat{r}}_{k} \|_{2} \sqrt{2\big( 1 - \rho_{j, j^{\prime}} \big)}}{\lambda_{k}(T, L)}.
\label{eq: rewriting LHS of lemma: 1 - absolute sign difference}
\end{align}
The inequality in the second line follows from the reverse triangle inequality, the equation in the third line follows from \eqref{eq: jth minus mth equation}, the inequality in the fourth line follows from the Cauchy-Schwartz inequality, and the equation in the last line is a consequence of
\begingroup
\allowdisplaybreaks
\begin{align}
\| \x_{j} - \x_{j^{\prime}} \|_{2}^{2} 
&= (\x_{j} - \x_{j^{\prime}})^{\top} (\x_{j} - \x_{j^{\prime}})
\\
&= \| \x_{j} \|_{2}^{2} + \| \x_{j^{\prime}} \|_{2}^{2} - 2 \x_{j}^{\top} \x_{j^{\prime}}
\\
&= 2(1 - \rho_{j, j^{\prime}}),
\label{eq: l2-norm of x_j - x_m}
\end{align}
\endgroup
where the last line follows from the fact that the variables are standardized.
\label{proof: lemma: 1 - absolute sign difference}
\end{proof}

\section{Additional Simulations}
\label{appendix: Additional Simulations}
In this appendix, we consider two additional heavy-tailed simulation settings to complement the numerical experiments presented in Section~\ref{sec: Numerical Experiments}. These settings differ from the simulation setting in Section~\ref{sec: Numerical Experiments} as follows:
\begin{enumerate}
\item Heavy-tailed predictor matrix $\X$: The $p$-dimensional samples of the predictor matrix (i.e., rows of $\X$) are sampled from a zero-mean multivariate heavy-tailed Student-$t$ distribution with covariance matrix $\bSigma$ (with $\rho = 0.8$) and $3$ degrees of freedom,
\item Heavy-tailed noise vector $\bepsilon$: The noise vector $\bepsilon$ is sampled from a heavy-tailed Student-$t$ distribution with $3$ degrees of freedom.
\end{enumerate}
The additional simulation results in Figures~\ref{fig: sweep_plots_3_tFDR_20_n_150_rho_07_tDistrData}, \ref{fig: sweep_plots_3_tFDR_20_n_150_rho_07_tDistrNoise}, \ref{fig: sweep_plots_tFDR_tDistrData}, and~\ref{fig: sweep_plots_tFDR_tDistrNoise} verify the theoretical results in Section~\ref{sec: Methodology and Main Theoretical Results} and show that only the proposed \textit{T-Rex+DA} selector reliably controls the FDR in these heavy-tailed settings.
%
\begin{figure*}[!htbp]
  \centering
  \subfloat[]{
  		\scalebox{1}{
  			\includegraphics[width=0.37\linewidth]{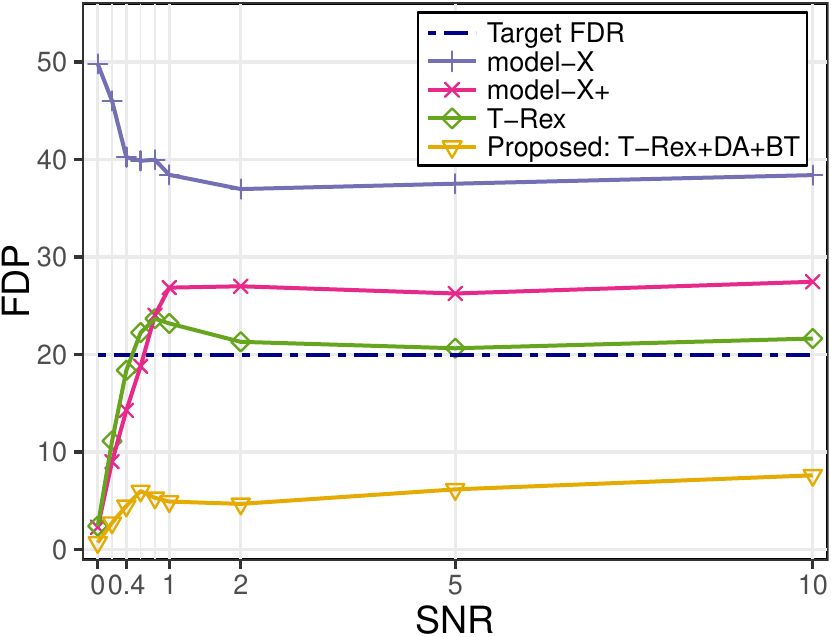}
  		}
   		\label{fig: FDP_vs_SNR_tFDR_20_n_150_rho_07_tDistrData}
   }
	\hspace*{2em}
  \subfloat[]{
  		\scalebox{1}{
  			\includegraphics[width=0.3765\linewidth]{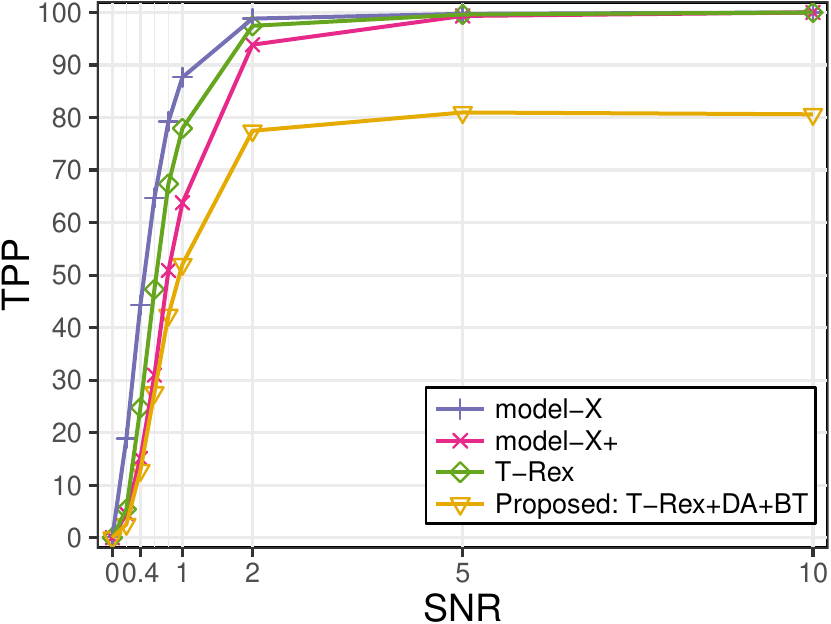}
  		}
   		\label{fig: TPP_vs_SNR_tFDR_20_n_150_rho_07_tDistrData}
   }
   %
   %
   %
   \\
   %
   %
   %
\vspace{-0.8em}
  \subfloat[]{
  		\scalebox{1}{
  			\includegraphics[width=0.37\linewidth]{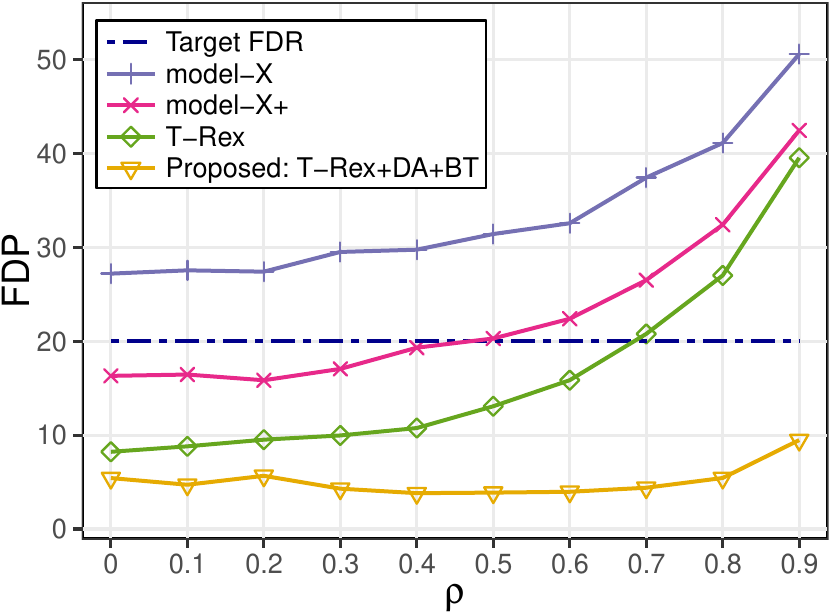}
  		}
   		\label{fig: FDP_vs_rho_tFDR_20_n_150_rho_07_tDistrData}
   }
	\hspace*{2em}
  \subfloat[]{
  		\scalebox{1}{
  			\includegraphics[width=0.3765\linewidth]{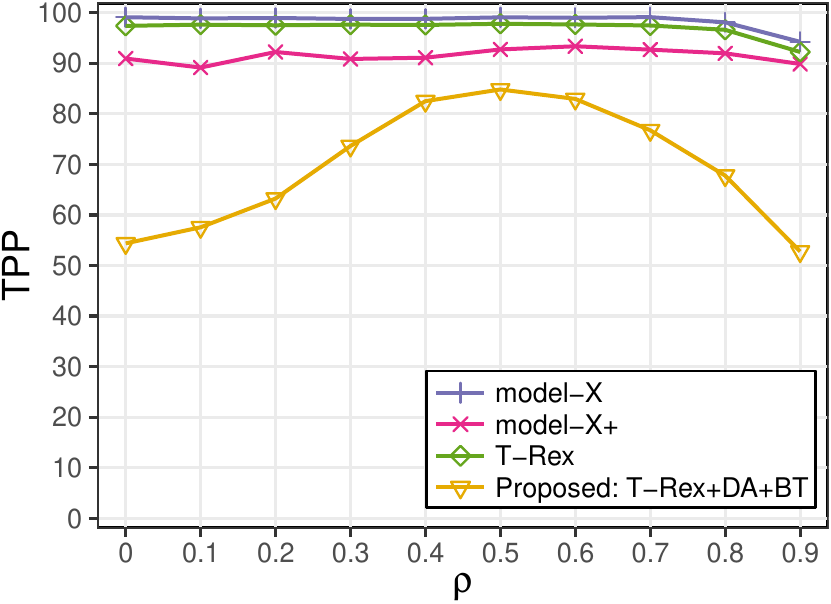}
  		}
   		\label{fig: TPP_vs_rho_tFDR_20_n_150_rho_07_tDistrData}
   }
   %
   %
   %
   \\
   %
   %
   %
\vspace{-0.8em}
  \subfloat[]{
  		\scalebox{1}{
  			\includegraphics[width=0.37\linewidth]{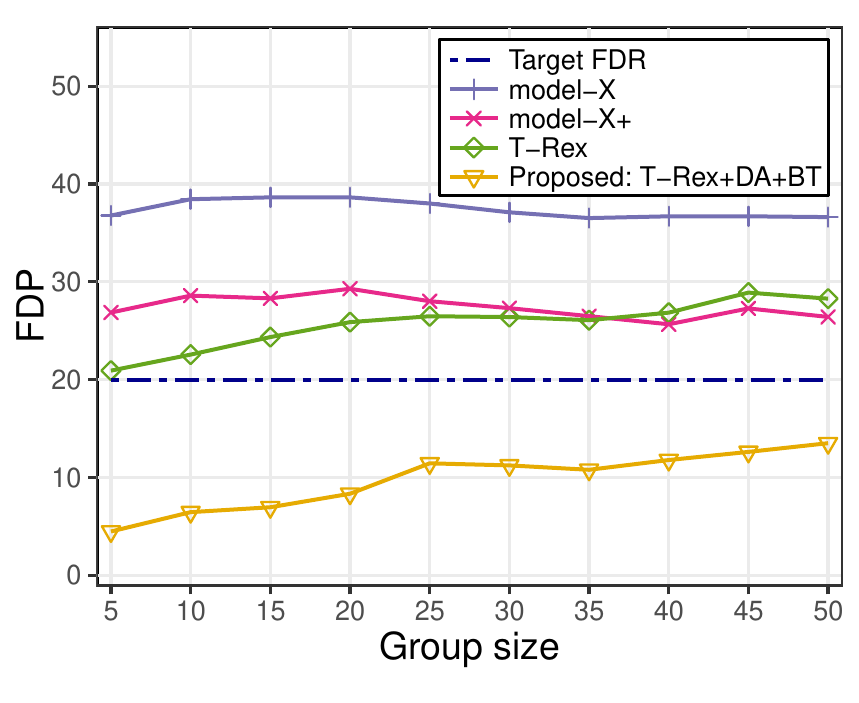}
  		}
   		\label{fig: FDP_vs_groupSize_tFDR_20_n_150_rho_07_tDistrData}
   }
	\hspace*{2em}
  \subfloat[]{
  		\scalebox{1}{
  			\includegraphics[width=0.3765\linewidth]{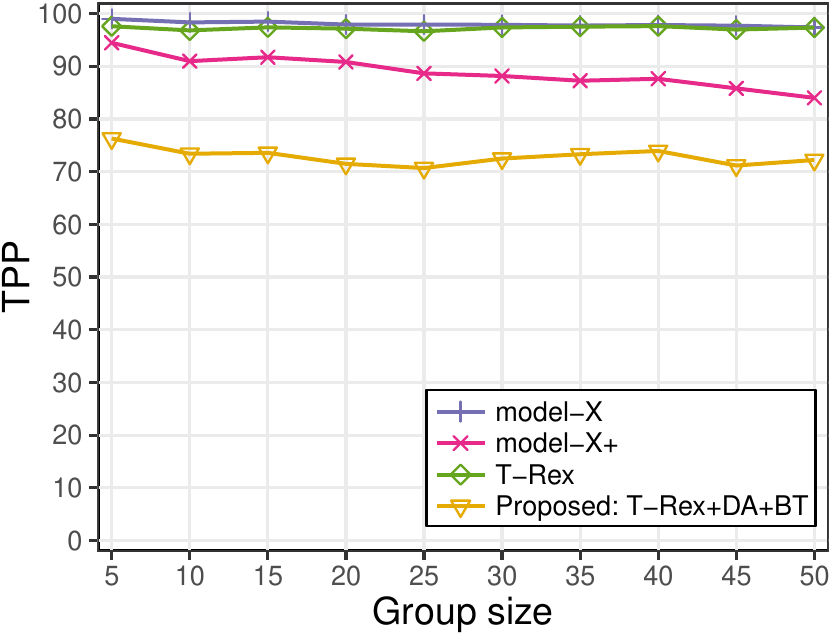}
  		}
   		\label{fig: TPP_vs_groupSize_tFDR_20_n_150_rho_07_tDistrData}
   }
   %
   %
   %
   \\
   %
   %
   %
\vspace{-0.8em}
  \subfloat[]{
  		\scalebox{1}{
  			\includegraphics[width=0.37\linewidth]{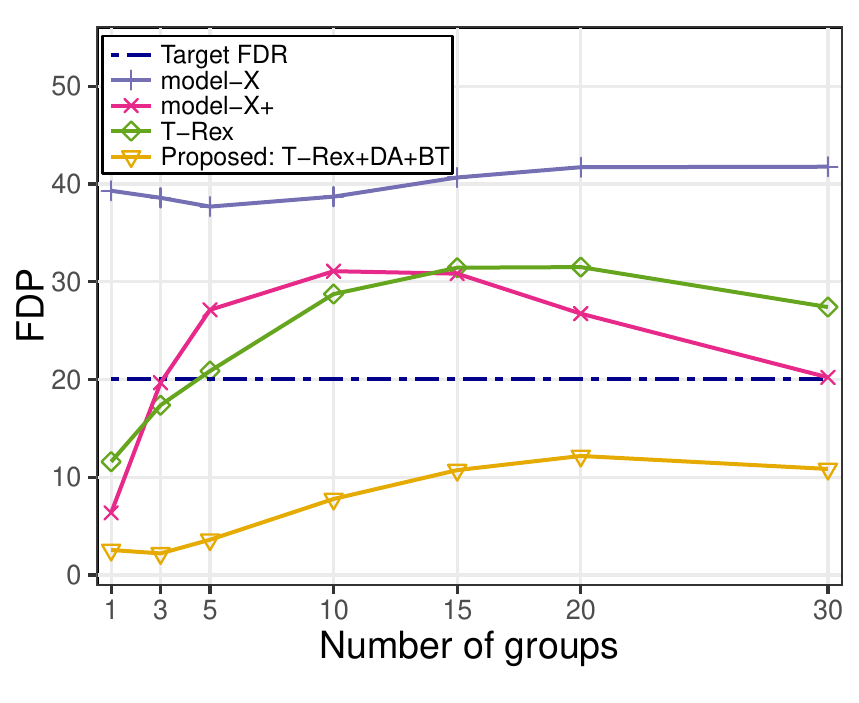}
  		}
   		\label{fig: FDP_vs_numGroups_tFDR_20_n_150_rho_07_tDistrData}
   }
	\hspace*{2em}
  \subfloat[]{
  		\scalebox{1}{
  			\includegraphics[width=0.3765\linewidth]{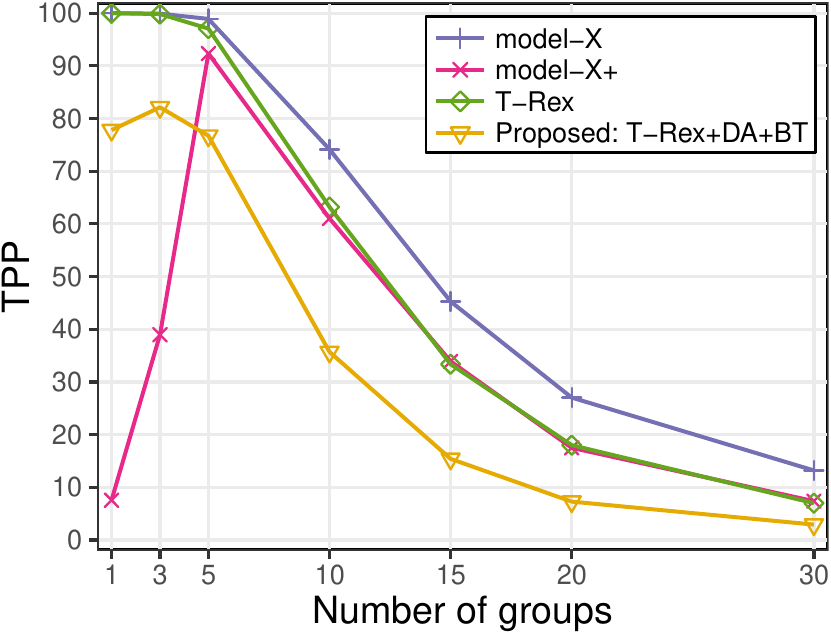}
  		}
   		\label{fig: TPP_vs_numGroups_tFDR_20_n_150_rho_07_tDistrData}
   }
  \caption{Heavy-tailed predictor matrix $\X$.
}
  \label{fig: sweep_plots_3_tFDR_20_n_150_rho_07_tDistrData}
\end{figure*}
%
\begin{figure*}[b]
  \centering
  \subfloat[]{
  		\scalebox{1}{
  			\includegraphics[width=0.37\linewidth]{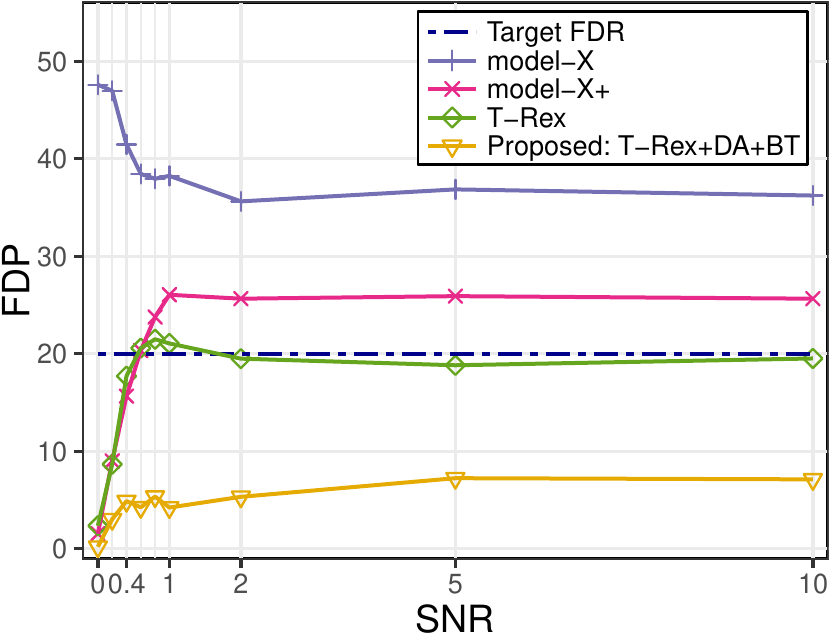}
  		}
   		\label{fig: FDP_vs_SNR_tFDR_20_n_150_rho_07_tDistrNoise}
   }
	\hspace*{2em}
  \subfloat[]{
  		\scalebox{1}{
  			\includegraphics[width=0.3765\linewidth]{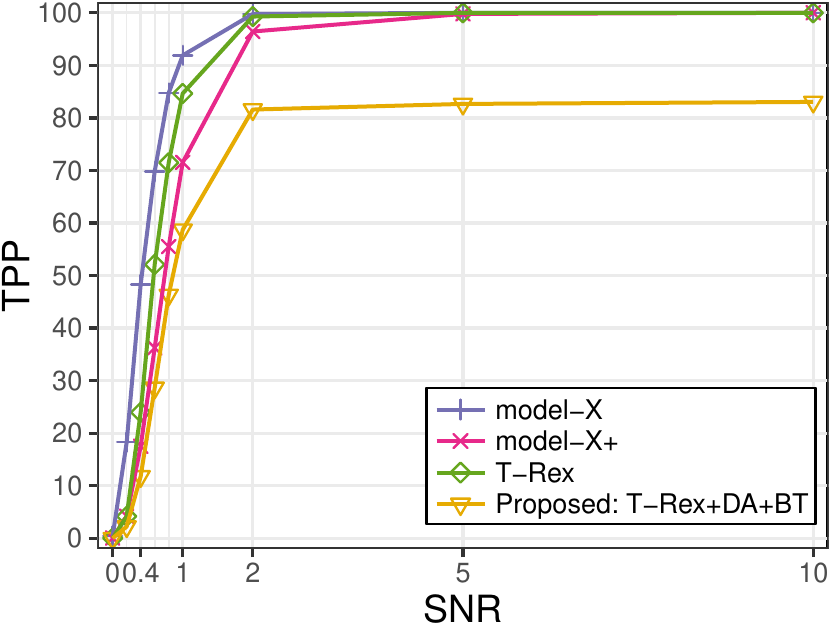}
  		}
   		\label{fig: TPP_vs_SNR_tFDR_20_n_150_rho_07_tDistrNoise}
   }
   %
   %
   %
   \\
   %
   %
   %
\vspace{-0.8em}
  \subfloat[]{
  		\scalebox{1}{
  			\includegraphics[width=0.37\linewidth]{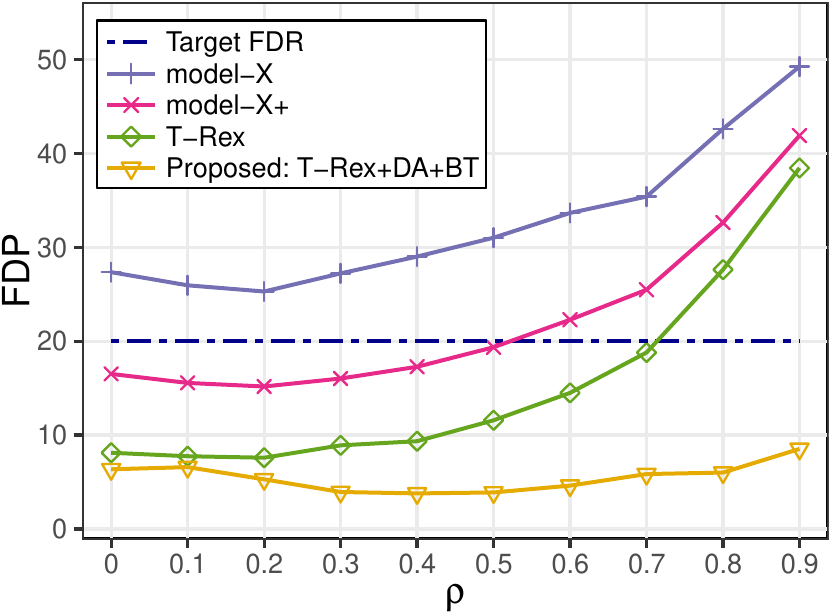}
  		}
   		\label{fig: FDP_vs_rho_tFDR_20_n_150_rho_07_tDistrNoise}
   }
	\hspace*{2em}
  \subfloat[]{
  		\scalebox{1}{
  			\includegraphics[width=0.3765\linewidth]{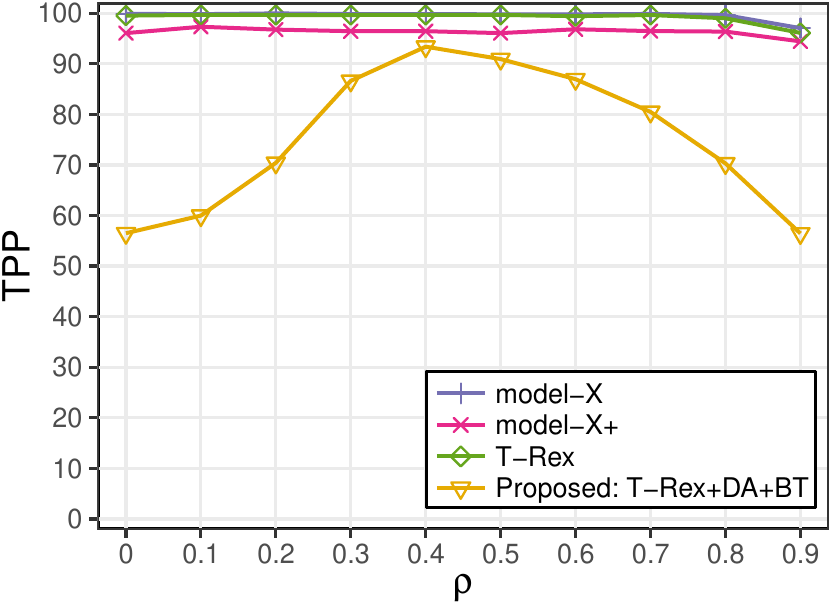}
  		}
   		\label{fig: TPP_vs_rho_tFDR_20_n_150_rho_07_tDistrNoise}
   }
   %
   %
   %
   \\
   %
   %
   %
\vspace{-0.8em}
  \subfloat[]{
  		\scalebox{1}{
  			\includegraphics[width=0.37\linewidth]{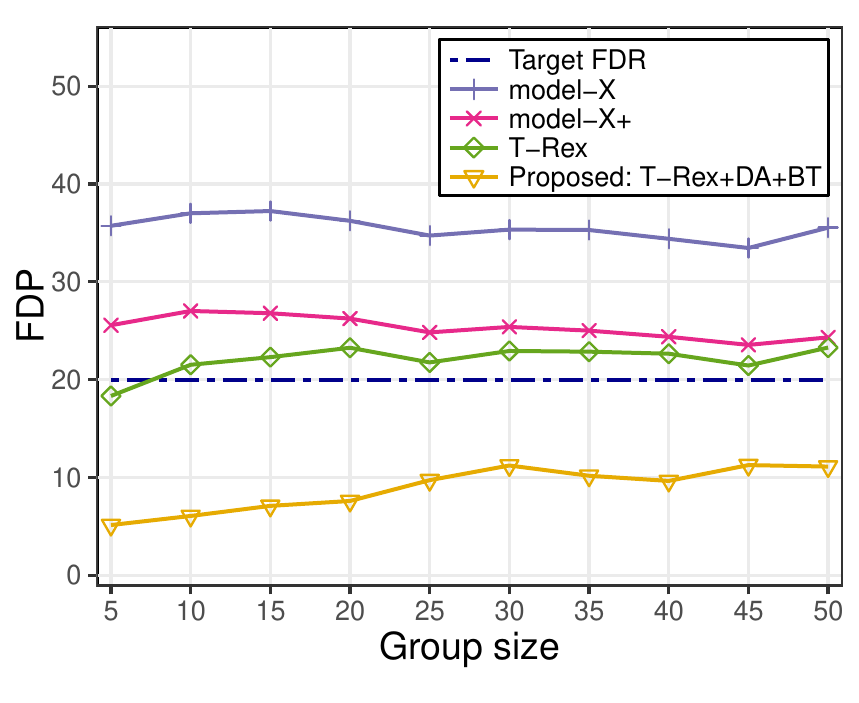}
  		}
   		\label{fig: FDP_vs_groupSize_tFDR_20_n_150_rho_07_tDistrNoise}
   }
	\hspace*{2em}
  \subfloat[]{
  		\scalebox{1}{
  			\includegraphics[width=0.3765\linewidth]{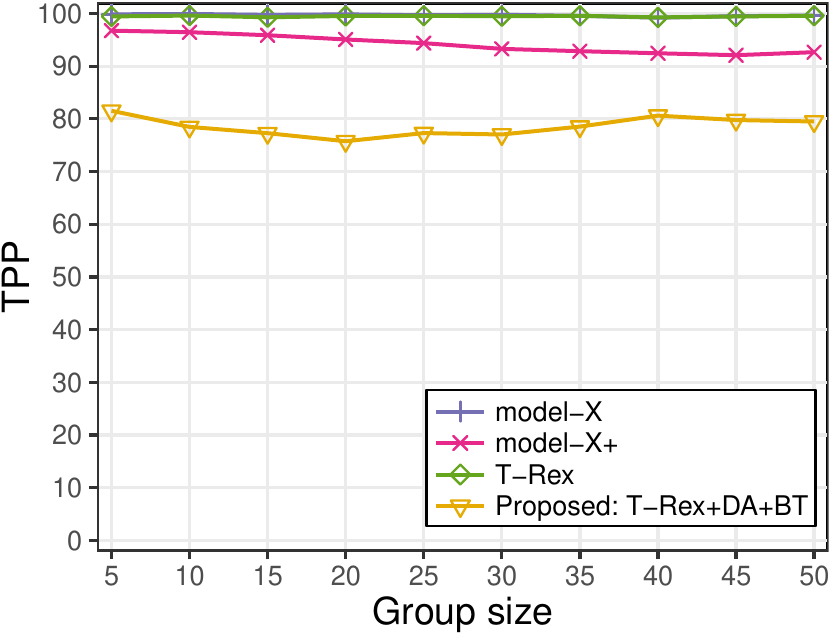}
  		}
   		\label{fig: TPP_vs_groupSize_tFDR_20_n_150_rho_07_tDistrNoise}
   }
   %
   %
   %
   \\
   %
   %
   %
\vspace{-0.8em}
  \subfloat[]{
  		\scalebox{1}{
  			\includegraphics[width=0.37\linewidth]{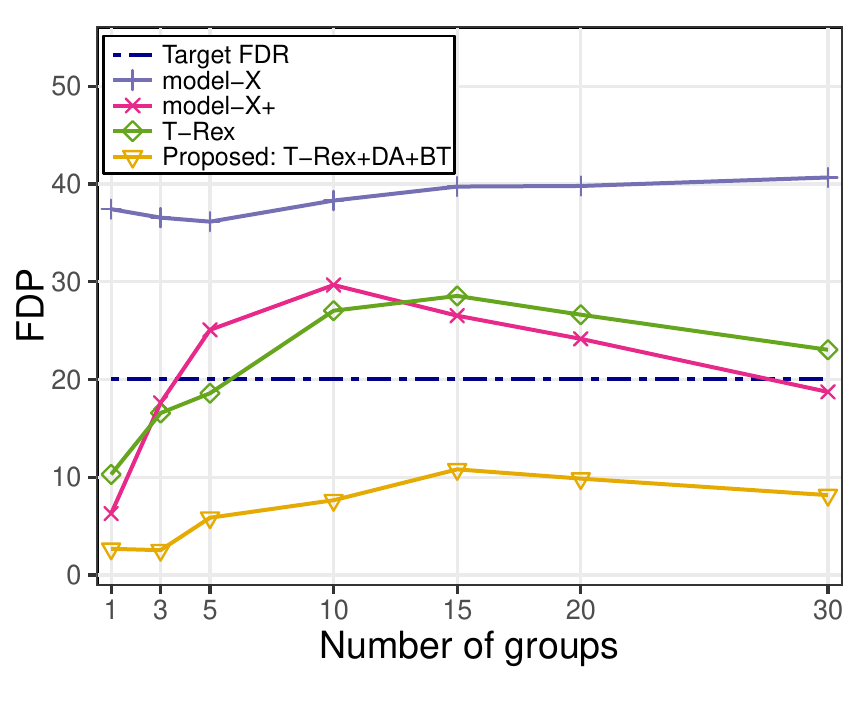}
  		}
   		\label{fig: FDP_vs_numGroups_tFDR_20_n_150_rho_07_tDistrNoise}
   }
	\hspace*{2em}
  \subfloat[]{
  		\scalebox{1}{
  			\includegraphics[width=0.3765\linewidth]{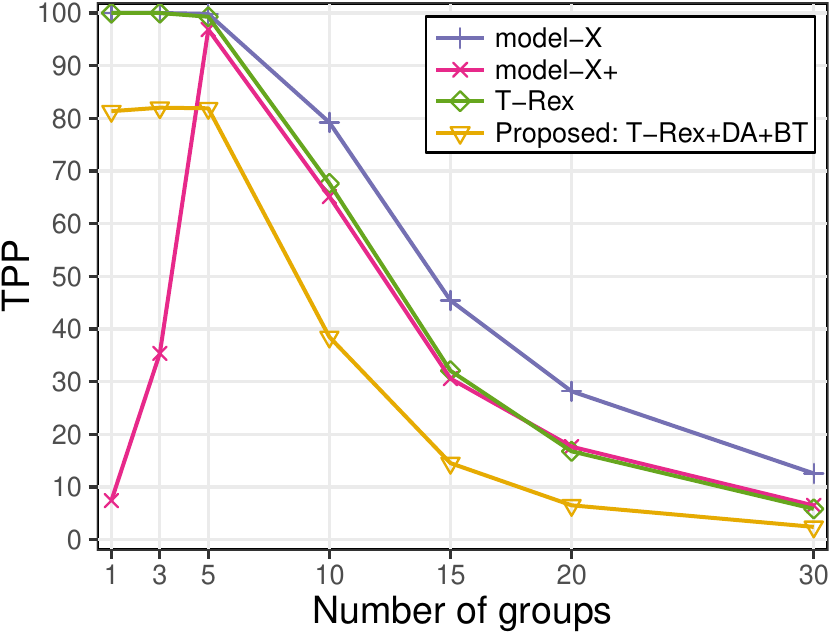}
  		}
   		\label{fig: TPP_vs_numGroups_tFDR_20_n_150_rho_07_tDistrNoise}
   }
  \caption{Heavy-tailed noise vector $\bepsilon$.
}
  \label{fig: sweep_plots_3_tFDR_20_n_150_rho_07_tDistrNoise}
\end{figure*}
%
\begin{figure*}[h]
  \centering
  \hspace*{-0.75em}
  \subfloat[]{
  		\scalebox{1}{
  			\includegraphics[width=0.37\linewidth]{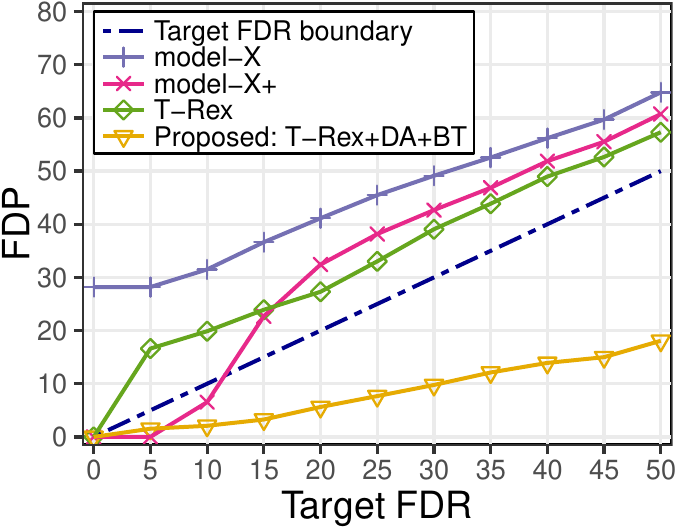}
  		}
   		\label{fig: FDP_vs_tFDR_n_150_p_500_rho_07_tDistrData}
   }
	\hspace*{-1.3em}
  \subfloat[]{
  		\scalebox{1}{
\includegraphics[width=0.37\linewidth]{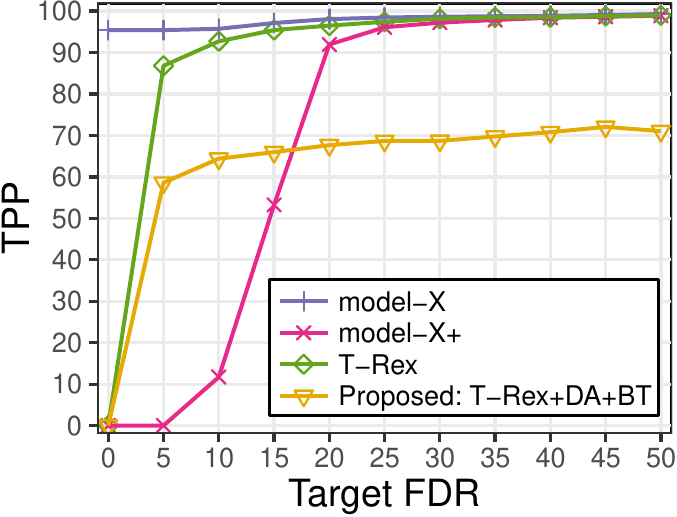}
  		}
   		\label{fig: TPP_vs_tFDR_n_150_p_500_rho_07_tDistrData}
   }
  \caption{Heavy-tailed predictor matrix $\X$.
  }
  \label{fig: sweep_plots_tFDR_tDistrData}
\end{figure*}
%
\begin{figure*}[h]
  \centering
  \hspace*{-0.75em}
  \subfloat[]{
  		\scalebox{1}{
  			\includegraphics[width=0.37\linewidth]{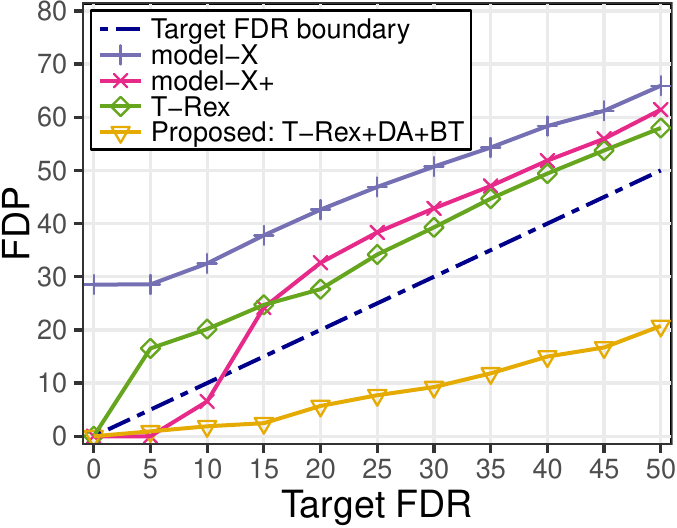}
  		}
   		\label{fig: FDP_vs_tFDR_n_150_p_500_rho_07_tDistrNoise}
   }
	\hspace*{-1.3em}
	\subfloat[]{
  		\scalebox{1}{
\includegraphics[width=0.37\linewidth]{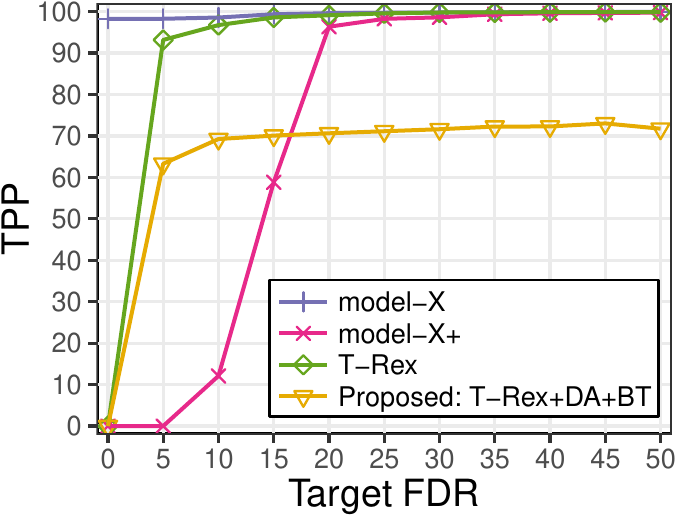}
  		}
   		\label{fig: TPP_vs_tFDR_n_150_p_500_rho_07_tDistrNoise}
   }
  \caption{Heavy-tailed noise vector $\bepsilon$.
  }
  \label{fig: sweep_plots_tFDR_tDistrNoise}
\end{figure*}
%
\pagebreak
\vfill

\typeout{get arXiv to do 4 passes: Label(s) may have changed. Rerun}